\spnewtheorem{examp}{Example}{\bfseries}{\normalfont}
\newcommand{\NN}{\mathbb{N}}
\newcommand{\mhml}{\text{\sc mHML}}
\newcommand{\shml}{\text{\sc sHML}}
\newcommand{\muhml}{\mu\text{HML}}
\newcommand{\chml}{\text{\sc cHML}}
\newcommand{\defeq}{\stackrel{\mathclap{def}}{=}}
\newcommand{\mycap}[1]{\text{\sc#1}}
\newcommand{\acc}{\text{\bf acc}}
\newcommand{\rej}{\text{\bf rej}}
\newcommand{\msf}[1]{\llparenthesis#1\rrparenthesis}
\newcommand{\rec}{\texttt{rec}}
\newcommand{\yes}{\texttt{yes}}
\newcommand{\no}{\texttt{no}}
\newcommand{\mend}{\texttt{end}}
\newcommand{\false}{\texttt{ff}}
\newcommand{\nil}{\texttt{nil}}
\newcommand{\eg}{{\textit{e.g.} }}
\newcommand{\aka}{{\textit{a.k.a.} }}
\newcommand{\trueset}[1]{{\left\llbracket #1 \right\rrbracket}}
\newcommand{\meq}{\sim}
\newcommand{\sys}{\mathit{SYS}}
\begin{document}

\author{Luca Aceto\inst{1} \and Antonis Achilleos\inst{1} \and Adrian Francalanza\inst{2} \and Anna Ing\'{o}lfsd\'{o}ttir\inst{1} \and S\ae{}var \"Orn Kjartansson\inst{1}}

\institute{
	School of Computer Science, Reykjavik University,
	Reykjavik, Iceland
\and 
Dept. of Computer Science, ICT, University of Malta, Msida, Malta  \\
}

\title{Determinizing Monitors for HML with Recursion\thanks{This research was supported by the project ``TheoFoMon: Theoretical Foundations for Monitorability'' (grant number: 163406-051) of the Icelandic Research Fund.}
}





\maketitle

\begin{abstract}
	We examine the determinization of monitors for HML with recursion. We demonstrate that every monitor is equivalent to a deterministic one, which is at most doubly exponential in size with respect to the original monitor. When monitors are described as CCS-like processes, this doubly exponential bound is optimal. When (deterministic) monitors are described as finite automata (as their LTS), then they can be exponentially more succinct than their CCS process form.
\end{abstract}

\section{Introduction}

\emph{Monitors} are computational entities that observe the executions of other computing entities (referred to hereafter  as \emph{systems}, or, in more formal settings, as \emph{processes}) with the aim of 
accruing  system
information \cite{KleinG15,HeCL13}, comparing system 
executions against behavioral specifications \cite{rvpaper}, or reacting to 
observed  executions via 
adaptation 
or 
enforcement procedures \cite{CassarF16,Ligatti05}.
Monitor descriptions can vary substantially, from pseudocode 
\cite{Geilen,Erling04:inlined}, to 
mathematical descriptions \cite{VardiW94,dAmorimR05,FalconeFM12}, to 
executable code  in a domain-specific or general-purpose language \cite{DeboisHS15,MJG+11mop}.     Because they are 
part of the trusted computing base, monitor descriptions are expected to be \emph{correct}, and a 
prevalent 
correctness requirement is that they should exhibit \emph{deterministic behavior}.  

Monitors are central for the field of Runtime Verification, where typically we use monitors to observe the trace produced during the run of a process. The monitor is expected to be able to reach a verdict after reading a finite part of the execution trace, if it can conclude that the monitored process violates (or, dually, satisfies) a certain specification property. Such specification properties are often expressed in an appropriate logical language, such as LTL \cite{pnueli1977temporal,truncated,goodbadugly,rvLTL,comparingLTL}, CTL, CTL$^*$ \cite{clarke1981design}, and $\muhml$ \cite{KOZEN1983333,AceILS:2007,Larsen90,rvpaper}.
For a brief overview of Runtime Verification, see \cite{Leu:RV:Overv}.

In \cite{rvpaper}, Francanalza et al.  studied the monitorability of properties expressed in full $\muhml$. They determined a maximal monitorable fragment of the logic, for which they introduced a monitor generating procedure.
The monitors constructed from this procedure can be nondeterministic, depending on the $\muhml$ formula from which they were constructed.
In the light of the importance of deterministic monitors in Runtime Verification,
we would like to be able to determinize these monitors.

In this paper we tackle the problem of determinizing monitors for $\muhml$ in the framework of \cite{rvpaper}, where monitors are described using syntax close to the regular fragment of CCS processes \cite{Milner:1982:CCS:539036}. 
We demonstrate that every monitor can be transformed into an equivalent deterministic one, but the price can be a hefty one: there are monitors which require a doubly exponential blowup in size to determinize.
 Although we focus on the monitors employed in [16], 
our methods and results can be 
extended to 
other cases of similar monitors, when these monitors are described using syntax that is close to that of the regular fragment of CCS processes. 
Furthermore, we demonstrate that finite automata, as a specification language, can be exponentially more succinct than monitors as we define them, exponentially more succinct  than the monitorable fragment of $\muhml$, and doubly exponentially more succinct than the deterministic monitorable fragment of $\muhml$.


The deterministic behavior of monitors is desirable as a correctness requirement and also due to efficiency concerns. A monitor is expected to not overly affect the observed system and thus each transition of the monitor should be performed as efficiently as possible. For each observed action of a system,  the required transition of a monitor is given explicitly by a deterministic monitor, but for a nondeterministic monitor, we need to keep track of all its possible configurations, which can introduce a significant overhead to the monitored system.

On the other hand, as we argue below, nondeterminism arises naturally in some scenarios in the behavior of monitors. In those cases, the most natural monitor synthesized from the properties of interest are nondeterministic and might need to be determinized in order to increase the efficiency of the monitoring process. This observation motivates our study of monitor determinization and its complexity.

\paragraph{Empirical Evidence for the Nondeterministic Behavior of Monitors:}
\label{sec:evid-from-test}

By most measures,  monitoring  is a relatively new software technique and thus not  very widespread.  Instead, we substantiate our  posit that nondeterministic monitors would occasionally occur (should the technique gain wider acceptance) by 
considering \emph{testing} as a 
related pervasive technique.  Despite intrinsic differences\footnote{Tests are executed pre-deployment, and employ more mechanisms to direct the  execution of the system under scrutiny 
	\eg mocking by inputting specific values.} 
tests share several core characteristics with many monitors: they rely on the execution of a system to infer correctness attributes, they often fall under the responsibility of quality assurance software teams  
and, more crucially, they are also expected to behave deterministically \cite{Zhang:2014,LuoHEM14}.     We 
contend that monitors are generally more complex artefacts than tests, which allows us to 
claim reliably that evidence of nondeterminism found in testing  carries over to monitoring.     When compared to tests, 
monitors typically employ more loop and branching constructs in their control structures  
since monitored runs last longer than test executions.
Moreover,  monitoring is  generally regarded as background processing, 
and this expected passivity forces monitoring code to cater for 
various eventualities, aggregating system execution cases that would otherwise be treated by separate (simpler) tests driving the system.  

In testing, the impact of nondeterministic (\aka \emph{flaky}) tests on software development processes is substantial enough (for instance, as reported in \cite{LuoHEM14}, about $4.56\%$ of test failures of the 
Test Anything
Protocol 
system at Google are caused by flaky tests) to warrant the consideration of various academic studies \cite{MemonC04,MarinescuHC14,LuoHEM14}.  These studies concede that flaky tests are hard to eradicate. Detection is hard, partly because tests are executed in conjunction with the system (where the source of nondeterministic behaviour is harder to delineate), or because nondeterminism occurs sporadically, triggered only by specific system runs.    Once nondeterminism has been detected, its causes may be even harder to diagnose: test runs are not immune to Heisenbugs \cite{Gray86} and problems associated with test dependence are not uncommon \cite{Zhang:2014}, all of which impacts on the capacity to fix the offending tests.   In fact, it is common practice to live with nondeterministic tests by annotating them (\eg \texttt{@RandomFail} in Jenkins), perhaps requiring reruns for these tests (\eg  \texttt{@Repeat} in Spring).  Curiously, studies have shown that  developers are often reluctant to remove detected flaky tests (for instance, by using the  \texttt{@Ignore} tag in JUnit) because they may still reveal  system defects (albeit inconsistently) \cite{LuoHEM14,MemonC04}.  

\paragraph{Overview:}

In Section \ref{sec:background}, we give the necessary background for our investigation. 
We introduce $\muhml$ formulas and our framework for processes, monitors, and finite automata.
In Section \ref{sec:rewriting}, we prove that all monitors for $\muhml$ can be determinized. We provide two methods for the determinization of monitors. 
One reduces the determinization of monitors to the determinization of regular CCS processes, as performed by Rabinovich in \cite{rabinovich}. 
The other directly applies a procedure for transforming systems of equations, inspired from Rabinovich's methods, directly on $\muhml$ formulas to turn them into a deterministic form. Then, using the monitor synthesis procedure from \cite{rvpaper}, one can construct monitors which are immediately deterministic.
In Section \ref{sec:bounds}, we examine the cost of determinizing monitors more closely and we compare the size and behavior of monitors to the size (number of states in this case) and behavior of corresponding finite automata. We examine the simpler case of single-verdict monitors,
namely monitors which are allowed to reach verdict \yes\ or verdict \no, but not both (or to halt without reaching an answer).
Section \ref{sec:quick_fix} explains how to extend the methods and bounds of Section \ref{sec:bounds} from single-verdict monitors to the general case of monitors with multiple verdicts.
The reader is encouraged to see Section \ref{sec:conclusions} for an extensive summary of the technical results in this paper.


%

\section{Background}
\label{sec:background}

We provide background on the main definitions and results for monitoring $\muhml$ formulae, as defined in \cite{rvpaper},  and present the conventions we use in this paper.

\subsection{Basic Definitions: Monitoring $\muhml$ Formulae on Processes}
\label{sec:def}

We begin by giving the basic definitions for the main concepts we use. These include the calculus used to model a system, the logic used to reason about the systems and finally the monitors used to verify whether a system satisfies some specific formula in the logic.

\subsubsection{The Model}

The processes whose properties we monitor (and on which we interpret the $\muhml$ formulae) are states of a labeled transition system (LTS).
%
    A 
    labeled transition system 
    is
    a triple
    \[
        \langle \mycap{Proc}, (\mycap{Act}\cup\{\tau\}),\rightarrow\rangle
    \]
    where $\mycap{Proc}$ is a set of states or processes, $\mycap{Act}$ is a set of observable actions, $\tau \notin \mycap{Act}$ is the distinguished silent action, and $\rightarrow\subseteq(\mycap{Proc}\times(\mycap{Act}\cup\{\tau\})\times \mycap{Proc})$ is a transition relation. 
The syntax of the processes in \mycap{Proc} is defined by the following grammar:
    \begin{align*}
        p,q\in \mycap{Proc} ::&= \texttt{nil}      &&|~~\alpha.p &&|~~p+q
                       &&|~~\texttt{rec} x.p  &&|~~x
    \end{align*}
    where $x$ comes from a countably infinite set of process variables. 
    These processes are a standard variation on the regular fragment of $CCS$ \cite{Milner:1982:CCS:539036};
    in \cite{rabinovich}, these processes are called $\mu$-expressions. We simply call them processes in this paper.
    As usual, the construct $\rec\ x.p$ binds the free occurrences of $x$ in $p$. 
    In what follows, unless stated otherwise, we focus on processes without the occurrences of free variables.
    The substitution operator $p[q/x]$ is defined in the usual way.
%
    The transition relation $\rightarrow$ and thus the behavior of the processes is defined by the  derivation rules in Table \ref{tab:ProcessRules}.
    \begin{table}
    	\begin{align*}
        \mycap{Act}&\frac{}{\alpha.p\xrightarrow{\alpha}p}                       &&&
        \mycap{Rec}&\frac{}{\texttt{rec} x. p\xrightarrow{\tau}p[\texttt{rec} x.p/x]} \\
        \mycap{SelL}&\frac{p\xrightarrow{\mu}p'}{p+q\xrightarrow{\mu}p'}   &&&
         \mycap{SelR}&\frac{q\xrightarrow{\mu}q'}{p+q\xrightarrow{\mu}q'} 
    \end{align*}
    where $\alpha\in\mycap{Act}$ and $\mu\in\mycap{Act}\cup\{\tau\}$.
    \caption{Dynamics of Processes}
    \label{tab:ProcessRules} 
    \end{table}

    For each $p,p'\in\mycap{Proc}$ and $\alpha\in\mycap{Act}$, we use $p\xRightarrow{\alpha}p'$ to mean that $p$ can derive $p'$ using a single $\alpha$ action and any number of $\tau$ actions, $p(\xrightarrow{\tau})^*.\xrightarrow{\alpha}.(\xrightarrow{\tau})^*p'$. For each $p,p'\in\mycap{Proc}$ and trace $t=\alpha_1.\alpha_2.\dots \alpha_r\in\mycap{Act}^*$, we use $p\xRightarrow{t}p'$ to mean $p\xRightarrow{\alpha_1}.\xRightarrow{\alpha_2}\dots\xRightarrow{\alpha_r}p'$ if $t$ is non-empty and $p(\xrightarrow{\tau})^*p'$ if $t$ is the empty trace $\epsilon$.


\subsubsection{The Logic}

We use $\muhml$, the Hennessy-Milner logic with recursion, to describe properties of the processes. 

\begin{definition}
    The formulae of $\muhml$ are  constructed using the following grammar:
    \begin{align*}
        \varphi,\psi\in\muhml         ::&=
        \texttt{tt}                     &&|~~\texttt{ff}        \\
        &|~~\varphi\land\psi            &&|~~\varphi\lor\psi    \\
        &|~~\langle\alpha\rangle\varphi &&|~~[\alpha]\varphi    \\
        &|~~\min X.\varphi              &&|~~\max X.\varphi     \\
        &|~~X                                                   
    \end{align*}
    where $X$ comes from a countably infinite set of logical variables $\mycap{LVar}$.
\end{definition}

    Formulae are evaluated in the context of a labeled transition system and an environment, $\rho:\mycap{LVar}\rightarrow 2^\mycap{Proc}$, which gives values to the logical variables in the formula.
    For an environment $\rho$, variable $X$, and set $S \subseteq \mycap{Proc}$, $\rho[X \mapsto S]$ is the environment which maps $X$ to $S$ and all $Y \neq X$ to $\rho(Y)$.
    The semantics for $\muhml$ formulae is given through a function $\llbracket \cdot \rrbracket$, which, given an environment $\rho$, maps each formula to a set of processes --- namely the processes which satisfy the formula under the assumption that each $X \in \mycap{LVar}$ is satisfied by the processes in $\rho(X)$. $\llbracket \cdot \rrbracket$ is defined as follows:
    \begin{align*}
        \llbracket\texttt{tt}, \rho\rrbracket&\defeq \mycap{Proc} \ \   \text{ and }   \ \ \llbracket\texttt{ff},\rho\rrbracket\defeq\emptyset \\
        \llbracket\varphi_1\land\varphi_2, \rho\rrbracket&\defeq\llbracket\varphi_1,\rho\rrbracket\cap\llbracket\varphi_2,\rho\rrbracket
        \\
        \llbracket\varphi_1\lor\varphi_2, \rho\rrbracket&\defeq\llbracket\varphi_1,\rho\rrbracket\cup\llbracket\varphi_2,\rho\rrbracket 
        \\
        \llbracket[\alpha]\varphi,\rho\rrbracket&\defeq\left\{p~\big|~ \forall q. \ p\xRightarrow{\alpha}q\text{ implies } q\in\llbracket\varphi,\rho\rrbracket\right\}
        \\
        \llbracket\langle\alpha\rangle\varphi,\rho\rrbracket&\defeq\left\{p~\big|~ \exists q. \ p\xRightarrow{\alpha}q\text{ and } q\in\llbracket\varphi,\rho\rrbracket\right\} \\
        \llbracket\max X.\varphi,\rho\rrbracket&\defeq\bigcup\left\{S~\big|~ S \subseteq \llbracket\varphi, \rho[X\mapsto S]\rrbracket\right\} \\
        \llbracket\min X.\varphi,\rho\rrbracket&\defeq\bigcap\left\{S~\big|~ S \supseteq \llbracket\varphi,\rho[X\mapsto S]\rrbracket\right\}
        \\ 
        \llbracket X, \rho\rrbracket&\defeq\rho(X) .
    \end{align*}
    A formula is closed when every occurrence of a variable $X$  is in the scope of recursive operator $\rec\ X$. Note that the environment, $\rho$, has no effect on the semantics of a closed formula. For a closed formula $\varphi$, we often drop the environment from the notation for $\trueset{\cdot}$ and write $\llbracket\varphi\rrbracket$ instead of $\llbracket\varphi,\rho\rrbracket$. Henceforth we work only with closed formulas, unless stated otherwise.
    Formulae $\varphi$ and $\psi$ are (logically) equivalent, written $\varphi \equiv \psi$, if $\trueset{\varphi,\rho} = \trueset{\psi,\rho}$ for every environment $\rho$.


We focus on  $\shml$, the safety fragment of $\muhml$. Both $\shml$ and its dual fragment, $\chml$, are defined by the grammar:
%
	\begin{align*}
	\theta,\vartheta\in\shml    &::= \texttt{tt}    &&|~\texttt{ff}    &&|~[\alpha]\theta              &&|~\theta\land\vartheta    &&|~\max X.\theta &|~X \\
	\pi,\varpi\in\chml          &::= \texttt{tt}    &&|~\texttt{ff}    &&|~\langle\alpha\rangle\pi     &&|~\pi\lor\varpi           &&|~\min X.\pi &|~X 
.
	\end{align*}

\begin{examp}[a formula]
	Formula $\varphi_e = \max\ X.[\alpha] ([\alpha] \false \wedge X) \in \shml$.
Notice that 
\begin{align*}
\varphi_e &\equiv \\
&\equiv 
[\alpha] ([\alpha] \false \wedge \max\ X.[\alpha] ([\alpha] \false \wedge X)) \\ 
&\equiv 
[\alpha] ([\alpha] \false \wedge [\alpha] ([\alpha] \false \wedge \max\ X.[\alpha] ([\alpha] \false \wedge X)))\\
&\equiv 
[\alpha] [\alpha] \false 
.
\end{align*}
\end{examp}

\subsubsection{Monitors}

We now define the notion of a monitor. Just like for processes, we use the definitions  given in \cite{rvpaper}. Monitors are part of an LTS, much like processes.

\begin{definition}
    The syntax of a monitor is identical to that of a  process,  with the exception that the \texttt{nil} process is replaced by verdicts. A verdict can be one of $\texttt{yes}$, $\texttt{no}$ and $\texttt{end}$, which represent acceptance, rejection and termination respectively. A monitor is defined by the following grammar:
    \begin{align*}
        m,n\in\mycap{Mon}   ::&= v                   && |~~\alpha.m      && |~~m+n           &&|~~\texttt{rec} x.m   &&|~~x  \\
        v\in\mycap{Verd}    ::&= \texttt{end}        && |~~\texttt{no}   && |~~\texttt{yes}                                  
    \end{align*}
    where $x$ comes from a countably infinite set of monitor variables.
\end{definition}

    The behavior of a monitor is defined by the  derivation rules of Table \ref{tab:Orules}.
    \begin{table}[h]
    \begin{align*}
        \mycap{mAct}&\frac{}{\alpha.m\xrightarrow{\alpha}m}                             &&&
        \mycap{mRec}&\frac{}{\texttt{rec} x.m\xrightarrow{\tau}m[\texttt{rec} x.m/x]}   \\
        \mycap{mSelL}&\frac{m\xrightarrow{\mu}m'}{m+n\xrightarrow{\mu}m'}               &&&
        \mycap{mSelR}&\frac{n\xrightarrow{\mu}n'}{m+n\xrightarrow{\mu}n'}               \\
        \mycap{mVerd}&\frac{}{v\xrightarrow{\alpha}v}
    \end{align*}
    where $\alpha\in\mycap{Act}$ and $\mu\in\mycap{Act}\cup\{\tau\}$.
    \caption{Monitor Dynamics}
    \label{tab:Orules}
    \end{table}

    For each $m,m'\in\mycap{Mon}$ and $\alpha\in\mycap{Act}$, we use $m\xRightarrow{\alpha}m'$ to mean that $m$ can derive $m'$ using a single $\alpha$ action and any number of $\tau$ actions, $m(\xrightarrow{\tau})^*.\xrightarrow{\alpha}.(\xrightarrow{\tau})^*m'$. For each $m,m'\in\mycap{Mon}$ and trace $t=\alpha_1.\alpha_2.\dots \alpha_r\in\mycap{Act}^*$, we use $m\xRightarrow{t}m'$ to mean $m\xRightarrow{\alpha_1}.\xRightarrow{\alpha_2}\dots\xRightarrow{\alpha_r}m'$ if $t$ is non-empty and $m(\xrightarrow{\tau})^*m'$ if $t$ is the empty trace.
    Rules \mycap{mSelL} and \mycap{mSelR} may be referred to collectively as \mycap{mSel} for convenience.

\begin{examp}[a monitor]
Let $m_e = \rec\ x. \alpha.(\alpha.\no + x)$. Notice the similarities between $m_e$ and $\varphi_e$.
We will be using $m_e$, $\varphi_e$, and related constructs as a running example.
\end{examp}

\subsubsection{Monitored system}
If a monitor $m\in\mycap{Mon}$ is monitoring a process $p\in\mycap{proc}$, then it must mirror every visible action $p$ performs. If $m$ cannot match an action performed by $p$ and it cannot perform and internal action, then $m$ becomes the inconclusive \texttt{end} verdict. We are only looking at the visible actions and so we allow $m$ and $p$ to perform transparent $\tau$ actions independently of each other.

\begin{definition}
    A monitored system is a monitor $m\in\mycap{Mon}$ and a process $p\in\mycap{Proc}$ which are run side-by-side, denoted $m\triangleleft p$. The behavior of a monitored system is defined by the  derivation rules in Table \ref{tab:synthesis}.
    \begin{table}[h]
    	\begin{align*}
        \mycap{iMon}&\frac{p\xrightarrow{\alpha}p'~~~m\xrightarrow{\alpha}m'}{m\triangleleft p\xrightarrow{\alpha}m'\triangleleft p'} &&&
        \mycap{iTer}&\frac{p\xrightarrow{\alpha}p'~~~m\not\xrightarrow{\alpha}~~~m\not\xrightarrow{\tau}}
                          {m\triangleleft p\xrightarrow{\alpha}\texttt{end}\triangleleft p'}                    \\
        \mycap{iAsyP}&\frac{p\xrightarrow{\tau}p'}{m\triangleleft p\xrightarrow{\tau}m\triangleleft p'}    &&&
        \mycap{iAsyM}&\frac{m\xrightarrow{\tau}m'}{m\triangleleft p\xrightarrow{\tau}m'\triangleleft p}    
    \end{align*}
    \caption{Monitored Processes}
    \label{tab:synthesis}
    \end{table}
\end{definition}

If a monitored system $m\triangleleft p$ can derive the \texttt{yes} verdict, we say that $m$ accepts $p$, and similarly $m$ rejects $p$ if the monitored system can derive \texttt{no}.
\begin{definition}[{\bf Acceptance/Rejection}]
    $\acc(m,p)\defeq\exists t,p'.\ m\triangleleft p\xRightarrow{t}\texttt{yes}\triangleleft p'$ and $\rej(m,p)\defeq\exists t,p'.\ m\triangleleft p\xRightarrow{t}\texttt{no}\triangleleft p'$.
\end{definition}

\subsubsection{Finite Automata}

We now present a brief overview of Finite Automata Theory, which we use in Section \ref{sec:bounds}. The interested reader can see \cite{sipser1997introduction} for further details.

A nondeterministic finite automaton (NFA) is a quintuple $A = (Q, \Sigma, \delta, q_0, F)$, where $Q \neq \emptyset$ is a set of states, $\Sigma$ is a set of symbols, called the alphabet --- in our context, $\Sigma = \mycap{Act}$ ---, $\delta \subseteq Q\times \Sigma \times Q$ is a transition relation, $q_0 \in Q$ is the initial state, and $F \subseteq Q$ is the set of final or accepting states. Given a word $t \in \Sigma^*$, a run of $A$ on $t = t_1 \cdots t_k$ is a sequence $q_0 q_1\cdots q_k$, such that for $1 \leq i \leq k$, $(q_{i-1},t_i,q_i) \in \delta$; the run is called accepting if $q_k \in F$. We say that $A$ accepts $t$ when $A$ has an accepting run on $t$. We say that $A$ accepts/recognizes a language $L \subseteq \Sigma^*$ when $A$ accepts exactly the words in $L$; $L$ is then unique and we call it $L(A)$. If $\delta$ is a function $\delta:Q\times \Sigma \rightarrow Q$ (depending on the situation, one may just demand that $\delta$ is a partial function), then $A$ is a deterministic finite automaton (DFA).
It is a classical result that for every NFA with $n$ states, there is an equivalent DFA (i.e. a DFA which recognizes the same language) with at most $2^n$ states --- furthermore, this upper bound is optimal.

\begin{theorem}[\cite{rabin1959finite}]
	If $A$ is an NFA of $n$ states, then there is a DFA of at most $2^n$ states which recognizes $L(A)$.
\end{theorem}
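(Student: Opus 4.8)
The plan is to apply the classical \emph{subset construction} (also known as the powerset construction). Starting from the NFA $A = (Q, \Sigma, \delta, q_0, F)$ with $|Q| = n$, I would build a DFA $A' = (Q', \Sigma, \delta', q_0', F')$ whose states are \emph{sets} of states of $A$, each recording precisely the collection of states in which $A$ could currently find itself. Concretely, I set $Q' = 2^Q$, so that $|Q'| = 2^n$, take $q_0' = \{q_0\}$ as the new initial state, and declare $F' = \{S \subseteq Q : S \cap F \neq \emptyset\}$ to be those subsets meeting the original accepting set. The transition function collects all successors: for $S \subseteq Q$ and $a \in \Sigma$, I put $\delta'(S, a) = \{q' \in Q : (q, a, q') \in \delta \text{ for some } q \in S\}$. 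Since $\delta'$ returns a single subset for each pair $(S, a)$, it is a genuine function, so $A'$ is deterministic by construction and has exactly $2^n$ states, which already yields the claimed bound.

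The heart of the argument is verifying that $A'$ recognizes the same language as $A$. To this end I would extend $\delta'$ to words in the usual way, writing $\hat{\delta'}(S, t)$ for the state reached from $S$ after reading $t \in \Sigma^*$, and then prove the key invariant by induction on the length of $t$: the subset $\hat{\delta'}(\{q_0\}, t)$ equals exactly the set of states $q$ for which $A$ has a run on $t$ ending in $q$. The base case $t = \epsilon$ is immediate, since both sides equal $\{q_0\}$. For the inductive step, writing $t = t' a$, the definition of $\delta'$ says the subset reached after $t$ is obtained from the subset reached after $t'$ by taking all $a$-successors; by the induction hypothesis the latter is precisely the set of endpoints of runs of $A$ on $t'$, and appending one further $a$-transition yields exactly the endpoints of runs of $A$ on $t$. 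This unwinding of a single transition against the definition of a run is the only place where care is needed, and it is the real content of the proof.

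With the invariant in hand the language equality is a short computation: $A'$ accepts $t$ precisely when $\hat{\delta'}(\{q_0\}, t) \in F'$, which by definition of $F'$ means the reached subset meets $F$, which by the invariant means some run of $A$ on $t$ ends in an accepting state --- that is, $A$ accepts $t$. Hence $L(A') = L(A)$, and $A'$ is the desired DFA with at most $2^n$ states. I do not expect any serious obstacle here; since the paper's NFA has no silent ($\epsilon$) transitions on $\Sigma$, no $\epsilon$-closure is required, and the inductive bookkeeping above is the only subtle point. The bound is automatic from $|2^Q| = 2^n$, and one could trim unreachable subsets to obtain fewer states in practice, though this is not needed for the stated upper bound.
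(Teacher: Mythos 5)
Your proposal is correct and is exactly the classical subset (powerset) construction that the paper's proof invokes, merely spelled out in full with the standard induction on word length. No differences in approach; the paper simply defers the details to a textbook reference.
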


\begin{proof}
	The way to construct such  a DFA is through the classical subset construction.
	For more details, see \cite{sipser1997introduction}, or another introductory text for automata theory or theoretical computer science.
\qed\end{proof}

\paragraph{Are monitors automata?}
The reader may wonder at this point whether a monitor is simply an NFA in disguise. Certainly, the LTS of  monitor resembles an NFA, although there are differences. A monitor --- at lest in this paper --- is identified with its syntactic form, as defined above, although, as Section \ref{sec:bounds} demonstrates, it pays to be a little more relaxed on this constraint. Monitors can also reach both a \yes\ and a \no\ verdict --- and sometimes both for the same trace. Again, this is often undesirable. Another difference is that once a monitor reaches a verdict, there is no way to change its decision by seeing more of the trace: verdicts are irrevocable for monitors. 
For more on the relationship between monitors and automata, the reader should read Section \ref{sec:bounds}.

\subsection{Previous Results}
\label{sec:prev}

The main result from \cite{rvpaper} is to define a subset of $\muhml$ which is monitorable and show that it is maximally expressive. This subset is called $\mhml$ and consists of the safe and co-safe syntactic subsets of $\muhml$: $\mhml\defeq\shml\cup\chml$.
From now on, we focus on \shml, but the case of \chml\ is dual. The interested reader can see \cite{rvpaper} for more details.
%

In order to prove that $\shml$ is monitorable, in \cite{rvpaper} Francalanza, Aceto, and Ing\'{o}lfsd\'{o}ttir define a monitor synthesis function, $\msf{-}$, which maps formulae to monitors, and show that for each $\varphi\in\mhml$, $\msf{\varphi}$ monitors for $\varphi$, in that $\rej(\msf{\varphi}, p)$ exactly for those  processes $p$ for which $p \notin \trueset{\varphi}$.
This function is used in the proofs in Section \ref{sec:rewriting} 
and so we  give the definition here.

\begin{definition}[\bf Monitor Synthesis]
    \label{def:msf}
    \begin{align*}
        \msf{\texttt{tt}}&\defeq\texttt{yes} & \msf{\texttt{ff}}&\defeq\texttt{no} & \msf{X}\defeq x 
    \end{align*}
    \begin{align*}
        \msf{[\alpha]\psi}&\defeq
        \begin{cases}
            \alpha.\msf{\psi}   &\text{if $\msf{\psi}\neq\texttt{yes}$} \\
            \texttt{yes}        &\text{otherwise}                       \\
        \end{cases}
        \\
        \msf{\psi_1\land\psi_2}&\defeq
        \begin{cases}
            \msf{\psi_1}                &\text{if $\msf{\psi_2}=\texttt{yes}$}  \\
            \msf{\psi_2}                &\text{if $\msf{\psi_1}=\texttt{yes}$}  \\
            \msf{\psi_1}+\msf{\psi_2}   &\text{otherwise}                       \\
        \end{cases}
        \\
        \msf{\max X.\psi}&\defeq
        \begin{cases}
            \texttt{rec} x.\msf{\psi}   &\text{if $\msf{\psi}\neq\texttt{yes}$} \\
            \texttt{yes}                &\text{otherwise}                       \\
        \end{cases}
    \end{align*}
\end{definition}

We say that a monitor $m$ monitors for a formula $\varphi \in \shml$ when for each process $p$, $\rej(m, p)$ if and only if $p \notin \trueset{\varphi}$.

\begin{theorem}[{\bf Monitorability}, \cite{rvpaper}]
    For each $\varphi\in\shml$,  $\msf{\varphi}$ monitors for $\varphi$.
\end{theorem}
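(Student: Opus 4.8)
The plan is to decouple the monitor from the process and then match, by induction on the formula, the traces on which $\msf{\varphi}$ reaches the $\no$ verdict against the processes that violate $\varphi$. The first step is a compositionality lemma stating that $\rej(\msf{\varphi},p)$ holds if and only if there is a trace $t$ with $\msf{\varphi}\xRightarrow{t}\no$ and $p\xRightarrow{t}p'$ for some $p'$. This follows from inspecting the rules in Table~\ref{tab:synthesis}: visible actions force synchronisation through \mycap{iMon}, the $\tau$-moves of the two components are independent, and, crucially, the verdict $\no$ is absorbing (by \mycap{mVerd}), so a rejecting computation factors into a common visible trace that drives the monitor into $\no$ and is executable by $p$. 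Note also that \mycap{iTer} can only ever produce the inconclusive verdict $\mend$, which is likewise absorbing and never leads to $\no$; hence a mismatch between $p$ and the monitor never creates a rejection. This reduces the theorem to the language-theoretic claim that $p\notin\trueset{\varphi}$ iff some trace of $p$ drives $\msf{\varphi}$ to $\no$.

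For that claim I would proceed by induction on the structure of $\varphi$. The cases $\true$ and $\false$ are immediate: $\msf{\true}=\yes$ rejects no trace while $\trueset{\true}=\mycap{Proc}$, whereas $\msf{\false}=\no$ rejects every process while $\trueset{\false}=\emptyset$. For $[\alpha]\theta$ with $\msf{\theta}\neq\yes$ we have $\msf{[\alpha]\theta}=\alpha.\msf{\theta}$, and $\alpha.\msf{\theta}$ reaches $\no$ along $\alpha t$ exactly when $\msf{\theta}$ reaches $\no$ along $t$; combined with the decoupling lemma and the fact that $p\notin\trueset{[\alpha]\theta}$ iff $p\xRightarrow{\alpha}q$ for some $q\notin\trueset{\theta}$, the induction hypothesis closes this case. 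For $\theta\wedge\vartheta$ (neither conjunct synthesising to $\yes$) we have $\msf{\theta\wedge\vartheta}=\msf{\theta}+\msf{\vartheta}$, and since the leading choice of a sum commits to one summand, $\rej(\msf{\theta}+\msf{\vartheta},p)$ holds iff $\rej(\msf{\theta},p)$ or $\rej(\msf{\vartheta},p)$, matching $\trueset{\theta\wedge\vartheta}=\trueset{\theta}\cap\trueset{\vartheta}$. In each of these cases the special $\yes$-clauses of Definition~\ref{def:msf} are harmless, since a subformula whose monitor is $\yes$ denotes all of $\mycap{Proc}$ and contributes nothing to rejection; I would record this as a small lemma stating that $\msf{-}$ commutes with substitution up to rejection-equivalence.

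The genuinely delicate case is $\max X.\theta$, where $\msf{\max X.\theta}=\rec x.\msf{\theta}$ and $\theta$ is open, so the induction on subformula structure does not apply directly. Here I would generalise the statement to open formulas, parameterising it by a closed formula substituted for $X$ (and the corresponding monitor substituted for $x$) for which correctness is already known. The matching between the greatest fixpoint and the recursive monitor rests on two finiteness facts. On the logical side, every operator of $\shml$ (namely $[\alpha]$, $\wedge$, and $\max$) is co-continuous, i.e. commutes with intersections of descending chains, so the greatest fixpoint is reached in $\omega$ steps: $\trueset{\max X.\theta}=\bigcap_{n}\trueset{\theta^{n}}$ with $\theta^{0}=\true$ and $\theta^{n+1}=\theta[\theta^{n}/X]$. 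On the monitor side, any rejecting computation of $\rec x.\msf{\theta}$ is finite and hence unfolds the recursion only finitely often, so it rejects iff one of the finite approximants $(\msf{\theta})^{(n)}$, the $n$-fold unfolding with base $\yes$, rejects. Since $\msf{\true}=\yes$ is the base approximant and synthesis commutes with substitution up to rejection-equivalence, $(\msf{\theta})^{(n)}$ matches $\msf{\theta^{n}}$, and an inner induction on $n$ (using the outer induction hypothesis for the single unfolding of $\theta$) equates $\rej(\rec x.\msf{\theta},p)$ with the statement $p\notin\trueset{\max X.\theta}$, using $\trueset{\max X.\theta}=\bigcap_{n}\trueset{\theta^{n}}$.

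The main obstacle is precisely this fixpoint case: reconciling the coinductive, greatest-fixpoint semantics of $\max$ with the finite, inductive nature of a rejecting run of the recursive monitor. The two ingredients that make it go through, co-continuity of the $\shml$ functionals (which caps the fixpoint approximation at $\omega$) and finiteness of any computation ending in a verdict (which caps the monitor unfolding), must be set up with care, together with the bookkeeping for open formulas and for the $\yes$-simplifications built into $\msf{-}$. The remaining cases are routine once the decoupling lemma is in place.
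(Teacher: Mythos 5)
This paper never proves the statement: it is imported verbatim from \cite{rvpaper}, so there is no in-paper argument to measure your proposal against. Taken on its own terms, your outline is sound and follows the standard route: a decoupling lemma showing that $\rej(\msf{\varphi},p)$ is equivalent to the existence of a finite trace $t$ with $\msf{\varphi}\xRightarrow{t}\no$ and $p\xRightarrow{t}p'$ (your observation that \mycap{iTer} only ever yields the absorbing, non-rejecting verdict \mend\ is exactly what makes the factorisation work), followed by a structural induction relating rejected traces to violating processes. The two delicate points you isolate are the right ones, and both can be discharged. First, the $\yes$-optimisations of Definition~\ref{def:msf} are harmless because $\msf{\psi}=\yes$ forces $\trueset{\psi,\rho}=\mycap{Proc}$ for every $\rho$; state this as its own small induction. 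Second, the $\max$ case needs both that a rejecting run of $\rec\ x.\msf{\theta}$ unfolds the recursion only finitely often (immediate, since monitors are sequential and a run to $\no$ is finite) and that $\trueset{\max X.\theta}=\bigcap_n\trueset{\theta^n}$. The latter is where the restriction to $\shml$ is genuinely used: $[\alpha]$ and $\wedge$ commute with arbitrary intersections (this would fail for $\langle\alpha\rangle$ over infinitely branching systems), but your one-line appeal to co-continuity should be expanded into a joint induction covering \emph{nested} fixed points, where the functional being iterated itself depends on the outer approximant. Finally, carry the open-formula generalisation through all free variables simultaneously, not just the single $X$ you mention, so that the inner induction on $n$ and the outer induction on formula structure compose cleanly. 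With those refinements the proposal is a correct proof.
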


\begin{examp}
	Notice that $\msf{\varphi_e} = m_e$. On the other hand, we also know that for $\varphi'_e = [\alpha][\alpha]\false$, $\varphi_e \equiv \varphi'_e$.
Therefore, for $m'_e = \msf{\varphi'_e} = \alpha.\alpha.\no$, $m_e$ and $m'_e$ monitor for the same formula, and therefore they are equivalent.
\end{examp}

\subsection{Determinism,  Verdicts, and the Choices That We Make}
\label{subsec:det_verd_choices}

The purpose of this paper is to examine the determinization of monitors, which is the process of constructing a deterministic monitor from an equivalent nondeterministic one. Therefore, we must know what a deterministic monitor is and what it  means that two monitors are equivalent.

Even before having defined what a deterministic monitor is, it is easy to find examples of nondeterministic ones. One can look in \cite{rvpaper} for several examples. An easy one is $m_c = \alpha.\yes + \alpha.\no$. This monitor can reach a positive \emph{and} a negative verdict  for each trace which starts with $\alpha$, but has no opinion on any other trace. There are two ways to amend this situation. One is to make sure that different actions can transition to different monitors, as in $\alpha.\yes + \beta.\no$, thus removing a nondeterministic choice; the other is to consider single-verdict monitors which can reach only one verdict, like $\alpha.\yes + \alpha.p$. We explore both approaches.

\subsubsection{Determinism}

For Turing machines, algorithms, and finite automata, determinism means that from every state (in our case, monitor) and input symbol (in our case, action), there is a unique transition to follow. In the case of monitors, we can transition either through an action from \mycap{Act}, but also through a $\tau$-action, which can occur without reading from a trace. In the context of finite automata, these actions would perhaps correspond to $\epsilon$-transitions, which are eliminated from deterministic automata. We cannot eliminate $\tau$-transitions from deterministic monitors, because we need to be able to activate the recursive operators. What we can do is to make sure that there is never a choice between a $\tau$-transition and any other transition.

In other words, in order for a monitor to run deterministically, it cannot contain a nondeterministic choice between two sub-monitors reached by the same action, $m\xrightarrow{\alpha}n_1$ and $m\xrightarrow{\alpha}n_2$, or a nondeterministic choice between performing an action or performing the transparent action, $m\xrightarrow{\alpha}n_1$ and $m\xrightarrow{\tau}n_2$. It must also be impossible to derive these choices using the derivation rules above.
As we can see from Table \ref{tab:Orules}, such choices can only be introduced by sums --- that is, monitors of the form $m_1+m_2$. Note that a sum $m_1+m_2+\cdots+m_k$ can also be written as $\sum_{i=1}^k m_i$.

\begin{definition}
	A monitor $m$ is deterministic iff
	every sum of at least two summands
		which appears in $m$ 
	is of the form 
	$\sum_{\alpha \in A} \alpha.m_\alpha$, 
	where $A \subseteq \mycap{Act}$.
\end{definition}
Note that under this definition, from every deterministic monitor $m$ and action $\alpha \in \mycap{Act}$, if  $m \xRightarrow{\alpha} m'$, then all derivations $m \xRightarrow{\alpha}$  begin with a unique transition.
As we will see in the following  sections, 
this set of monitors is in fact maximally expressive.

\begin{examp}
	Notice that monitor $m_e$ is not deterministic, but it has an equivalent deterministic monitor, which is $m'_e$.
\end{examp}

\begin{examp}[a simple server (from \cite{rvpaper})]
Assume that we have a very simple web server $p$ that alternates between accepting a request from a client, \texttt{req}, and sending a response back, \texttt{res}, until the server terminates, \texttt{cls}. We want to verify that the server cannot terminate while in the middle of serving a client (after executing \texttt{req} but before executing \texttt{res}). We can encode this property in the following $\shml$ formula
\[
\varphi = \max X.([\texttt{req}][\texttt{cls}]\texttt{ff} \land [\texttt{req}][\texttt{res}]X)
\]
We can then define a violation complete monitor for this formula using the monitor synthesis function:
\[
m = \texttt{rec} x.(\texttt{req.cls.no} + \texttt{req.res.}x)
\]
Since $m$ contains a choice between \texttt{req.cls.ff} and $\texttt{req.res.}x$, it is nondeterministic. However it is possible to define a deterministic monitor that monitors for $\varphi$. For example:
\[
m' = \texttt{req}.(\texttt{res.rec} x.\texttt{req}.(\texttt{res}.x + \texttt{cls.no}) + \texttt{cls.no})
\]
\end{examp}

In general, given a monitor produced by the monitor synthesis function, we can define a deterministic monitor that is equivalent to it.

\begin{theorem}
	For each formula $\varphi\in\mhml$, there exists a deterministic monitor, $m\in\mycap{Mon}$, such that $m$ monitors for $\varphi$.
\end{theorem}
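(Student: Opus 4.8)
The plan is to reduce to $\shml$ (the $\chml$ case being dual, swapping $\yes$ with $\no$, $[\alpha]$ with $\langle\alpha\rangle$, $\land$ with $\lor$, and $\max$ with $\min$), and then, given $\varphi\in\shml$, to produce an \emph{equivalent} formula $\psi\equiv\varphi$ whose synthesised monitor $\msf{\psi}$ is syntactically deterministic. Once such a $\psi$ is in hand, the Monitorability theorem gives that $\msf{\psi}$ monitors for $\psi$, and since $\rej(\msf{\psi},p)$ depends only on $\trueset{\psi}=\trueset{\varphi}$, the monitor $\msf{\psi}$ also monitors for $\varphi$, which is exactly what we want. The point of departure is the observation that the \emph{only} source of nondeterminism in Definition \ref{def:msf} is the conjunction clause $\msf{\psi_1\land\psi_2}=\msf{\psi_1}+\msf{\psi_2}$: every other clause produces either a verdict, a single prefix $\alpha.m$, or a guarded recursion. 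Hence $\msf{\psi}$ is deterministic precisely when every conjunction surviving in $\psi$ combines boxed subformulas with \emph{pairwise distinct} leading actions, so that each induced sum has the required shape $\sum_{\alpha\in A}\alpha.m_\alpha$.

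First I would isolate the algebraic laws that let me reshape conjunctions. The crucial one is the box-distribution law $[\alpha]\theta_1\land[\alpha]\theta_2\equiv[\alpha](\theta_1\land\theta_2)$, which merges two conjuncts sharing a leading action into a single boxed conjunct; together with the absorption laws $\theta\land\true\equiv\theta$ and $\false\land\theta\equiv\false$, associativity and commutativity of $\land$, and the unfolding law $\max X.\theta\equiv\theta[\max X.\theta/X]$, these suffice to rewrite any closed $\shml$ formula into a \emph{head normal form}: either $\false$, or $\true$, or a conjunction $\bigwedge_{\alpha\in A}[\alpha]\theta_\alpha$ over a finite set $A\subseteq\mycap{Act}$ of \emph{distinct} actions. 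Assuming, as usual, that recursion is guarded, each variable is exposed only beneath a box after unfolding, so head normalisation does not diverge.

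The heart of the construction is then a subset construction carried out on formulas rather than on automaton states. I would take the finite set of subformulas of $\varphi$, closed under one-step unfolding of each $\max$, as a pool of ``atomic'' states, and identify a determinized state with a finite \emph{set} $S$ of such subformulas, read conjunctively as $\bigwedge S$. For each action $\alpha$ I define the derivative $\partial_\alpha S$ to be the set of bodies $\theta$ for which $[\alpha]\theta$ appears as a top-level conjunct in the head normal form of $\bigwedge S$; the box-distribution law guarantees $\bigwedge S\equiv\bigwedge_{\alpha\in A}[\alpha]\bigl(\bigwedge\partial_\alpha S\bigr)$, with $\bigwedge S\equiv\false$ exactly when $\false\in S$. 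Because every $\partial_\alpha S$ is again a subset of the fixed finite subformula pool, only finitely many states are reachable from $\{\varphi\}$ --- at most exponentially many, just as in the classical NFA subset construction. Introducing a fresh variable $X_S$ for each reachable state and the greatest-fixed-point equations $X_S\maxeq\bigwedge_{\alpha\in A}[\alpha]X_{\partial_\alpha S}$ (and $X_S\maxeq\false$ when $\false\in S$) yields a guarded system whose solution is a single closed $\shml$ formula $\psi$ in head normal form at every level; by construction $\msf{\psi}$ contains only sums of the form $\sum_{\alpha\in A}\alpha.(\cdots)$ and is therefore deterministic.

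The step I expect to be the main obstacle is proving that this subset construction is \emph{semantics-preserving}, i.e. $\trueset{\psi}=\trueset{\varphi}$. Establishing that $\partial_\alpha$ faithfully mirrors the satisfaction clause for $[\alpha]$ is routine, but the interaction of conjunction with the greatest fixed points is delicate: the determinized formula $\psi$ merges, inside a single box, bodies that originally lived under \emph{different} recursion binders, so I must argue that taking the greatest fixed point of the merged system computes exactly the intersection of the component meanings. I would prove both inclusions coinductively --- exhibiting $\trueset{\varphi}$ as a post-fixed point of the functional defining $\psi$ for one direction, and conversely showing that the determinized solution respects each original conjunct --- and combine this with the finiteness argument above to conclude that $\psi$ is a well-defined closed formula equivalent to $\varphi$ with a deterministic synthesised monitor.
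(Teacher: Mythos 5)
Your proposal is correct and follows essentially the same route as the paper's second proof (Subsection \ref{sect:form_rewrite}): there the subset construction you describe is carried out on a system of equations in standard form, with a variable $X_Q$ introduced for each set $Q$ of original variables, the box-distribution law used to merge conjuncts sharing a leading action, and the semantic preservation of the merged greatest fixed points --- the step you flag as delicate --- being exactly the content of the proof of Lemma \ref{lemma:det_system}. The only cosmetic difference is that you phrase the construction via formula derivatives and head normal forms rather than via explicit equation systems in standard form.
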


Section \ref{sec:rewriting} is devoted to providing two proofs for this theorem. The first one is presented in Subsection \ref{sect:mon_rewrite} and shows that for each valid monitor, there exists a deterministic monitor that is equivalent to it. The second proof is presented in Subsection \ref{sect:form_rewrite} and shows that there is a subset of $\shml$ for which the monitor synthesis function will always produce a deterministic monitor and that this subset is a maximally expressive subset of $\shml$. The first proof directly depends on Rabinovich's determinization of processes in \cite{rabinovich}, while the second one is inspired by Rabinovich's construction.

\subsubsection{Multiple Verdicts} 

As Francalanza et al. demonstrated in \cite{rvpaper}, to monitor for formulae of $\muhml$, it is enough (and makes more sense) to consider single-verdict monitors. These are monitors, which can use verdict \yes\ or \no, but not both. As we are interested in determinizing monitors to use them for monitoring $\mhml$ properties, confining ourselves to single-verdict monitors is reasonable. 
The constructions of Section \ref{sec:rewriting} are independent of such a choice and are presented for all kinds of monitors, but in Section \ref{sec:bounds} it is more convenient 
 to consider only single-verdict monitors.
Of course, monitor determinization is an interesting problem in its own right and monitors may be useful in other situations besides just monitoring for \mhml. 
We still confine ourselves to determinizing single-verdict monitors, but we present a straightforward approach for dealing with monitors which use both verdicts in Section \ref{sec:quick_fix}. 
In Section \ref{sec:bounds}, whenever we say monitor, we will mean single-verdict monitor. Specifically, 
in Section \ref{sec:bounds}, we assume that the verdict is \yes\ (and \mend, which is often omitted), because we compare monitors to automata and a monitor reaching a \yes\ verdict on a trace intuitively corresponds to an automaton accepting the trace.

\subsubsection{Other Conventions and Definitions}

We will call two monitors, $p$ and $q$ equivalent and write $p \meq q$ if for every trace $t$ and value $v$, $p \xRightarrow{t} v$ iff $q \xRightarrow{t} v$.
Given a monitor, the set of processes it accepts, and thus the set of formulae it monitors, is completely determined by the traces it accepts and rejects (see also \cite{rvpaper}, Proposition 1). Therefore, we compare two monitors with respect to relation $\sim$. 
We assume that the set of actions, \mycap{Act}, is finite and, specifically, of constant size.

We call derivations of the form $p \Rightarrow p$ trivial. If $t = t_1 t_2 \in \mycap{Act}^*$, then $t_1$ is an initial subtrace or prefix of $t$ and we write $t_1 \sqsubseteq t$.
We define a sum of $q$ as follows: $q$ is a sum of $q$ and if $r$ is a sum of $q$, then so are $r+r'$ and $r'+r$. 
We say that a sum $s$ of $q$ is acting as $q$ in a derivation $d: s \xRightarrow{t} r$ if in the same derivation we can replace $s$ by $q$ without consequence (i.e. the first step uses rule \mycap{mSlet} to use a derivation starting from $q$).

The size $|p|$ of a monitor $p$ is the size of its syntactic description, as given in Section \ref{sec:def}, defined recursively: 
$|x| = |v| = 1$; $|a.p| = |p| + 1$; $|p+q| = |p| + |q| + 1$; and $|\rec\ x.p| = |p| +1$.
Notice that $|p|$ also coincides with the total number of submonitor occurrences --- 
namely, symbols in $p$.
We also define the height 
of a monitor, $h(p)$, in the following way: $h(v) = h(x) = 1$; $h(p+q) = \max\{h(p),h(q)\}$; $h(\alpha.p) = h(p) + 1$; $h(\rec\ x.p) = h(p)$. 

We assume that all sums of a verdict $v$ are $v$. Indeed, Lemma \ref{lem:no_sums_of_verdicts} below demonstrates that all instances of $p + v$ can be replaced by $p + \sum_{\alpha \in \mycap{Act}} \alpha.v$.

\begin{lemma}\label{lem:no_sums_of_verdicts}
	Monitor $p + v$ is equivalent to $p + \sum_{\alpha \in \mycap{Act}} \alpha.v$.
\end{lemma}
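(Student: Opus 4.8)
The plan is to show that the two monitors $p+v$ and $p+\sum_{\alpha\in\mycap{Act}}\alpha.v$ (write $n$ for the latter, and recall $v$ is a verdict) have \emph{exactly the same} set of outgoing transitions in the monitor LTS, and then to lift this one-step agreement to full trace-and-verdict behaviour. First I would compute the outgoing transitions of $p+v$. Since $v$ is a verdict, rule \mycap{mVerd} gives it precisely the transitions $v\xrightarrow{\alpha}v$ for every $\alpha\in\mycap{Act}$, and no $\tau$-transition (no other rule applies to a verdict). Hence, by \mycap{mSelL} and \mycap{mSelR}, the transitions of $p+v$ are exactly those inherited from $p$, namely $p+v\xrightarrow{\mu}p'$ whenever $p\xrightarrow{\mu}p'$ with $\mu\in\mycap{Act}\cup\{\tau\}$, together with $p+v\xrightarrow{\alpha}v$ for every $\alpha\in\mycap{Act}$.

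Next I would carry out the same computation for $n$. By \mycap{mAct} and \mycap{mSel}, the summand $\sum_{\alpha\in\mycap{Act}}\alpha.v$ has exactly the transitions $\sum_{\alpha\in\mycap{Act}}\alpha.v\xrightarrow{\beta}v$ for each $\beta\in\mycap{Act}$ and no $\tau$-transition (here the standing assumption that $\mycap{Act}$ is finite ensures this is a well-formed finite sum). Consequently the transitions of $n$ are again those inherited from $p$ together with $n\xrightarrow{\alpha}v$ for every $\alpha\in\mycap{Act}$. Thus I obtain the exact equality of transition sets $\{(\mu,r)\mid p+v\xrightarrow{\mu}r\}=\{(\mu,r)\mid n\xrightarrow{\mu}r\}=\{(\mu,p')\mid p\xrightarrow{\mu}p'\}\cup\{(\alpha,v)\mid\alpha\in\mycap{Act}\}$. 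The only point to watch here is that the $p$-summand and the verdict-summand may both offer an $\alpha$-transition for the same $\alpha$; this causes no trouble, since as a set of transitions these simply coincide on both sides.

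Finally I would transfer this to traces. Because every successor of $p+v$ is literally the same syntactic monitor as the corresponding successor of $n$, any derivation $p+v\xRightarrow{t}w$ into a verdict $w$ can be converted into a derivation $n\xRightarrow{t}w$ by replacing only its first transition $p+v\xrightarrow{\mu}r$ with the identical transition $n\xrightarrow{\mu}r$ and copying the remaining steps verbatim (they start from the common state $r$); the converse direction is symmetric. The empty trace taken in zero steps is the one degenerate case and needs a separate line: neither $p+v$ nor $n$ is itself a verdict, so neither reaches a verdict without moving, and every move is a $\tau$-step already covered above. Together these give $p+v\xRightarrow{t}w$ iff $n\xRightarrow{t}w$ for all $t$ and all verdicts $w$, which is exactly $p+v\meq n$.

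I expect the main obstacle to be purely one of bookkeeping rather than depth: making the successor-set computation genuinely exhaustive---in particular checking that both $v$ and $\sum_{\alpha\in\mycap{Act}}\alpha.v$ contribute no $\tau$-transition yet contribute an $\alpha$-transition to $v$ for every action---and phrasing the transfer step so that the zero-step empty-trace corner case is not silently dropped.
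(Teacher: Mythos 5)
Your proposal is correct and follows essentially the same route as the paper's own proof: both arguments establish that $p+v$ and $p+\sum_{\alpha\in\mycap{Act}}\alpha.v$ have exactly the same one-step transitions (using \mycap{mVerd}, \mycap{mAct} and \mycap{mSel}) and conclude equivalence from there. Your additional care about lifting the one-step agreement to full traces and about the empty-trace case is a harmless elaboration of what the paper leaves implicit.
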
 

\begin{proof}
	Let $\mu \in \mycap{Act} \cup \{\tau \}$ and $q$ a monitor. Then, $p + v \xrightarrow{\mu} q$ iff $p \xrightarrow{\mu} q$ or $v \xrightarrow{\mu} q$.
	But $v \xrightarrow{\mu} q$ exactly when $q = v$ and $\mu \in \mycap{Act}$, therefore 
	$v \xrightarrow{\mu} q$ exactly when $\sum_{\alpha \in \mycap{Act}} \alpha.v \xrightarrow{\mu} q$. In conclusion,  $p + v \xrightarrow{\mu} q$ iff $p + \sum_{\alpha \in \mycap{Act}} \alpha.v \xrightarrow{\mu} q$.
\qed\end{proof}

We also assume that there is no (sub)monitor of the form $\rec\ x.v$, where $v$ a verdict; otherwise, these can be replaced by simply $v$.  Notice then that there is no $\tau$-transition to a verdict. Furthermore, we assume that a verdict appears in all monitors we consider --- otherwise the monitor does not accept/reject anything and is equivalent to \mend. We say that a monitor $p$ accepts/recognizes a language (set of traces) $L \subseteq \mycap{Act}^*$ when for every trace $t \in \mycap{Act}^*$, $t \in L$ iff $p \xRightarrow{t} \yes$.

\section{Rewriting Methods for Determinization}
\label{sec:rewriting}

In this section we present two methods for constructing a deterministic monitor. The first method, presented in subsection \ref{sect:mon_rewrite}, uses a result by Rabinovich \cite{rabinovich} for the determinization of processes. 
The second method, presented in subsection \ref{sect:form_rewrite}, uses methods similar to Rabinovich's directly on formulas of \shml\ to transform them to a deterministic form, so that when we apply the monitor synthesis function from \cite{rvpaper} (Definition \ref{def:msf} of Section \ref{sec:def} in this paper), the result is a deterministic monitor for the original formula.

\subsection{Monitor Rewriting}
\label{sect:mon_rewrite}

We show that, for each monitor, there exists an equivalent monitor that is deterministic. Thus, given a formula $\varphi\in\mhml$, we can apply the monitor synthesis function to produce a monitor for $\varphi$ and subsequently rewrite the monitor so that it runs deterministically.

\begin{definition}
	For a process $p$, $Trace(p) = \{t \in \mycap{Act}^* \mid \exists q.\ p \xRightarrow{t} q \}$.
We call two processes $p,q$ trace equivalent when $Trace(p) = Trace(q)$.
\end{definition}

We use a result given by Rabinovich in \cite{rabinovich}, which states that each  process is 
equivalent to a deterministic one, with respect to trace equivalence --- Rabinovich's definition of determinism for processes coincides with ours when a process has no free variables.

\begin{definition}
    Let $D:\mycap{Proc}\rightarrow\mycap{Proc}$ be a mapping from a 
    process (a process, as defined in Subsection \ref{sec:def})
    to a trace equivalent deterministic 
    process.
    Such a mapping exists as shown by Rabinovich in \cite{rabinovich}.
\end{definition}

In order to apply the mapping $D$ to the monitors, we must define a way to encode them as 
processes.
We do this by replacing each verdict $v$ with a process $v.\texttt{nil}$, where $v$ is treated as an action.

\begin{definition}
    We define a mapping $\pi:\mycap{Mon}\rightarrow\mycap{Proc}$ as follows:
    \begin{align*}
        \pi(\alpha.m)               &= \alpha.\pi(m)                \\
        \pi(m + n)                  &= \pi(m) + \pi(n)              \\
        \pi(\texttt{rec} x.m)       &= \texttt{rec} x.\pi(m)        \\
        \pi(v)                      &= v.\texttt{nil}               \\
        \pi(x)                      &= x                            
    \end{align*}
    We can see that $\pi$ 
    maps
    monitors to
    processes where each verdict action $v$ must be followed by the \texttt{nil} process and the \texttt{nil} process must be prefixed by a verdict action. We define an ``inverse'' of $\pi$, which we call $\pi^{-1}$:
    $\pi^{-1}(s) = v$ when $s$ is a sum of some $v.p$ and in all other cases,
    \begin{align*}
        \pi^{-1}(\alpha.p)          &= \alpha.\pi^{-1}(p)           \\
        \pi^{-1}(p + q)             &= \pi^{-1}(p) + \pi^{-1}(q)    \\
        \pi^{-1}(\texttt{rec} x.p)  &= \texttt{rec} x.\pi^{-1}(p)   \\
        \pi^{-1}(x)                 &= x                            
    \end{align*}
\end{definition}

Now if two monitor encodings are trace equivalent, we want the monitors they encode to be equivalent. Remember that we have assumed that all sums of verdict $v$ are simply $v$ (see Lemma \ref{lem:no_sums_of_verdicts}).

\begin{lemma}\label{lem:substitute_v}
	Let $p$ be a monitor and $q = \pi(p)$.
	Then, $p \xRightarrow{t} v$ if and only if there is some $t' \sqsubseteq t$, such that $q \xRightarrow{t'} v.\texttt{nil}$.
\end{lemma}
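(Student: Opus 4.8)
The plan is to reduce the statement to a tight correspondence between the transitions of a monitor and those of its encoding under $\pi$, valid as long as no verdict has yet been reached, together with the observation that verdicts behave as absorbing states on the monitor side (by rule $\mycap{mVerd}$) while $v.\nil$ is essentially terminal on the process side.

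First I would prove a routine substitution lemma, $\pi(m[n/x]) = \pi(m)[\pi(n)/x]$, by structural induction on $m$; this is needed to handle unfoldings of $\rec$, since rule $\mycap{mRec}$ fires a $\tau$-transition to a substituted body. Next comes the key step: a transition correspondence for every monitor $m$ that is not itself a verdict, namely (a) if $m \xrightarrow{\mu} m'$ then $\pi(m) \xrightarrow{\mu} \pi(m')$, and (b) if $\pi(m) \xrightarrow{\mu} s$ then $s = \pi(m')$ for some $m'$ with $m \xrightarrow{\mu} m'$. Both directions go by induction on the derivation (resp. on the structure of $m$), with cases $\mycap{mAct}$, $\mycap{mRec}$ (using the substitution lemma), and $\mycap{mSel}$. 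The crucial point in the $\mycap{mSel}$ case --- and the main obstacle of the whole argument --- is that a summand might in principle be a verdict $v$, in which case the monitor step $v \xrightarrow{\alpha} v$ (rule $\mycap{mVerd}$) has no counterpart in $\pi(m)$, because $\pi(v) = v.\nil$ can only perform the verdict action $v$. This is exactly where I invoke the standing assumption, justified by Lemma \ref{lem:no_sums_of_verdicts}, that verdicts never occur as summands; one checks this property is preserved under the $\rec$-unfoldings produced along any derivation (using also that there are no submonitors $\rec x.v$), so every monitor reached from $p$ before a verdict appears is a non-verdict and has the correspondence available.

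With the correspondence in hand, the forward implication follows by taking a derivation $p = p_0 \xrightarrow{\mu_1} \cdots \xrightarrow{\mu_k} p_k = v$ witnessing $p \xRightarrow{t} v$ and letting $j$ be the least index with $p_j$ a verdict; since verdicts are absorbing (each performs only $v' \xrightarrow{\alpha} v'$ and has no outgoing $\tau$-step), this first verdict must already be $v$. Every $p_i$ with $i < j$ is a non-verdict, so applying (a) step by step yields $\pi(p) \xRightarrow{t'} \pi(p_j) = v.\nil$, where $t'$ is the visible trace of $\mu_1 \cdots \mu_j$; the remaining steps $\mu_{j+1}\cdots\mu_k$ are all $\mycap{mVerd}$-moves of $v$, so $t = t' t''$ and hence $t' \sqsubseteq t$.

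For the backward direction, suppose $\pi(p) \xRightarrow{t'} v.\nil$ with $t' \sqsubseteq t$. Along this process derivation no intermediate state can be of the form $v'.\nil$, since such a state can only perform the verdict action $v'$ (reaching $\nil$), whereas $t' \in \mycap{Act}^*$ and the run ends at $v.\nil \neq \nil$; hence every intermediate state is $\pi(m_i)$ for a non-verdict $m_i$, and repeated use of (b) lifts the run to $p = m_0 \xrightarrow{\mu_1}\cdots \xrightarrow{\mu_k} m_k$ with $\pi(m_k) = v.\nil$. Since $m = v$ is the unique monitor with $\pi(m) = v.\nil$, we get $m_k = v$, so $p \xRightarrow{t'} v$. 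Finally, writing $t = t' t''$ and using rule $\mycap{mVerd}$ to absorb $t''$ (i.e. $v \xRightarrow{t''} v$) gives $p \xRightarrow{t} v$, completing the proof.
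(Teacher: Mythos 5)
Your proof is correct and follows essentially the same route as the paper, which simply states that the result follows by ``straightforward induction on the length of the derivation'' in each direction; your transition correspondence (a)/(b) is precisely the inductive step of that argument, spelled out, and your appeal to the standing no-verdict-summand assumption (justified by Lemma~\ref{lem:no_sums_of_verdicts}) to rescue the \mycap{mSel} case matches the paper's explicit reminder of that assumption immediately before the lemma. No gaps.
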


\begin{proof}
	Straightforward induction on the length of the derivation $q \xRightarrow{t'} v.\texttt{nil}$ for the ``if'' direction and on the length of the derivation $p \xRightarrow{t} v$ for the ``only if'' direction.
\qed\end{proof}

\begin{lemma}\label{lem:substitute_v_nil}
	Let $q = \pi^{-1}(p)$.
	Then, for each $t \in \mycap{Act}^*$, there are some $t' \sqsubseteq t$ and a sum $s$ of some $v.r$, such that $p \xRightarrow{t'} s$, if and only if $q \xRightarrow{t} v$.
\end{lemma}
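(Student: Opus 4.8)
The plan is to prove this biconditional in the same way as Lemma~\ref{lem:substitute_v}, by induction on the length of the derivations, after first isolating how $\pi^{-1}$ interacts with single transitions. The guiding intuition is that $\pi^{-1}$ commutes with the transition relation everywhere except at a \emph{pre-verdict} process state, i.e.\ a state $s$ that is a sum of some $v.r$: there $\pi^{-1}(s)=v$, and the monitor verdict $v$ merely loops via rule \mycap{mVerd}, whereas the process $s$ may still move. Concretely, I would first record the following transition-correspondence property: if $s$ is \emph{not} a sum of any $v.r$, then for every $\mu\in\mycap{Act}\cup\{\tau\}$ we have $s\xrightarrow{\mu}s'$ iff $\pi^{-1}(s)\xrightarrow{\mu}\pi^{-1}(s')$, and every transition of $\pi^{-1}(s)$ arises in this way; whereas if $s$ is a sum of some $v.r$, then $\pi^{-1}(s)=v$ and $v\xrightarrow{\alpha}v$ for every $\alpha\in\mycap{Act}$. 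This is a case analysis following the clauses defining $\pi^{-1}$; the only clause needing care is the recursive one, for which I would need the substitution identity $\pi^{-1}\bigl(p_0[\texttt{rec}\,x.p_0/x]\bigr)=\pi^{-1}(p_0)[\texttt{rec}\,x.\pi^{-1}(p_0)/x]$, established by a routine structural induction (substituting the non-verdict term $\texttt{rec}\,x.\pi^{-1}(p_0)$ for $x$ introduces no new pre-verdict summands).

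For the ``only if'' direction, I would assume $q=\pi^{-1}(p)\xRightarrow{t}v$ and induct on the length of this derivation. Since the run ends in the verdict $v$ and verdicts only loop, there is a first moment at which $v$ is entered, after some prefix $t'\sqsubseteq t$ (possibly $t'=\epsilon$); up to that moment $q$ has not yet reached any verdict, so by the correspondence above every transition of $q$ is mirrored, step for step, by a transition of $p$ carrying the same label. The state at which $v$ is first produced is $\pi^{-1}(s)$ for the process state $s$ that $p$ has reached, and $\pi^{-1}(s)=v$ forces $s$ to be a sum of some $v.r$. Hence $p\xRightarrow{t'}s$ with $s$ a sum of $v.r$ and $t'\sqsubseteq t$, as required.

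For the ``if'' direction, suppose $p\xRightarrow{t'}s$ with $t'\sqsubseteq t$ and $s$ a sum of some $v.r$. Let $t_0\sqsubseteq t'$ be the shortest prefix after which the derivation first reaches a pre-verdict state $s_0$. All steps before $t_0$ are between non-pre-verdict states, so the correspondence yields $\pi^{-1}(p)\xRightarrow{t_0}\pi^{-1}(s_0)=v$. Writing $t=t_0\,t_1$ (possible since $t_0\sqsubseteq t'\sqsubseteq t$), rule \mycap{mVerd} gives $v\xRightarrow{t_1}v$, and composing the two derivations yields $q\xRightarrow{t}v$.

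The main obstacle is precisely the bookkeeping around the collapse at pre-verdict states, where $\pi^{-1}$ fails to preserve transitions: I cannot simply push $\pi^{-1}$ uniformly through the whole derivation, because the run $p\xRightarrow{t'}s$ may enter a pre-verdict state strictly before $t'$ (and $s$ may carry further summands beyond the distinguished $v.r$). The clean remedy, used above, is to always cut the derivation at the \emph{first} pre-verdict state, read off the verdict there, and then let rule \mycap{mVerd} absorb the remaining actions up to $t$ by looping; once the cut is made, the persistence of the verdict over the trailing trace does the rest. Pairing this cut with the substitution identity needed for the \texttt{rec} clause is the only genuinely nontrivial ingredient, everything else being a direct induction as in Lemma~\ref{lem:substitute_v}.
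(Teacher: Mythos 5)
Your overall strategy is the same as the paper's: the paper proves this lemma exactly as it proves Lemma~\ref{lem:substitute_v}, namely ``by induction on the length of the derivations'', and your transition-correspondence property, the substitution identity for the \texttt{rec} clause, and the cut at the first pre-verdict state are precisely the details such an induction has to supply. The direction from $q \xRightarrow{t} v$ to the existence of $t'$ and $s$ is sound (incidentally, your ``if''/``only if'' labels are swapped relative to the statement, but that is cosmetic).

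There is, however, a genuine gap in the other direction, at the step ``$\pi^{-1}(s_0)=v$''. The first pre-verdict state $s_0$ reached along the derivation $p \xRightarrow{t'} s$ need not be a sum of any $v.r$: it may be a sum of $v'.r'$ for a \emph{different} verdict $v'$, in which case $\pi^{-1}(s_0)=v'$, and since $v'$ only loops under \mycap{mVerd}, the monitor $q$ can never reach $v$ along that path. Concretely, take $p = \alpha.(\yes.\nil + \beta.\no.\nil)$; then $q=\pi^{-1}(p)=\alpha.\yes$, and with $t=t'=\alpha\beta$ and $v=\no$ the left-hand side of the lemma holds (via $p \xRightarrow{\alpha\beta} \no.\nil$) while $q \xRightarrow{\alpha\beta} \no$ fails. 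So the step does not merely lack justification --- it cannot be repaired without an extra hypothesis ruling out a pre-verdict state for one verdict occurring strictly before one for another verdict on the same path (for single-verdict processes, where only one verdict action occurs in $p$, the first pre-verdict state necessarily carries that verdict and your cut goes through). In fairness, the paper's own one-line proof does not confront this either, and the issue is moot in the single-verdict setting where the construction is ultimately deployed; but as a proof of the lemma as stated for arbitrary $p$, your argument breaks exactly where the lemma itself does.
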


\begin{proof}
	Like for Lemma \ref{lem:substitute_v}, by induction on the length of the derivations.
\qed\end{proof}

\begin{lemma}
    \label{lemma:trace_equiv}
    Let $m$ and $n = \pi^{-1}(n')$ be monitors such that $Trace(\pi(m)) = Trace(n')$. Then $m \sim n$.
\end{lemma}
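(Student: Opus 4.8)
I need to prove Lemma~\ref{lemma:trace_equiv}: if $m$ and $n = \pi^{-1}(n')$ are monitors with $Trace(\pi(m)) = Trace(n')$, then $m \sim n$. Recall $m \sim n$ means that for every trace $t$ and verdict $v$, $m \xRightarrow{t} v$ iff $n \xRightarrow{t} v$.

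**The strategy.** The natural approach is to chain together the two substitution lemmas (Lemma~\ref{lem:substitute_v} and Lemma~\ref{lem:substitute_v_nil}) via the trace-equivalence hypothesis. So let me trace through what each side gives me.

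Let me start from $m \xRightarrow{t} v$. By Lemma~\ref{lem:substitute_v} (applied to $p = m$, $q = \pi(m)$), this holds iff there is some $t' \sqsubseteq t$ with $\pi(m) \xRightarrow{t'} v.\texttt{nil}$. Now I want to transfer this to $n'$. The hypothesis $Trace(\pi(m)) = Trace(n')$ tells me which *traces* the two processes can perform, but $Trace$ only records the existence of *some* reachable process, not which one. So the subtle point is: I need to know that a trace $t'$ reaching $v.\texttt{nil}$ in $\pi(m)$ corresponds to a trace reaching a "$v$-like" state in $n'$.

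**The main obstacle.** This is exactly where the work lies. Trace equivalence is a weak relation — it does not preserve the branching structure or which particular residual is reached. I expect the key observation to be about the *shape* of $n' = \pi(n)$: any process in the image of $\pi$ has the property that $v.\texttt{nil}$ can only be reached immediately before a deadlock, and $\texttt{nil}$ appears only after a verdict action. The crucial fact should be: $\pi(m) \xRightarrow{t'} v.\texttt{nil}$ iff $t'v \in Trace(\pi(m))$ but $t'v$ is "maximal" in the sense that after the $v$-action the process deadlocks (cannot perform any further action). Since verdict actions $v$ can appear in a process trace only when emitted by a $v.\texttt{nil}$ subterm, the trace $t'.v \in Trace(\pi(m)) = Trace(n')$ forces $n'$ to reach some sum of $v.r$ after $t'$. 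So the plan is:

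\begin{enumerate}
\item Show $\pi(m) \xRightarrow{t'} v.\texttt{nil}$ implies $t' v \in Trace(\pi(m))$ — the process can perform the verdict action $v$. This is immediate since $v.\texttt{nil} \xrightarrow{v} \texttt{nil}$.
\item By the hypothesis, $t' v \in Trace(n')$, so $n' \xRightarrow{t' v} r''$ for some $r''$. Since in the image of $\pi$ a verdict action is emitted only by a subterm $v.\texttt{nil}$, the process reached after $t'$ must be a sum $s$ of some $v.r$.
\item Apply Lemma~\ref{lem:substitute_v_nil} (with $p = n'$, $q = \pi^{-1}(n') = n$): the existence of $t' \sqsubseteq t$ and such a sum $s$ with $n' \xRightarrow{t'} s$ is equivalent to $n \xRightarrow{t} v$.
\item Reverse each implication to obtain the converse direction, giving $m \xRightarrow{t} v$ iff $n \xRightarrow{t} v$, i.e.\ $m \sim n$.
\end{enumerate}

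**What needs care.** The delicate step is~(2): justifying that $v \in Trace(n')$ at position $t'$ forces $n'$ to reach a genuine sum-of-$v.r$ state, so that Lemma~\ref{lem:substitute_v_nil} applies with the correct verdict $v$. This relies on the structural invariant that $n'$ lies in the image of $\pi$, where the actions $\yes$, $\no$, $\mend$ occur syntactically only as prefixes of $\texttt{nil}$; no other construct in a $\pi$-image can emit a verdict action. I would state and use this invariant explicitly. The rest — chaining the iff's and handling prefixes $t' \sqsubseteq t$ consistently — is bookkeeping already packaged inside the two substitution lemmas, so once~(2) is pinned down the proof closes symmetrically in both directions.
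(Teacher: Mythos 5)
Your overall strategy is exactly the paper's: chain Lemma \ref{lem:substitute_v}, the hypothesis $Trace(\pi(m)) = Trace(n')$, and Lemma \ref{lem:substitute_v_nil}. However, the justification you give for the step you yourself flag as delicate does not work as stated. You claim step (2) relies on ``the structural invariant that $n'$ lies in the image of $\pi$''. Nothing in the hypotheses puts $n'$ in the image of $\pi$: in the intended application (the Monitor Rewriting theorem), $n' = D(\pi(m))$ is whatever deterministic process Rabinovich's construction returns, and it is only guaranteed to be trace equivalent to $\pi(m)$. Fortunately the invariant is not needed there: the fact that $n' \xRightarrow{t'v} r''$ forces $n'$ to pass through a sum of some $v.r$ is a completely general consequence of the operational semantics --- the only rule emitting a visible action $v$ is \mycap{Act} applied to a prefix $v.r$, propagated through \mycap{SelL}/\mycap{SelR}, with trailing $\tau$-steps absorbed into $\xRightarrow{t'}$. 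That is all the paper uses in this direction, and it is exactly the form Lemma \ref{lem:substitute_v_nil} asks for.

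Where the $\pi$-image structure genuinely matters is in the converse direction, which you dismiss as closing ``symmetrically''. It does not: going from $n \xRightarrow{t} v$ back to $m \xRightarrow{t} v$, you obtain $t'v \in Trace(n') = Trace(\pi(m))$, hence $\pi(m) \xRightarrow{t'} s$ for some sum $s$ of some $v.q$; but Lemma \ref{lem:substitute_v} requires reaching the specific process $v.\texttt{nil}$. Here you must use that $\pi(m)$ (not $n'$) is a $\pi$-image, so every verdict prefix in it is $v.\texttt{nil}$, together with the standing assumption that all sums of $v$ are $v$ in $m$ (Lemma \ref{lem:no_sums_of_verdicts}), which forces the sum reached to be literally $v.\texttt{nil}$. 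So the invariant you identified is the right one, but you attached it to the wrong process and the wrong direction of the equivalence; with those two corrections your argument coincides with the paper's proof.
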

    \begin{proof}
        Assume that for a trace $t$ and a verdict $v$, we have $m\xRightarrow{t}v$. By Lemma \ref{lem:substitute_v}, for some $t' \sqsubseteq t$, $\pi(m)\xRightarrow{t'}v.\texttt{nil}$ and so $t'.v\in Trace(\pi(m))$. Since $\pi(m)$ and $n'$ are trace equivalent, we have $n'\xRightarrow{t'v}r$ for some $r$; therefore, $n' \xRightarrow{t'} s$, where $s$ is a sum of some $v.r$, and so, by Lemma \ref{lem:substitute_v_nil},
         $n\xRightarrow{t}v$.
        
        If $n\xRightarrow{t}v$ for some verdict $v$ and trace $t$, then by Lemma \ref{lem:substitute_v_nil}, 
        $n'\xRightarrow{t'} s$, where $s$ is a sum of $v.p$ and $t' \sqsubseteq t$, so $n' \xRightarrow{t'v} p$.
        Then, $\pi(m) \xRightarrow{t'v} p'$, so $p' = \texttt{nil}$. Since all sums of $v$ are $v$ in $m$, it is also the case that all sums of $v.\texttt{nil}$ are $v.\texttt{nil}$ in $\pi(m)$; by Lemma \ref{lem:substitute_v} and rule \mycap{mVerd},    
        $m\xRightarrow{t}v$.
   \qed\end{proof}

\begin{theorem}[\bf Monitor Rewriting]
    For each monitor $m\in \mycap{Mon}$ 
    there exists a deterministic monitor 
	$n \sim m$.
\end{theorem}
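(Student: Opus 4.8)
The plan is to reduce monitor determinization to Rabinovich's process determinization, using the encoding $\pi$ and its inverse $\pi^{-1}$ as the bridge, and then to lean on Lemma \ref{lemma:trace_equiv} to convert the resulting trace equality into monitor equivalence. Concretely, given a monitor $m$, I would first form the process $\pi(m)$, then set $n' = D(\pi(m))$; by the defining property of $D$, the process $n'$ is deterministic and trace equivalent to $\pi(m)$, i.e. $Trace(\pi(m)) = Trace(n')$. I would then take as candidate the monitor $n = \pi^{-1}(n')$. Since $n'$ is trace equivalent to $\pi(m)$ and $n = \pi^{-1}(n')$, Lemma \ref{lemma:trace_equiv} applies verbatim and yields $m \sim n$. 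This disposes of the equivalence half of the statement essentially for free.

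The remaining and genuinely new obligation is to show that $n = \pi^{-1}(n')$ is in fact \emph{deterministic} as a monitor, i.e. that every sum of at least two summands occurring in $n$ has the form $\sum_{\alpha \in A}\alpha.m_\alpha$ with $A \subseteq \mycap{Act}$. Here I would use that Rabinovich's notion of determinism coincides with ours on processes without free variables, so every sum occurring in $n'$ already has the shape $\sum_{\beta \in B}\beta.p_\beta$ with the $\beta$ pairwise distinct, where $\beta$ ranges over $\mycap{Act}$ together with the verdict actions $\yes,\no,\mend$ introduced by $\pi$. I would then argue by structural induction on $n'$ that $\pi^{-1}$ sends such a sum either to a verdict or to a sum of the required form: if some summand $\beta.p_\beta$ uses a verdict action, then $s$ is a sum of some $v.p$ and the clause $\pi^{-1}(s) = v$ collapses it to the bare verdict $v$, which is trivially deterministic; otherwise all $\beta \in \mycap{Act}$ and $\pi^{-1}$ distributes over the sum, producing $\sum_{\beta \in B}\beta.\pi^{-1}(p_\beta)$, again of the prescribed shape. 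Prefixes, recursion, and variables are preserved verbatim by $\pi^{-1}$, so determinism propagates through them unproblematically.

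The main obstacle I anticipate is not the inductive bookkeeping but checking that $\pi^{-1}$ is even \emph{well defined} on $n'$: its defining clauses have no case for $\nil$, so I must confirm that every $\nil$-like dead end of $n'$ sits behind a verdict action and is therefore absorbed by the collapsing clause $\pi^{-1}(s) = v$ before the recursion ever reaches it. This should follow from the fact that in $\pi(m)$ the process $\nil$ occurs only in the shape $v.\nil$, so by trace equivalence every maximal behaviour of $n'$ likewise terminates with a verdict action; hence the collapse fires exactly at those states and the recursion never descends past a verdict. I would also record the harmless convention, justified by Lemma \ref{lem:no_sums_of_verdicts} together with the standing assumption that all sums of a verdict are that verdict, which guarantees that a state offering a verdict action is genuinely equivalent to the corresponding verdict, so that collapsing it alters neither the accepted traces nor the equivalence class already established.
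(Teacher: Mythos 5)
Your proposal is correct and follows essentially the same route as the paper: define $n = \pi^{-1}(D(\pi(m)))$, obtain $m \sim n$ from Lemma \ref{lemma:trace_equiv}, and argue that each sum in $n$ is the image under $\pi^{-1}$ of a deterministic sum in $D(\pi(m))$, hence either collapses to a verdict or retains the form $\sum_{\alpha\in A}\alpha.m_\alpha$. Your additional check that $\pi^{-1}$ is well defined (every $\nil$ in the determinized process sits behind a verdict action) is a point the paper leaves implicit, but it does not change the argument.
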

    \begin{proof}
        We define a new monitor $n=\pi^{-1}\circ D\circ\pi(m)$. 
        By Lemma \ref{lemma:trace_equiv}, $m$ and $n$ are equivalent. Let $n = \pi^{-1}(n')$.
        Monitor $n$ is deterministic. To prove this, let $s$ be a sum in $n$. We know that $s = \pi^{-1}(s')$, where $s'$ appears in $n'$, which is deterministic. Then, either $s'$ is a sum of a  $v.p$, so $s = v$, or $s' = \sum_{\alpha \in A} \alpha.p'_\alpha$ and $s = \sum_{\alpha \in A} \alpha.p_\alpha$.
   \qed\end{proof}

\begin{examp}
	To determinize $m_e = \rec\ X. \alpha.(\alpha.\no + X)$, we convert it to $p = \pi(m_e) = \rec\ X.\alpha.(\alpha.[no].\nil + X)$. Then, we use Rabinovich's construction to determinize $p$ into (for example) $p' = \alpha.\alpha.[no].\nil$. We can now see that $\pi^{-1}(p') = m'_e$.
\end{examp}

\subsection{Formula Rewriting}
\label{sect:form_rewrite}

In this second approach, we show that each formula $\varphi\in\shml$ is equivalent to some formula in deterministic form which will yield a deterministic monitor if we apply the monitor synthesis function to it. We focus on formulae in $\shml$ but the proof for $\chml$ is completely analogous. We work through an equivalent representation of formulae as systems of equations. The reader can also see \cite{rabinovich}, where Rabinovich uses very similar constructions to determinize processes.

\subsubsection{Systems of Equations}

We give the necessary definitions and facts about systems of equations for $\shml$. These definitions and lemmata are simplified versions of more general constructions. We state necessary lemmata without further explanation, but the reader can see an appropriate source on fixed points and the $\mu$-calculus (for example, \cite{arnold2001rudiments}).

Given tuples, $a = (a_1,\ldots, a_k)$ and $ b = (b_1,\ldots, b_l)$, we use the notation $a \cdot b$ to mean $(a_1,\ldots, a_k,b_1,\ldots, b_l)$.
We abuse notation and extend the operations $\bigcup$ and $\bigcap$ to tuples of sets, so that $\bigcup (a,b) = a \cup b$ and if $a \cdot B \in S^{k+2}$, $\bigcup (a \cdot B) = a \cup \bigcup B$; similarly for $\bigcap$. Also, for tuples of sets, $a = (a_1,\ldots, a_k)$ and $ b = (b_1,\ldots, b_k)$, $a \subseteq b$ iff for all $1 \leq i \leq k$, $a_i \subseteq b_i$.
For an environment $\rho$, a tuple of variables $\mathbb{X} = (X_1,\ldots, X_k)$ and a tuple of sets $ \mathbb{S} = (S_1,\ldots, S_k)$, where $k>1$, we define
\[\rho[X_1 \mapsto S_1,X_2 \mapsto S_2,\ldots X_k \mapsto S_k] = (\rho[X_1 \mapsto S_1])[X_2 \mapsto S_2,\ldots X_k \mapsto S_k]\]
and \[ \rho[\mathbb{X} \mapsto \mathbb{S}] = \rho[X_1 \mapsto S_1,X_2 \mapsto S_2,\ldots X_k \mapsto S_k]. \]
Finally, for an environment $\rho$ and formulae $\varphi_1,\ldots,\varphi_k$, $$\trueset{\bigtimes_{i=1}^n \varphi_i,\rho} = \bigtimes_{i=1}^n\trueset{\varphi_i,\rho},$$
where $\bigtimes_{i=1}^n S_i$ is the $n$-ary cartesian product $S_1 \times S_2 \times \cdots \times S_n$.

\begin{definition}
	A system of equations is a triple $\mathit{SYS}=(Eq, X, \mathcal Y)$ where $X$ is called the principal variable in $\mathit{SYS}$, $\mathcal Y$ is a finite set of variables and $Eq$ is an $n$-tuple of equations:
	\begin{align*}
	X_1 &= F_1,   \\
	X_2 &= F_2,   \\
	&~ \vdots \\
	X_n &= F_n   ,
	\end{align*}
%
	where for $1 \leq i < j \leq n$, $X_i$ is different from $X_j$, $F_i$ is an expression in $\shml$ which can contain variables in $\mathcal Y\cup\{X_1, X_2, \dots, X_n\}$ and there is some $1 \leq i \leq n$, such that $X = X_i$. $\mathcal{Y}$ is called the set of free variables of $\sys$ and is disjoint from $\{X_1, X_2, \dots, X_n\}$.
\end{definition}

As we see further below, a system of equations can alternatively be understood as a simultaneous fixed point, but we mostly use the following recursive definition to provide semantics,
 where
$Sol(\sys,\rho) = ( S_1(\rho),\ldots,S_n(\rho))$ is an $n$-tuple of sets of processes from a labeled transition system, giving solutions to every variable $X_i$ of the equation system, given an environment $\rho$. We use the notations 
$$\rho[\sys]\ \  \defeq\ \  \overline{\rho}^\sys\ \  \defeq\ \  \rho\left[ \bigtimes_{i=1}^n X_i \mapsto Sol(\sys,\rho)\right],$$
depending on which is clearer in each situation. 
For $n = 1$, let
$$Sol(\mathit{SYS},\rho) = 
\trueset{\max X_1.F_1,\rho}.$$ 
 Let $\mathit{SYS}'$ be a system $\mathit{SYS}$ after adding a new first equation, $X = F$ and removing $X$ from $\mathcal Y$; then,
for $Sol(\mathit{SYS},\rho) =( S_1(\rho),\ldots,S_n(\rho))$, 
let for every environment $\rho$, 
$$S_0(\rho) = \bigcup \left\{ S \mid S \subseteq \trueset{ F,\overline{\rho[X \mapsto S]}^\sys } \right\}$$
and 
$$Sol(\mathit{SYS}',\rho) = \left(S_0(\rho),S_1(\rho\left[X \mapsto S_0(\rho)\right]),\ldots,S_n(\rho\left[X \mapsto S_0(\rho)\right])\right).$$

If the primary variable of a system of equations, $\sys$, is $X_i$ and 
$Sol(\mathit{SYS},\rho) = (S_1(\rho),\ldots,S_n(\rho))$, then 
$\trueset{\mathit{SYS}, \rho} = S_i(\rho)$. 
We say that a system of equations $\mathit{SYS}$ is equivalent to a formula $\varphi$ of $\shml$ with free variables from $\mathcal{Y}$ when for every environment $\rho$,
$\llbracket \varphi,\rho \rrbracket = \llbracket \mathit{SYS},\rho \rrbracket$. 

\begin{examp}
	A system of equations 
	equivalent to $\varphi_e = \max\ X.[\alpha]([\alpha]\false \wedge X)$ is $\sys_e$, which has no free variables and includes the equations $X_0 = [\alpha] X_1$ and $X_1 = [\alpha] \false \wedge X_0$, where $X_0$ is the principal variable of $\sys_e$.
\end{examp}

Given an environment $\rho$ and such a system of equations $\mathit{SYS}$, 
Notice that for any equation $X = F$ of $\mathit{SYS}$, $\trueset{X,\rho[\mathit{SYS}]} = \trueset{F,\rho[\mathit{SYS}]}$ and that if $X$ is the principal variable of $\sys$, then $\trueset{X,\rho[\mathit{SYS}]} =  \trueset{\mathit{SYS},\rho}$.
We note that, as is well known, the order of the calculation of the fixed points does not affect the overall outcome: 
\begin{lemma}\label{lem:systems:reorder}
	Let $\sys = (Eq_1,X,\mathcal{Y})$ and $\sys' = (Eq_2,X,\mathcal{Y})$, where $Eq_2$ is a permutation of $Eq_1$. Then, for all environments $\rho$, $\trueset{\sys,\rho} = \trueset{\sys',\rho}$.
\end{lemma}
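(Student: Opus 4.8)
The plan is to reduce the statement to the well-known fact that a system of equations, interpreted as a simultaneous greatest fixed point, is insensitive to the order in which its equations are listed. First I would observe that, because we work in $\shml$, there is no negation and every variable occurs positively in each right-hand side $F_i$; hence the operator
\[
\Phi_\rho(S_1,\ldots,S_n) = \left(\trueset{F_1,\rho[\mathbb{X}\mapsto \mathbb{S}]},\ldots,\trueset{F_n,\rho[\mathbb{X}\mapsto \mathbb{S}]}\right)
\]
on $(2^{\mycap{Proc}})^n$ is monotone with respect to componentwise inclusion. By the Knaster--Tarski theorem it has a greatest fixed point $\mathbb{S}^\ast = (S_1^\ast,\ldots,S_n^\ast)$, obtained as the union of all post-fixed points, and this tuple-valued object depends only on the \emph{set} of equations together with the assignment of variables to coordinates, not on the order in which the equations happen to be written.

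The heart of the argument is to show, by induction on $n$, that the recursively defined $Sol(\sys,\rho)$ coincides with $\mathbb{S}^\ast$. The base case $n=1$ is immediate, since $\trueset{\max X_1.F_1,\rho}$ is by definition the greatest fixed point of $S \mapsto \trueset{F_1,\rho[X_1\mapsto S]}$. For the inductive step I would appeal to the Bek\'{i}\v{c} principle, the parametric (diagonal) rule for simultaneous fixed points discussed in \cite{arnold2001rudiments}: solving the tail system as a function of the newly added head variable $X$ and then taking the greatest fixed point of the head yields the same tuple as the simultaneous greatest fixed point of the full operator. This is exactly what the clause $S_0(\rho) = \bigcup\{S \mid S \subseteq \trueset{F,\overline{\rho[X\mapsto S]}^\sys}\}$, combined with the recursive re-evaluation of the remaining components at $S_0(\rho)$, prescribes. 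This identifies $Sol(\sys,\rho)$ with $\mathbb{S}^\ast$.

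Once this identification is in place, the lemma follows immediately. Rewriting $Eq_1$ as a permutation $Eq_2$ only relabels the positions of the equations, so the corresponding monotone operators have the same greatest fixed point up to the induced permutation of coordinates; in particular, the component assigned to the principal variable $X$ --- which is common to $\sys$ and $\sys'$ --- is identical in both. Therefore $\trueset{\sys,\rho}$ and $\trueset{\sys',\rho}$ are the same set for every environment $\rho$.

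The step I expect to be the main obstacle is verifying the Bek\'{i}\v{c} identity against the paper's specific recursive presentation. One must check that peeling off the first equation and substituting the already-computed tail solution $\overline{\rho[X\mapsto S]}^\sys$ really does compute the head component of the simultaneous fixed point, and that the remaining components, re-evaluated at $S_0(\rho)$, agree with $S_2^\ast,\ldots,S_n^\ast$. This is routine fixed-point reasoning, but it requires care because the tail solution is itself a greatest fixed point parameterized by the head value $S$, so the verification is a nested rather than a single application of Knaster--Tarski.
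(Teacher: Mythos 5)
Your proof is correct and is exactly the standard argument the paper is implicitly invoking: the paper states this lemma without proof, deferring to the fixed-point literature (\cite{arnold2001rudiments}), and your route via monotonicity of the $\shml$ right-hand sides, Knaster--Tarski, and the Bek\'{i}\v{c} principle identifying the recursive $Sol$ with the order-independent simultaneous greatest fixed point is precisely the intended justification. No gaps; the nested fixed-point verification you flag is indeed the only step requiring care, and it goes through because every $F_i$ is monotone in all variables.
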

Therefore, for every equation $X = F$ of the system $\mathit{SYS}$, if $\sys'$ is the result of removing $X=F$ from the equations  and adding $X$ to the free variables,
\[ \trueset{X,\rho[\mathit{SYS}]} = \bigcup \left\{ S \mid S \subseteq \trueset{ F,\overline{\rho[X \mapsto S]}^{\sys'} } \right\}. \]

Furthermore, we can compute parts of the solution of the system (or the whole solution) as simultaneous fixed points:
 \begin{lemma}\label{lem:systems:simultaneous}
 	Let $\sys = (Eq_1\cdot Eq_2,X,\mathcal{Y})$, $Eq_1 = \bigtimes_{1 \leq i \leq k} \{X_i = F_i\}$, $\mathcal{X}_1 = (X_1,\ldots,X_k)$, and $\sys' = (Eq_2,X,\mathcal{Y}\cup \mathcal{X}_1)$. Let for all environments $\rho$, 
 	\[
 	\mathbb{S}_0(\rho) = \bigcup \left\{ \mathbb{S} \mid \mathbb{S} \subseteq \trueset{ \bigtimes_{i=1}^k F_i,\overline{\rho[\bigtimes\mathcal{X}_1 \mapsto \mathbb{S}]}^{\sys'} } \right\}
 	.\]
 	Then, 
 	$$
 	Sol(\sys,\rho) = \mathbb{S}_0(\rho) \cdot 
			Sol(\sys',\rho[\bigtimes\mathcal{X}_1 \mapsto \mathbb{S}_0(\rho)])	 	
 	.
 	$$
 \end{lemma}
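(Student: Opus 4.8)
The plan is to prove the lemma by induction on $k$, the number of equations in the block $Eq_1$, recognising the statement as an instance of the Bekić principle: a simultaneous greatest fixed point can be computed by nesting, i.e.\ by first solving one coordinate parametrically in the others and then substituting back. Throughout, the enabling fact is that the semantics of $\shml$ is monotone in every free variable (the logic has no negation), so each operator appearing below is monotone and every set $\bigcup\{\mathbb{S}\mid\mathbb{S}\subseteq\Psi(\mathbb{S})\}$ is genuinely the greatest fixed point of $\Psi$ by Knaster--Tarski. I would first record, as an auxiliary fact proved by a routine induction on the recursive definition of $Sol$, that $Sol(\sys',\rho[\bigtimes\mathcal{X}_1\mapsto\mathbb{S}])$ is monotone in $\mathbb{S}$; this is what makes the composite operators below monotone.

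For the base case $k=1$, the tuple $\mathbb{S}$ has a single component and the defining equation for $\mathbb{S}_0(\rho)$ coincides verbatim with the clause $S_0(\rho)=\bigcup\{S\mid S\subseteq\trueset{F,\overline{\rho[X\mapsto S]}^\sys}\}$ of the recursive definition of $Sol$ (with $F=F_1$ and residual system $\sys'$), so the claimed equality holds by definition. For the inductive step I would split the block as $Eq_1=\{X_1=F_1\}\cdot Eq_1''$ with $Eq_1''=\bigtimes_{2\leq i\leq k}\{X_i=F_i\}$, and set $\sys''=(Eq_1''\cdot Eq_2,X,\mathcal{Y}\cup\{X_1\})$. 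Peeling off the single equation $X_1=F_1$ via the recursive definition gives $Sol(\sys,\rho)=T_0(\rho)\cdot Sol(\sys'',\rho[X_1\mapsto T_0(\rho)])$, where $T_0(\rho)=\bigcup\{S\mid S\subseteq\trueset{F_1,\overline{\rho[X_1\mapsto S]}^{\sys''}}\}$. Applying the induction hypothesis to $Sol(\sys'',\rho[X_1\mapsto T_0])$, whose block $Eq_1''$ has size $k-1$ and whose residual system is exactly $\sys'$, I obtain $Sol(\sys,\rho)=T_0\cdot\mathbb{T}_1\cdot Sol(\sys',\rho[X_1\mapsto T_0,\bigtimes\mathcal{X}_1''\mapsto\mathbb{T}_1])$, where $\mathcal{X}_1''=(X_2,\ldots,X_k)$ and $\mathbb{T}_1$ is the simultaneous greatest fixed point of $\mathbb{S}'\mapsto\trueset{\bigtimes_{i=2}^k F_i,\overline{\rho[X_1\mapsto T_0][\bigtimes\mathcal{X}_1''\mapsto\mathbb{S}']}^{\sys'}}$. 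Comparing with the target, everything reduces to the single equality $T_0\cdot\mathbb{T}_1=\mathbb{S}_0(\rho)$: once this holds the environments also agree, since $\rho[X_1\mapsto T_0,\bigtimes\mathcal{X}_1''\mapsto\mathbb{T}_1]=\rho[\bigtimes\mathcal{X}_1\mapsto T_0\cdot\mathbb{T}_1]=\rho[\bigtimes\mathcal{X}_1\mapsto\mathbb{S}_0]$.

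The main obstacle is precisely this equality, which is the Bekić decomposition. Writing $\Phi(\mathbb{S})=\trueset{\bigtimes_{i=1}^k F_i,\overline{\rho[\bigtimes\mathcal{X}_1\mapsto\mathbb{S}]}^{\sys'}}$ and splitting $\mathbb{S}=(S_1,\mathbb{S}')$ and $\Phi=(\Phi_1,\Phi')$ coordinate-wise, so that $\mathbb{S}_0(\rho)=\nu\Phi$ is the greatest fixed point of $\Phi$, I would first check, using the definition of $\overline{\cdot}^{\sys''}$ together with the induction hypothesis, that $T_0=\bigcup\{S_1\mid S_1\subseteq\Phi_1(S_1,g(S_1))\}$ and $\mathbb{T}_1=g(T_0)$, where $g(S_1)=\bigcup\{\mathbb{S}'\mid\mathbb{S}'\subseteq\Phi'(S_1,\mathbb{S}')\}$. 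I would then prove $T_0\cdot\mathbb{T}_1=\nu\Phi$ by the two standard inclusions. For ``$\subseteq$'', the fixed-point identities $T_0=\Phi_1(T_0,g(T_0))$ and $\mathbb{T}_1=g(T_0)=\Phi'(T_0,\mathbb{T}_1)$ show that $(T_0,\mathbb{T}_1)$ is a fixed point of $\Phi$, hence contained in $\nu\Phi$. For ``$\supseteq$'', writing $(A,\mathbb{B})=\nu\Phi$, from $\mathbb{B}=\Phi'(A,\mathbb{B})$ and monotonicity I get $\mathbb{B}\subseteq g(A)$, whence $A\subseteq\Phi_1(A,g(A))$ and so $A\subseteq T_0$; then $\mathbb{B}\subseteq\Phi'(T_0,\mathbb{B})$ gives $\mathbb{B}\subseteq g(T_0)=\mathbb{T}_1$. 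Each step uses only monotonicity of $\Phi_1,\Phi'$ and of $g$, which rests on the monotonicity of the $\shml$ semantics and of $Sol$ noted at the outset. The remaining bookkeeping relating $\overline{\cdot}^{\sys''}$ to $\overline{\cdot}^{\sys'}$ is routine substitution in environments, and the whole argument can alternatively be quoted as the Bekić/simultaneous-fixed-point theorem from \cite{arnold2001rudiments}.
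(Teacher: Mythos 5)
Your proof is correct. Note, however, that the paper does not actually prove this lemma: it is stated without proof, with the surrounding text explicitly deferring to standard sources on fixed points and the $\mu$-calculus (the lemma is an instance of the Beki\'{c} principle, and the paper points the reader to \cite{arnold2001rudiments}). What you have written is a correct, self-contained derivation of exactly that principle in the paper's setting: the induction on $k$ reduces everything to the two-block case, and your two-inclusion argument for $T_0\cdot\mathbb{T}_1=\nu\Phi$ is the standard Knaster--Tarski/Beki\'{c} computation. The two points you rightly flag as needing separate justification --- monotonicity of the $\shml$ semantics in each free variable (which holds because the fragment is negation-free) and monotonicity of $Sol(\sys',-)$ in the environment (a routine induction on the recursive definition of $Sol$) --- are precisely the hypotheses under which the cited general theorem applies, so your argument buys a verification of those side conditions that the paper leaves implicit. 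One small notational caveat: the paper's extension of $\bigcup$ to tuples is stated for a single tuple (collapsing it to one set), whereas in the lemma $\bigcup\{\mathbb{S}\mid\cdots\}$ must be read as the componentwise union of a family of $k$-tuples, yielding a $k$-tuple; your treatment of $\nu\Phi$ as an element of the product lattice is the intended reading and is the one under which Knaster--Tarski gives the greatest fixed point.
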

As a consequence, for a set of variables $\{ X_i \mid i \in I \}$ of a system of equations $\sys$,
\[
\trueset{ \bigtimes_{i \in I}X_i ,\rho[\sys]}  
=
 \bigcup \left\{ \mathbb{S} \mid \mathbb{S} \subseteq  \trueset{ \bigtimes_{i \in I}F_i ,\overline{\rho\left[\bigtimes_{i \in I}X_i \mapsto \mathbb{S}\right]}^{\sys'}} \right\},
\]
where $\sys'$ is the result of  removing $X_i = F_i$ from the equations and adding $X_i$ to the free variables, for all $i \in I$.

\subsubsection{Standard and Deterministic Forms}

We begin by defining a deterministic form for formulae in $\shml$.

\begin{definition}
    A formula $\varphi\in\shml$ is in deterministic form iff
	for each pair of formulae $\psi_1 \neq \psi_2$ that occur in the same conjunction in $\varphi$, it must be the case that $ \psi_1 = [\alpha_1]\psi'_1$ and $\psi_2 = [\alpha_2]\psi'_2$ for some $\alpha_1\neq\alpha_2$ or that one of them is a free variable of $\varphi$.
\end{definition}

The following lemma justifies calling these formulae deterministic by showing that applying the monitor synthesis function to them will yield a deterministic monitor.

\begin{lemma}
    Let $\varphi\in\shml$ be a formula deterministic form with no free variables. Then $m=\llparenthesis\varphi\rrparenthesis$ is deterministic.

    \begin{proof}
        By examining the definition of the monitor synthesis function, we can see that if $\llparenthesis\varphi\rrparenthesis$ 
        contains a sum $\sum_{i \in A} p_i$, then $\varphi$ contains a conjunction $\bigwedge_{i \in A \cup B} \varphi_i$, where for  $i \in A$, $\llparenthesis\varphi_i\rrparenthesis = p_i \neq \yes$. 
   \qed\end{proof}
\end{lemma}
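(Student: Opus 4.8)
The plan is to verify the defining condition of a deterministic monitor directly: I would show that every sum of at least two summands occurring syntactically in $m = \msf{\varphi}$ has the shape $\sum_{\alpha \in A}\alpha.m_\alpha$ with the head actions pairwise distinct. The strategy is to set up a correspondence between the sums appearing in $m$ and the conjunctions appearing in $\varphi$, read off from the clauses of the synthesis function $\msf{-}$, and then to exploit the hypothesis that $\varphi$ is in deterministic form to control the shape of those conjunctions.

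First I would observe, by inspecting the definition of $\msf{-}$, that sums are produced only by the conjunction clause: the clauses for $\true$, $\false$ and $X$ yield a verdict or a variable; the clause for $[\alpha]\psi$ yields $\alpha.\msf{\psi}$ or $\yes$; and the clause for $\max X.\psi$ yields $\rec x.\msf{\psi}$ or $\yes$ --- none of these introduce a sum at their head. Hence any sum $\sum_{i \in A}p_i$ occurring in $m$ must arise from some conjunction in $\varphi$. Flattening the binary nesting of that conjunction into $\bigwedge_{i \in A \cup B}\varphi_i$, the three-way case analysis of the conjunction clause discards exactly the conjuncts whose synthesis is $\yes$ (the indices $i \in B$) and keeps the others as summands, so for each $i \in A$ we have $p_i = \msf{\varphi_i} \neq \yes$. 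Writing this correspondence out carefully, following how summands accumulate and $\yes$-summands are dropped across the tree of binary conjunctions, is the bulk of the routine work.

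With the correspondence in place, determinism follows from the deterministic-form hypothesis together with the assumption that $\varphi$ is closed. The key point is that the escape clause ``one of them is a free variable of $\varphi$'' refers to the free variables of the whole formula; since $\varphi$ has none, that clause is vacuous, and so any two distinct conjuncts of $\bigwedge_{i \in A \cup B}\varphi_i$ must both be box formulae $[\alpha_i]\psi'_i$ and $[\alpha_j]\psi'_j$ with $\alpha_i \neq \alpha_j$. For each $i \in A$, since $p_i = \msf{\varphi_i} \neq \yes$, the conjunct $\varphi_i$ cannot synthesise to a verdict; in particular it is not a variable, and the box clause of $\msf{-}$ falls in its first case, giving $p_i = \alpha_i.\msf{\psi'_i}$. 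Thus every summand is action-guarded and the guards are pairwise distinct, which is precisely the required form.

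I expect the main obstacle to be conceptual rather than computational: recognising why closedness is doing the work. Were $\varphi$ merely a subformula that left some $X$ free (bound higher up by a $\max X$), the conjunct $X$ could legally appear in a conjunction and would synthesise to the bare variable $x$, producing a summand that is not action-guarded --- exactly the phenomenon that makes the running example $m_e$ non-deterministic. The global reading of ``free variable of $\varphi$'' is what forecloses this, so the proof cannot be phrased as a naive structural induction that recurses into non-closed subformulae; the correspondence argument must instead be carried out against the fixed ambient formula $\varphi$. I would also take care of the degenerate case in which a conjunction repeats a conjunct ($\varphi_i = \varphi_j$ with $i \neq j$), for which the deterministic-form condition is vacuous and the two identical summands $p_i = p_j$ must be contracted (or excluded by treating conjunctions as sets of conjuncts) to keep the head actions distinct.
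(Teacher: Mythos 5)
Your proposal is correct and follows essentially the same route as the paper: the paper's (one-sentence) proof consists precisely of your first step, the observation that every sum in $\msf{\varphi}$ arises from a conjunction $\bigwedge_{i\in A\cup B}\varphi_i$ with $\msf{\varphi_i}=p_i\neq\yes$ for $i\in A$, and leaves the remaining steps --- that closedness voids the free-variable escape clause, so each such $\varphi_i$ is a box formula with pairwise distinct actions and synthesises to an action-guarded summand --- implicit. Your write-up simply makes those implicit steps (and the repeated-conjunct edge case) explicit.
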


\begin{examp}
	$\varphi_e$ is not in deterministic form, but $\psi_e = [\alpha]([\alpha] \false \wedge X)$ is (because here $X$ is free) and so is $\varphi'_e = [\alpha][\alpha]\false$.
\end{examp}

We also define a standard form for formulae in $\shml$. 

\begin{definition}
    A formula $\varphi\in\shml$ is in standard form if all free and unguarded variables in $\varphi$ are at the top level; that is,
    \[
        \varphi = \varphi'\land\bigwedge_{i\in S}X_i
    \]
    where $\varphi'$ does not contain a free and unguarded variable.
\end{definition}

\begin{examp}
	 $\varphi_e$ is in standard form and so is $\psi_e = [\alpha]([\alpha] \false \wedge X)$  (although here $X$ is free, it is also guarded). Formula $[\alpha] \false \wedge X$ is in standard form, because $X$ is at the top level and so is $\varphi'_e = [\alpha][\alpha]\false$, because it has no variables.
	 Formula  $\max\ X.([\alpha]X \wedge Y)$ is not is in standard form, but the construction of Lemma \ref{lemma:top_lvl_var} transforms it into $(\max\ X.[\alpha]X) \wedge Y$ which is.
\end{examp}

\begin{lemma}
    \label{lemma:top_lvl_var}
    Each formula in $\shml$ is equivalent to some formula in $\shml$ which is in standard form.
\end{lemma}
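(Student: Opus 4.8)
The plan is to prove the statement by structural induction on $\varphi$, taking the greatest-fixed-point unfolding law as the essential tool rather than any distributivity law. Before the induction I would record three algebraic facts. First, logical equivalence $\equiv$ is a congruence (if $\theta\equiv\theta_0$ then $\max X.\theta\equiv\max X.\theta_0$, and similarly under $\land$ and $[\alpha]$), which is immediate from the definition of $\trueset{\cdot,\rho}$ and lets me rewrite subformulae in place. Second, the unfolding law $\max X.\theta\equiv\theta[\max X.\theta/X]$, which follows directly from $\trueset{\max X.\theta,\rho}$ being a greatest fixed point. Third, the absorption law $\max X.(\psi\land X)\equiv\max X.\psi$, since $\bigcup\{S\mid S\subseteq g(S)\cap S\}=\bigcup\{S\mid S\subseteq g(S)\}$ for every monotone $g$.

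The base cases $\texttt{tt}$, $\texttt{ff}$ and $X$ are immediate, the last because $X\equiv\texttt{tt}\land X$ is already in standard form. For $\varphi=[\alpha]\theta$ every variable occurrence lies under the guard $[\alpha]$, so $\varphi$ has no free and unguarded variable and is in standard form with $\varphi'=\varphi$ and an empty top-level conjunction (no appeal to the induction hypothesis is needed). For $\varphi=\theta_1\land\theta_2$ I apply the induction hypothesis to get standard forms $\theta_k\equiv\theta_k'\land\bigwedge_{i\in S_k}X_i$ and observe that $(\theta_1'\land\theta_2')\land\bigwedge_{i\in S_1\cup S_2}X_i$ is again standard, because a conjunction of two formulae each without a free unguarded variable again has none.

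The main case, and the main obstacle, is $\varphi=\max X.\theta$. The tempting move here, namely distributing and pulling the free variables out of the recursion by rewriting $\max X.(\theta'\land Y)$ as $(\max X.\theta')\land Y$, is unsound: the greatest fixed point does not commute with intersection by a constant set, and indeed for $\theta'=[\alpha]X$ the left side denotes ``$Y$ holds invariantly along $\alpha$'' while the right side is merely $Y$. Instead I would unfold. By the induction hypothesis $\theta\equiv\theta'\land\bigwedge_{i\in S}X_i$ with $\theta'$ containing no free and unguarded variable. If $X\in S$, I use commutativity of $\land$ together with the absorption law to delete the conjunct $X$, so I may assume each $X_i$ with $i\in S$ is distinct from $X$; then $X$ occurs only inside $\theta'$. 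Unfolding once gives $\varphi\equiv\theta'[\varphi/X]\land\bigwedge_{i\in S}X_i$, exposing exactly the free unguarded variables $X_i$ at the top level. It remains to check that $\varphi'\defeq\theta'[\varphi/X]$ has no free and unguarded variable, and this is the delicate bookkeeping step: since $\theta'$ has no free unguarded variable and $X$ is free in $\theta'$, every occurrence of $X$ in $\theta'$ is guarded, so the substitution inserts the copies of $\varphi$ only underneath modalities and creates no new unguarded occurrence of any variable. Hence the free unguarded occurrences of $\varphi'$ coincide with those of $\theta'$, of which there are none, so $\varphi'\land\bigwedge_{i\in S}X_i$ is in standard form and equivalent to $\varphi$, completing the induction.
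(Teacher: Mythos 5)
Your proof is correct and follows essentially the same route as the paper's: structural induction in which the $\max X.\theta$ case is handled by first absorbing any top-level occurrence of the bound variable via $\max X.(\vartheta\land X)\equiv\max X.\vartheta$, then unfolding once with $\max X.\vartheta\equiv\vartheta[\max X.\vartheta/X]$ and observing that substituting into guarded positions creates no new free unguarded occurrences. Your side remark that the distributive rewriting $\max X.(\theta'\land Y)\equiv(\max X.\theta')\land Y$ is unsound is also correct --- indeed the paper's own illustrative example after the definition of standard form mistakenly presents exactly that rewriting as the lemma's output --- but the paper's actual proof, like yours, relies only on unfolding.
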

    \begin{proof}
        We define a function $f$ as follows:
        \[
            f(\varphi) = \{i~|~\text{$X_i$ occurs free and unguarded in $\varphi$}\}
        \]
        where $X_1, X_2, \dots$ are all the variables that can occur in the formulae.

        Then formally our claim is that for each $\varphi\in\shml$, there exists a formula, $\psi\in\shml$ such that
        \[
            \varphi\equiv\psi\land\bigwedge_{i\in f(\varphi)}X_i
        \]
        where $f(\psi) = \emptyset$.

        We use induction on the size of $\varphi$ to prove this claim and go through each case below.

        \begin{description}
        \item[$\varphi\in\{\texttt{tt}, \texttt{ff}\}$:]
            This case holds trivially since $f(\varphi)=\emptyset$ and
            \[
                \varphi\equiv\varphi\land\bigwedge_{i\in\emptyset}X_i.
            \]
        \item[$\varphi = X_j$:]
            This case holds trivially since $f(\varphi) = \{j\}$ and
            \[
                \varphi\equiv\texttt{tt}\land\bigwedge_{i\in\{j\}}X_i.
            \]
        \item[{$\varphi = [\alpha]\varphi'$}:]
            Since $\varphi$ is prefixed with the $[\alpha]$ operator, all variables are guarded in $\varphi$ and so $f(\varphi) = \emptyset$ and 
            \[
                \varphi\equiv\varphi\land\bigwedge_{i\in\emptyset}X_i.
            \]
        \item[$\varphi = \varphi_1\land\varphi_2$:]
            By the induction hypothesis, there exist formulae $\psi_1,\psi_2\in\shml$ such that
            \begin{align*}
                f(\psi_1) &= f(\psi_2) = \emptyset ,\\
                \varphi_1 &\equiv \psi_1\land\bigwedge_{i\in f(\varphi_1)}X_i , \\
                \text{ and }\
                \varphi_2 &\equiv \psi_2\land\bigwedge_{i\in f(\varphi_2)}X_i .
            \end{align*}
            Using the fact that $\varphi\land \varphi\equiv \varphi$ for each formula $\varphi\in\muhml$, we have
            \begin{align*}
                \varphi &\equiv \varphi_1\land\varphi_2 \\
                        &\equiv \left(\psi_1\land\bigwedge_{i\in f(\varphi_1)}X_i\right)
                          \land \left(\psi_2\land\bigwedge_{i\in f(\varphi_2)}X_i\right) \\
                        &\equiv (\psi_1\land\psi_2)\land \bigwedge_{i\in f(\varphi_1)}X_i
                                                   \land \bigwedge_{i\in f(\varphi_2)}X_i \\
                        &\equiv (\psi_1\land\psi_2)\land \bigwedge_{i\in f(\varphi_1)\cup f(\varphi_2)}X_i
            \end{align*}
            and since $f(\psi_1) = f(\psi_2) = \emptyset$, we have $f(\psi_1\land\psi_2) = \emptyset$.

            Each free and unguarded variable in $\varphi$ must either be free and unguarded in $\varphi_1$ or $\varphi_2$ and each such variable in $\varphi_1$ or $\varphi_2$ must also be free and unguarded in $\varphi$. This gives us $f(\varphi_1)\cup f(\varphi_2) = f(\varphi)$ and so we have
            \[
                \varphi \equiv (\psi_1\land\psi_2)\land \bigwedge_{i\in f(\varphi)}X_i.
            \]

        \item[$\varphi = \max X_j.\varphi'$:]
            By the induction hypothesis, there exists a formula $\psi\in\shml$ such that
            \[
                \varphi' \equiv \psi\land\bigwedge_{i\in f(\varphi')}X_i
            \]
            where $f(\psi) = \emptyset$.
            
            We use the following equivalences:
            \begin{align}
                \max X.\vartheta &\equiv \vartheta[\max X.\theta/X] \\
                \max X.(\vartheta\land X) &\equiv \max X.\vartheta \\
                Y[\vartheta/X] &\equiv Y \text{~~~~~where $X\neq Y$}
            \end{align}

            From this we get:
            \begin{align*}
                \varphi
                &\equiv\max X_j.\varphi' \\
                &\equiv\max X_j.\left(\psi\land\bigwedge_{i\in f(\varphi')\setminus\{j\}}X_i\right) \\
                &\equiv\left(\psi\land\bigwedge_{i\in f(\varphi')\setminus\{j\}}X_i\right)\left[\max X_j.\left(\psi\land{\bigwedge}_{i\in f(\varphi')\setminus\{j\}}X_i\right)\Big/X_j\right] \\
                &\equiv\psi\left[\max X_j.\left(\psi\land{\bigwedge}_{i\in f(\varphi')\setminus\{j\}}X_i\right)\Big/X_j\right]\land\bigwedge_{i\in f(\varphi')\setminus\{j\}}X_i .
            \end{align*}

            Since each variable in $\psi$ is guarded, substituting a variable for a formula will not introduce unguarded variables and so
            \[
                f\left(\psi\left[\max X_j.\left(\psi\land{\bigwedge}_{i\in f(\varphi')\setminus\{j\}}X_i\right)\Big/X_j\right]\right) = \emptyset.
            \]

            The variables in $\varphi$ that are free and unguarded are exactly the ones that are free and unguarded in $\varphi'$, excluding $X_j$ and so we have
            \[
                f(\varphi) = f(\varphi')\setminus\{j\}.
            \]

            This gives us:
            \[
                \varphi \equiv \psi\left[\max X_j.\left(\psi\land{\bigwedge}_{i\in f(\varphi')\setminus\{j\}}X_i\right)\Big/X_j\right]\land\bigwedge_{i\in f(\varphi)}X_i.
            \]
            \qed
        \end{description}
    \end{proof}

\begin{lemma}\label{lem:replacement_lemma}
	Let $\mathit{SYS}$ be a system of equations and $X = F$ an equation of the system and let $\mathit{SYS}'$ result from replacing $X = F$ by $X = F'$.
	Let $\mathit{SYS}_0$ be the result of removing the first equation from $\textit{SYS}$ and adding $X_1$ to the free variables, $\mathcal{Y}$.
	If 
	for every environment $\rho$, $\trueset{F,\rho[\mathit{SYS}_0]} = \trueset{F',\rho[\mathit{SYS}_0]}$, then $\mathit{SYS'}$  is equivalent to the original system, $\mathit{SYS}$.
\end{lemma}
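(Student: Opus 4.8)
The plan is to reduce everything to the recursive, ``peel off the first equation'' description of $Sol$ given just before the lemma, and then observe that changing a single right-hand side only affects the computation of that one variable. First I would invoke Lemma~\ref{lem:systems:reorder} to assume, without loss of generality, that the equation being replaced is the first one, so that $X = X_1$, $F = F_1$, and the modified system $\mathit{SYS}'$ has first equation $X_1 = F'$ while all remaining equations (which, together with $X_1$ adjoined as a fresh free variable, constitute $\mathit{SYS}_0$) are literally identical in $\mathit{SYS}$ and $\mathit{SYS}'$.

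Next I would write down the explicit solution for the first variable. By the displayed consequence of Lemma~\ref{lem:systems:reorder}, for every environment $\rho$ the $X_1$-component of $Sol(\mathit{SYS},\rho)$ is
\[
	S_1(\rho) = \bigcup\left\{ S \mid S \subseteq \trueset{F,\overline{\rho[X_1 \mapsto S]}^{\mathit{SYS}_0}} \right\},
\]
and the $X_1$-component $S_1'(\rho)$ of $Sol(\mathit{SYS}',\rho)$ is given by the same expression with $F$ replaced by $F'$. The crucial step is to apply the hypothesis $\trueset{F,\sigma[\mathit{SYS}_0]} = \trueset{F',\sigma[\mathit{SYS}_0]}$, which holds for \emph{every} environment $\sigma$, to the particular environment $\sigma = \rho[X_1 \mapsto S]$; this is legitimate precisely because the hypothesis is universally quantified, and it yields $\trueset{F,\overline{\rho[X_1 \mapsto S]}^{\mathit{SYS}_0}} = \trueset{F',\overline{\rho[X_1 \mapsto S]}^{\mathit{SYS}_0}}$ for each candidate set $S$. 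Hence the two families of sets whose unions define $S_1(\rho)$ and $S_1'(\rho)$ coincide, so $S_1(\rho) = S_1'(\rho)$ for all $\rho$.

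Finally I would propagate this equality through the recursive definition of $Sol$. Since $\mathit{SYS}$ and $\mathit{SYS}'$ share the same $\mathit{SYS}_0$, the remaining components of both solutions are obtained by evaluating $Sol(\mathit{SYS}_0,\cdot)$ at the environment $\rho[X_1 \mapsto S_1(\rho)]$, respectively $\rho[X_1 \mapsto S_1'(\rho)]$; as these two environments are equal, the remaining components agree as well. Therefore $Sol(\mathit{SYS},\rho) = Sol(\mathit{SYS}',\rho)$ componentwise for every $\rho$, and in particular the components at the common principal variable agree, giving $\trueset{\mathit{SYS},\rho} = \trueset{\mathit{SYS}',\rho}$ and hence equivalence of the two systems.

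I expect the only real subtlety, rather than a genuine obstacle, to be the environment bookkeeping: one must check that instantiating the hypothesis at $\rho[X_1 \mapsto S]$ correctly matches the environment $\overline{\rho[X_1 \mapsto S]}^{\mathit{SYS}_0}$ appearing in the solution formula, and that replacing a single equation leaves the solution of $\mathit{SYS}_0$ genuinely untouched, which relies on $F$ and $F'$ ranging over the same variable set so that $\mathit{SYS}_0$ is the same system in both cases.
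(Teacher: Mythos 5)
Your proposal is correct and follows essentially the same route as the paper: reorder via Lemma~\ref{lem:systems:reorder} so the replaced equation comes first, write the first component of $Sol$ as the union of postfixed points over $\overline{\rho[X_1\mapsto S]}^{\mathit{SYS}_0}$, and instantiate the universally quantified hypothesis at those environments to identify the two unions. Your final paragraph propagating the equality to the remaining components of $Sol$ is a small extra step that the paper leaves implicit (it only computes the principal component), and it is a welcome addition since the replaced equation need not be the one defining the principal variable.
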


\begin{proof}
	Because of Lemma \ref{lem:systems:reorder}, it suffices to prove 
	the lemma when replacing the equation of the primary variable $X_1 = F_1$ by $X_1 = F_1'$.
	Then, 
	\begin{align*}
	\trueset{\mathit{SYS},\rho} &=\\ 
	&=
	 \bigcup \{ S_1 \mid S_1 \subseteq \trueset{ F_1,\overline{\rho[X_1 \mapsto S_1]}^{\mathit{SYS}_0 }} \}
	 \\
	 &=
	 \bigcup \{ S_1 \mid S_1 \subseteq \trueset{ F'_1,\overline{\rho[X_1 \mapsto S_1]}^{\mathit{SYS}_0} } \}
	\\ &= 
	\trueset{\mathit{SYS'},\rho}.
	\end{align*}
\qed\end{proof}

We now extend the notion of standard and deterministic forms to systems of equations. 

\begin{definition}
\label{def:std_form}
    Let $\mathit{SYS}$ be a system of equations that is equivalent to some formula $\varphi\in\shml$. We say that an equation, $X_i=F_i$ is in standard form if either $F_i = \texttt{ff}$, or
    \[
        F_i = \bigwedge_{j\in K_i}[\alpha_j]X_j\land\bigwedge_{j\in S_i}Y_j
    \]
    for some finite set of indices, $K_i$ and $S_i$.
    We say that $\mathit{SYS}$ is in standard form if every equation in $\mathit{SYS}$ is in standard form.
\end{definition}

\begin{lemma}
    \label{lemma:formula_to_std_sys}
    For each formula $\varphi\in\shml$, there exists a system of equations that is equivalent to $\varphi$ and is in standard form.
\end{lemma}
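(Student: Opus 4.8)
The plan is to prove the statement by structural induction on $\varphi$, building for each $\varphi$ a standard-form system $\sys_\varphi$ whose principal variable solves to $\trueset{\varphi,\rho}$ for every $\rho$, while keeping the free variables of $\varphi$ among the free variables $\mathcal{Y}$ of $\sys_\varphi$ and keeping bound variables fresh, so that distinct subsystems share no bound variable. The observation that makes every case collapse to a single \emph{flat} equation is that the standard form of Definition \ref{def:std_form} permits bare, unboxed variable conjuncts (as witnessed by $\sys_e$, whose equation $X_1 = [\alpha]\texttt{ff} \wedge X_0$ contains the bound variable $X_0$ unguarded). Consequently each syntactic constructor of $\shml$ can be mirrored by introducing one new principal variable, defined by a single standard-form equation, on top of the systems obtained for the immediate subformulae.

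Concretely, I would set $\sys_{\texttt{tt}}$ and $\sys_{\texttt{ff}}$ to the one-equation systems $X_0 = \texttt{tt}$ (the empty conjunction) and $X_0 = \texttt{ff}$, and $\sys_{X}$ to $X_0 = X$ with $X$ free. For $[\alpha]\varphi'$ I would take $\sys_{\varphi'}$ with principal $X'$ and prepend the equation $X_0 = [\alpha]X'$; for $\varphi_1 \wedge \varphi_2$ I would take $\sys_{\varphi_1}$ and $\sys_{\varphi_2}$, rename their bound variables apart (but keep their free variables shared), union the equations, and prepend $X_0 = X_1' \wedge X_2'$ for the two principals $X_1', X_2'$; for $\max X.\varphi'$ I would take $\sys_{\varphi'}$ (in which $X$ is free), prepend the equation $X = X'$, and reclassify $X$ as bound. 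Each prepended right-hand side is patently in standard form. For the box and recursion cases correctness is immediate: reordering the new equation to the front (Lemma \ref{lem:systems:reorder}) and unfolding the recursive definition of $Sol$ gives, in the recursion case,
\begin{align*}
\trueset{\sys_{\max X.\varphi'},\rho}
&= \bigcup\left\{ S \mid S \subseteq \trueset{X',\overline{\rho[X\mapsto S]}^{\sys_{\varphi'}}} \right\} \\
&= \bigcup\left\{ S \mid S \subseteq \trueset{\varphi',\rho[X\mapsto S]} \right\}
= \trueset{\max X.\varphi',\rho},
\end{align*}
the middle equality being the induction hypothesis; the box case is the same computation with the constant map $S \mapsto \trueset{[\alpha]\varphi',\rho}$ (since the fresh $X_0$ does not occur on the right).

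The step I expect to be the main obstacle is the conjunction case, whose correctness is not purely local: I must argue that, in the merged system, the solution assigned to the bound variables coming from $\sys_{\varphi_1}$ coincides with their solution in $\sys_{\varphi_1}$ taken in isolation (and symmetrically for $\varphi_2$), so that $\trueset{X_i',\rho[\sys]} = \trueset{\varphi_i,\rho}$ and hence $\trueset{\sys,\rho} = \trueset{\varphi_1,\rho}\cap\trueset{\varphi_2,\rho}$. This non-interference holds because the two blocks of equations are variable-disjoint apart from the shared free variables, and each equation mentions only its own block together with $\mathcal{Y}$; to turn this into a proof I would invoke the simultaneous-fixed-point Lemma \ref{lem:systems:simultaneous} to compute the $\sys_{\varphi_1}$-block as one simultaneous fixed point while treating the $\sys_{\varphi_2}$-block as free (and conversely), noting that neither block's right-hand sides depend on the other's variables. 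The remaining work is routine bookkeeping: maintaining freshness of bound variables across the induction and using Lemma \ref{lem:replacement_lemma} to justify any local rewriting of equations. Finally, although Lemma \ref{lemma:top_lvl_var} is available — one could first normalise $\varphi$ so that its free and unguarded variables sit at the top level — I do not expect it to be strictly necessary for this route, since every $\shml$ operator is monotone and so the unguarded greatest fixed points produced by the recursion case (e.g.\ a bare self-reference $X = \cdots \wedge X$) are already well defined and computed correctly by $Sol$.
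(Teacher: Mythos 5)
Your construction is semantically sound, but it does not produce systems in \emph{standard form}, so the lemma as stated is not established. The crux is your reading of Definition \ref{def:std_form}: in $F_i = \bigwedge_{j\in K_i}[\alpha_j]X_j\land\bigwedge_{j\in S_i}Y_j$ the bare conjuncts $Y_j$ are \emph{free} variables of the system (elements of $\mathcal{Y}$); bound variables may occur only guarded, under a box. The example $\sys_e$ you cite as a witness is introduced before standard form is defined and is never claimed to be in standard form (its equation $X_1=[\alpha]\false\wedge X_0$ also contains the non-variable box $[\alpha]\false$); the standard-form system the paper actually produces for $\varphi_e$ is $\sys'_e$, in which the self-reference has been pushed under a box ($X_1=[\alpha]X_2\wedge[\alpha]X_1$). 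Under the intended reading, your equations $X_0 = X_1'\wedge X_2'$ (conjunction case) and $X=X'$ (recursion case) are not in standard form; worse, after you reclassify $X$ as bound in the recursion case, any equation of $\sys_{\varphi'}$ in which $X$ occurred as a bare conjunct now contains an unguarded \emph{bound} variable. This is not cosmetic: the determinization step (Lemma \ref{lemma:det_system}) collects bound variables only through the sets $D(i,\alpha)$ of variables occurring under boxes and collects unguarded occurrences only of free variables (the set $E(i)$), so unguarded bound variables would be silently dropped, and the resulting system would also violate the deterministic form, which forbids non-boxed conjuncts outside $\mathcal{Y}$.

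Your closing remark that Lemma \ref{lemma:top_lvl_var} is ``not strictly necessary'' conflates well-definedness of $Sol$ (which monotonicity indeed guarantees) with membership in the syntactic normal form that the next lemma depends on. To close the gap you need essentially what the paper does: first normalize so that in every $\max X.\psi$ the free unguarded variables of $\psi$ sit at top level and $X$ itself is not unguarded (Lemma \ref{lemma:top_lvl_var} together with $\max X.(\vartheta\wedge X)\equiv\max X.\vartheta$); in the conjunction case set $X_0 = F_1\wedge G_1$, i.e.\ conjoin the \emph{right-hand sides} of the two principal equations and re-flatten into the shape $\bigwedge_j[\alpha_j]X_j\wedge\bigwedge_j Y_j$; in the recursion case add $Y=F_1$ (again copying the right-hand side rather than referencing the principal variable) and then substitute $F_1$ for every remaining unguarded occurrence of the newly bound $Y$ in the other equations, justified by Lemma \ref{lem:replacement_lemma}, re-flattening each time. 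Your base cases, your $[\alpha]\varphi'$ case, your semantic computation for $\max$, and your non-interference argument for merging variable-disjoint blocks via Lemma \ref{lem:systems:simultaneous} are all fine and match the paper.
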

    \begin{proof}
        We use structural induction to show how we can construct a system of equations from a formula $\varphi$ that is in standard form. As shown in Lemma \ref{lemma:top_lvl_var}, given a formula $\vartheta\in\shml$ we can define an equivalent formula $\vartheta'$ where each free and unguarded variable is at the top level. We can therefore assume that for each fixed point $\max X.\psi$ that occurs as a subformula in $\varphi$, each free and unguarded variable in $\psi$ is at the top level of $\psi$ and using the equivalence $\max X.(\vartheta\land X) \equiv \max X.\vartheta$, we can also assume that $X$ does not appear at the top level of $\psi$.
        Furthermore, we assume that if $\varphi_1 \wedge \psi_2$ appears as a subformula of $\varphi$, then there is no variable which is free in $\varphi_1$ and bound in $\varphi_2$ (or vice-versa).

        We now go through the base cases and each of the top level operators that can occur in $\varphi$.

        \begin{description}
        \item[$\varphi=\texttt{tt}$:]
            We define a system of equations $\mathit{SYS}=(\{X=\texttt{tt} \land \texttt{tt}\}, X, \emptyset)$. Since $\bigwedge_{j\in\emptyset}\vartheta_j\equiv\texttt{tt}$, $\mathit{SYS}$ is in standard form and is equivalent to $\varphi$.

        \item[$\varphi=\texttt{ff}$:]
            We define a system of equations $\mathit{SYS}=(\{X=\texttt{ff}\}, X, \emptyset)$. $\mathit{SYS}$ is in standard form and is equivalent to $\varphi$.

        \item[$\varphi = Y$:]
            We define a system of equations $\mathit{SYS}=(\{X=Y \wedge \texttt{tt}\}, X, \{Y\})$. $\mathit{SYS}$ is in standard form and is equivalent to $\varphi$.

        \item[{$\varphi = [\alpha]\psi$}:]
            By the induction hypothesis, there exists a system of equations, $\mathit{SYS}=(Eq, X_1, \mathcal Y)$ that is equivalent to $\psi$ and is in standard form.
            
            We define a new system of equations
            \[
                \mathit{SYS}' = (Eq\cup\{X_0=[\alpha]X_1 \wedge \texttt{tt}\}, X_0, \mathcal Y)
            \]
            Each equation from $\mathit{SYS}'$ is in standard form and so $\mathit{SYS}'$ is in standard form. Since $\mathit{SYS}$ is equivalent to $\psi$ and $X_0$ does not appear in $\mathit{SYS}$ (so the first equation does not affect the fixed-point calculations), 
            \begin{align*}
            \llbracket \mathit{SYS}', \rho \rrbracket &= \\
            &=
            \trueset{[\alpha]X_1,\rho[X_1 \mapsto \trueset{\mathit{SYS},\rho}]}
            \\
            &=\{p \mid p \xRightarrow{\alpha} q \text{ implies } q \in \llbracket \mathit{SYS}, \rho \rrbracket \} \\
            &= \{p \mid p \xRightarrow{\alpha} q \text{ implies } q \in \llbracket \psi, \rho \rrbracket \} \\
            &= \llbracket [\alpha] \psi , \rho \rrbracket , 
            \end{align*}
			so $\mathit{SYS}'$
            is equivalent to $[\alpha]\psi$, which  is  $\varphi$.

        \item[{$\varphi = \psi_1\land\psi_2$}:]
            By the induction hypothesis, we know that there exist systems of equations $\mathit{SYS}_1=(Eq_1, X_1, \mathcal Y_1)$ and $\mathit{SYS}_2=(Eq_2, Z_1, \mathcal Y_2)$ that are equivalent to $\psi_1$ and $\psi_2$ respectively and are in standard form. Let $X_1=F_1$ be the principal equation from $\mathit{SYS}_1$ and let $Z_1=G_1$ be the principal equation from $\mathit{SYS}_2$.

            We define a new system of equations
            \[
                \mathit{SYS} = (Eq, X_0, \mathcal Y_1\cup\mathcal Y_2),
            \]
            where
            \[
                Eq = Eq_1\cup Eq_2\cup \{X_0=F_1\land G_1\}.
            \]
            Again, $X_0$ does not appear in $\mathit{SYS}_1$ or $\mathit{SYS}_2$, so it does not play a part in the fixed-point calculation; furthermore, for $X_i = F_i$ an equation in $Eq_1$, $X_i \notin \mathcal{Y}_2$, i.e. $X_i$ cannot be a free variable (or appear at all) in $\mathit{SYS}_2$ and vice-versa. Therefore, for $i = 1,2$ and $X_j = F_j$ an equation in $Eq_1 \cup Eq_2$, $\trueset{\mathit{SYS}_i,\rho[X_j \mapsto S_j]} = \trueset{\mathit{SYS}_i,\rho}$ (that is, $\rho(X_j)$ does not affect the computation of $\trueset{\mathit{SYS}_i,\rho}$) and therefore $\rho[\sys_1][\sys_2] = \rho[\sys_2][\sys_1]$ and for $i = 1,2$ and $j = 3-i$,
            \begin{equation}
            \trueset{ \sys_i,\overline{\rho[\mathit{SYS_i}]}^\mathit{SYS_{j}} } = 
            \trueset{ \sys_i,\overline{\rho[\mathit{SYS_{j}}]}^\mathit{SYS_{i}} }
            =
            \trueset{ \sys_i,\rho[\mathit{SYS_i}] }
            .
            \label{eq:and_eq_systems_independent}
            \end{equation}
            Finally,
            \begin{align*}
            \trueset{\mathit{SYS},\rho} &= \\
            &=
			\bigcup \left\{ S_0 \mid S_0 \subseteq \trueset{ F_1 \wedge G_1,\overline{\overline{\rho[X_0 \mapsto S_0]}^\mathit{SYS_1}}^\mathit{SYS_2} }   \right\}         
            \\
            &=
            \trueset{ F_1 \wedge G_1,\overline{\rho[\mathit{SYS_1}]}^\mathit{SYS_2} }     
            \ \ \quad  {(X_0 \text{ does not appear anywhere})}
            \\
            &=
            \trueset{ F_1,\overline{\rho[\mathit{SYS_1}]}^\mathit{SYS_2} }     
            \cap 
            \trueset{ G_1,\overline{\rho[\mathit{SYS_1}]}^\mathit{SYS_2} } 
            \\
            &=
            \trueset{ \mathit{SYS}_1,\rho[\mathit{SYS_2}] }     
            \cap 
            \trueset{ \mathit{SYS}_2,\rho[\mathit{SYS_1}] } 
            \ \ \quad 
            {(\text{by \eqref{eq:and_eq_systems_independent}})}
            \\
            &=
            \trueset{\mathit{SYS}_1,\rho} \cap \trueset{\mathit{SYS}_2,\rho}.
            \end{align*}
            Since $\mathit{SYS}_1$ is equivalent to $\psi_1$ and $\mathit{SYS}_2$ is equivalent to $\psi_2$, $\mathit{SYS}$ is equivalent to $\varphi=\psi_1\land\psi_2$.

            Both $X_1=F_1$ and $Z_1=G_1$ are in standard form and so we can write them as
            \begin{align*}
                F_1 &= \bigwedge_{j\in K_1}[\alpha_j]X_j\land\bigwedge_{j\in S_1}Y_j \\
                G_1 &= \bigwedge_{j\in K_1'}[\alpha_j]Z_j\land\bigwedge_{j\in S_1'}Y_j 
            \end{align*}

            This allows us to rewrite the equation for $X_0$ as follows:
            \begin{align*}
                X_0 &= F_1\land G_1 \\
                    &= \bigwedge_{j\in K_1}[\alpha_j]X_j\land\bigwedge_{j\in S_1}Y_j \land
                       \bigwedge_{j\in K_1'}[\alpha_j]Z_j\land\bigwedge_{j\in S_1'}Y_j \\
                    &\equiv \left(\bigwedge_{j\in K_1'}[\alpha_j]X_j\land\bigwedge_{j\in K_1'}[\alpha_j]Z_j\right)
                       \land\bigwedge_{j\in S_1\cup S_1'}Y_j 
            \end{align*}
            Now $\mathit{SYS}$ is in standard form and is equivalent to $\varphi$.

        \item[{$\varphi = \max Y.\psi$}:]
            By the induction hypothesis, there exists a system of equations $\mathit{SYS}=(Eq, X_1, \mathcal Y)$ that is equivalent to $\psi$ and is in standard form.

                If $\psi$ does not contain $Y$, then $\varphi\equiv\psi$ which means $\varphi$ is equivalent to $\mathit{SYS}$ and we are done.

            If $\psi$ does contain $Y$ then we define a new system of equation:
            \[
                \mathit{SYS}' = (Eq\cup \{Y=F_1\}, Y, \mathcal Y\setminus\{Y\})
            \]
            where $X_1 = F_1$ appears in $\mathit{SYS}$. 
            
            Let $\rho$ be an environment. Then, 

            \begin{align*}
            \trueset{\varphi,\rho} &= \\
            &=\bigcup \{S_0 \mid S_0 \subseteq \trueset{\psi,\rho[X_0 \mapsto S_0]} \}\\
            &=
            \bigcup \{S_0 \mid S_0 \subseteq \trueset{\mathit{SYS},\rho[X_0 \mapsto S_0]} \}
            \\
            &=
            \bigcup \{ S_0 \mid S_0 \subseteq \trueset{ F_1,\overline{\rho[X_0 \mapsto S_0]}^\sys  } \}
            \\
            &=
            \trueset{\mathit{SYS}',\rho}.
            \end{align*}

            By our assumption that for each maximum fixed point $\max X.\psi$ in $\varphi$, $X$ does not appear unguarded in $\psi$, we know that $Y$ does not appear unguarded in $F_1$.
            
            However in general we cannot guarantee that $Y$ does not appear unguarded in the equations from $\mathit{SYS}$, since $Y\in\mathcal Y$. To overcome this, we replace each unguarded occurrence of $Y$ with its corresponding formula $F_1$. Let $X_i=F_i$ be a formula that contains an unguarded occurrence of $Y$. Since $X_i=F_i$ is in standard form in $\mathit{SYS}$, we have
            \begin{align*}
                X_i &= \bigwedge_{j\in K_i}[\alpha_j]X_j\land\bigwedge_{j\in S_i}Y_j \\
                    &\equiv \bigwedge_{j\in K_i}[\alpha_j]X_j\land\bigwedge_{j\in S_i\setminus\{t\}}Y_j~~~\land Y_t 
            \end{align*}
            where $Y=Y_t$. We now change the equation for $X_i$ by replacing the unguarded occurrence of $Y_t$ with $F_1$ (by Lemma \ref{lem:replacement_lemma}):
            \begin{align*}
                X_i &= \bigwedge_{j\in K_i}[\alpha_j]X_j\land\bigwedge_{j\in S_i\setminus\{t\}}Y_j~~~\land F_1 \\
                    &= \bigwedge_{j\in K_i}[\alpha_j]X_j\land\bigwedge_{j\in S_i\setminus\{t\}}Y_j\land 
                       \bigwedge_{j\in K_1}[\alpha_j]X_j\land\bigwedge_{j\in S_1}Y_j \\
                    &\equiv \bigwedge_{j\in K_i\cup K_1}[\alpha_j]X_j\land\bigwedge_{j\in (S_i\setminus\{j\})\cup S_1}Y_j 
            \end{align*}
            (for simplicity, assume $K_1$, $K_i$ are disjoint) and the $i$'th equation is in standard form.

            Since $X_1 = F_1$ is in standard form in $\mathit{SYS}$, $Y = F_1$ is in standard form in $\mathit{SYS}'$. For all other equations from $\mathit{SYS}$, we can define equivalent equations that are in standard form in $\mathit{SYS}'$ by replacing every unguarded occurrence of $Y$ with $F_1$. All equations in $\mathit{SYS}'$ are now in standard form and since $\mathit{SYS}'$ is equivalent to $\varphi$, this case holds.
			\qed
        \end{description}
   \end{proof}

\begin{examp}
	The system of equations which which results from the construction of Lemma \ref{lemma:formula_to_std_sys} for formula $\varphi_e = \max\ X.[\alpha]([\alpha]\false \wedge X)$ is $\sys'_e$, which has no free variables and includes the equations 
	\begin{align*}
	X &= [\alpha] X_1, \\
	X_1 &= [\alpha] X_2 \wedge [\alpha] X_1,\\
	X_2 &= \false ,
	\end{align*} 
	where $X$ is the principal variable of $\sys'_e$.
\end{examp}

\begin{definition}
    Let $\mathit{SYS}=(Eq, X_1, \mathcal Y)$ be a system of  equations  equivalent to a formula in $\shml$. We say that an equation $X=\bigwedge F$ in $Eq$ is in deterministic form iff:
    for each pair of expressions $F_1, F_2 \in F \setminus \mathcal{Y}$,
    it must be the case that $ F_1 = [\alpha_1]X_i$ and $F_2 = [\alpha_2]X_j$ for some $\alpha_1,\alpha_2 \in \mycap{Act}$ and if $X_i \neq X_j$, then $\alpha_1 \neq \alpha_2$.
    We say that $\mathit{SYS}$ is in deterministic form if every equation in $Eq$ is in deterministic form.
\end{definition}

\begin{lemma}
    \label{lemma:det_system}
    For each $\shml$ system of equations in standard form, there exists an equivalent system of equations that is in deterministic form.
\end{lemma}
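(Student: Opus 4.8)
The plan is to adapt the classical subset (powerset) construction used to determinize NFAs, viewing a standard-form system as an ``automaton'' whose states are the variables $X_1,\dots,X_n$ and whose $\alpha$-transitions out of $X_i$ lead to each $X_j$ for which $[\alpha]X_j$ is a conjunct of $F_i$. Nondeterminism is exactly an equation containing two conjuncts $[\alpha]X_a$ and $[\alpha]X_b$ with $X_a\neq X_b$, and the tool to remove it is the semantic equivalence $[\alpha]\theta_1\land[\alpha]\theta_2\equiv[\alpha](\theta_1\land\theta_2)$ (an immediate consequence of the definition of $\trueset{[\alpha]\cdot}$), which lets us replace several same-action boxes by a single box whose target denotes the intersection of the old targets. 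Write the standard-form equations as $X_i=\bigwedge_{j\in K_i}[\alpha_j]X_j\land\bigwedge_{j\in S_i}Y_j$ (or $X_i=\texttt{ff}$), and let $Sol(\sys,\rho)=(\sigma_1,\dots,\sigma_n)$.

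First I would carry out the construction. For every nonempty $I\subseteq\{1,\dots,n\}$ introduce a fresh variable $X_I$, intended to satisfy $\trueset{X_I}=\bigcap_{i\in I}\trueset{X_i}$. Put $T^I_\alpha=\{\,j\mid [\alpha]X_j\text{ is a conjunct of }F_i\text{ for some }i\in I\,\}$ and $U_I=\bigcup_{i\in I}S_i$. If $F_i=\texttt{ff}$ for some $i\in I$, set $X_I=\texttt{ff}$; otherwise set
\[
X_I \;=\; \bigwedge_{\alpha:\,T^I_\alpha\neq\emptyset}[\alpha]X_{T^I_\alpha}\;\land\;\bigwedge_{t\in U_I}Y_t .
\]
Let the new system $\sys^d$ have principal variable $X_{\{1\}}$ and free variables $\mathcal Y$; it has at most $2^n-1$ equations, so it is a legitimate (finite) system. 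Each equation is either $\texttt{ff}$ or a conjunction of boxes and free variables, so $\sys^d$ is in standard form; and since at most one box per action occurs in each equation, distinct box-conjuncts carry distinct actions, so $\sys^d$ is in deterministic form.

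It then remains to prove $\trueset{\sys^d,\rho}=\trueset{\sys,\rho}$ for every $\rho$, i.e. $(\nu\Phi)_{\{1\}}=\sigma_1$, where $\Phi$ is the monotone operator on indexed tuples whose greatest fixed point is $Sol(\sys^d,\rho)$ (as characterized by Lemma~\ref{lem:systems:simultaneous}) and $\nu\Phi$ denotes that fixed point. I would establish two inclusions, each by exhibiting a post-fixed point. For $\sigma_1\subseteq(\nu\Phi)_{\{1\}}$, I would check that the tuple $\big(\bigcap_{i\in I}\sigma_i\big)_I$ is a (post-)fixed point of $\Phi$: unfolding each $F_i$ via $\trueset{X_i,\rho[\sys]}=\trueset{F_i,\rho[\sys]}$, grouping the box-conditions by action and applying the distribution equivalence turns $\bigcap_{i\in I}\trueset{F_i}$ into exactly $\trueset{F_I}$, so the tuple lies below $\nu\Phi$ componentwise, and the $I=\{1\}$ component gives the inclusion. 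For the converse $(\nu\Phi)_{\{1\}}\subseteq\sigma_1$, I would show that the singleton components $R_i:=(\nu\Phi)_{\{i\}}$ form a post-fixed point of the original operator $\Psi$ of $\sys$, whence $(R_i)_i\subseteq(\sigma_i)_i$ and in particular $(\nu\Phi)_{\{1\}}\subseteq\sigma_1$.

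The main obstacle is this last step, because verifying $R_j\subseteq\trueset{F_j,\rho[\mathbb X\mapsto(R_i)]}$ reduces, after grouping $F_j$'s boxes by action, to the inclusion $(\nu\Phi)_{T^{\{j\}}_\alpha}\subseteq\bigcap_{k\in T^{\{j\}}_\alpha}(\nu\Phi)_{\{k\}}$, i.e. to an \emph{antitonicity} property of the determinized solution: if $\emptyset\neq J'\subseteq J$ then $(\nu\Phi)_J\subseteq(\nu\Phi)_{J'}$ (a larger index set denotes a smaller set). I would prove this by induction on the ordinal approximants $\Phi^\lambda$ of $\nu\Phi$: at successors one uses $T^{J'}_\alpha\subseteq T^J_\alpha$, $U_{J'}\subseteq U_J$, monotonicity of $[\alpha](\cdot)$ and of $\bigcap$, together with the inductive antitonicity at level $\lambda$ applied to the targets $T^J_\alpha$; the $\texttt{ff}$ case is immediate since any dead index in $J'$ is also in $J$, and limits are handled by intersection. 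Once antitonicity is in place, both post-fixed-point checks go through routinely, giving $\sys^d\equiv\sys$. (By Lemma~\ref{lem:systems:reorder} the choice of principal index plays no special role, and Lemma~\ref{lem:replacement_lemma} can be invoked if one prefers to present the merging as successive equation replacements rather than a single global definition.)
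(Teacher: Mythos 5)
Your proposal is correct, and the core construction is the same as the paper's: both perform a subset construction on the variables, introducing $X_I$ (the paper's $X_Q$) intended to denote $\bigcap_{i\in I}\trueset{X_i}$, with the new right-hand sides obtained by merging same-action boxes via $[\alpha]\theta_1\land[\alpha]\theta_2\equiv[\alpha](\theta_1\land\theta_2)$. Where you genuinely diverge is in how the equivalence $\trueset{\sys^d,\rho}=\trueset{\sys,\rho}$ is certified. The paper proceeds incrementally: it keeps the original equations, adds subset equations only as needed, and replaces one conjunction-under-a-box at a time, proving each single replacement sound by a two-inclusion argument on the local greatest fixed point; the delicate direction there ($B\subseteq\trueset{F_Q,\rho[\sys']}$, for the ``self-loop'' actions with $D(Q,\alpha)=Q$) is handled coinductively, by exhibiting $B^{|Q|}$ as a post-fixed point of the simultaneous operator for $\bigtimes_{i\in Q}F_i$ via Lemma~\ref{lem:systems:simultaneous}. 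You instead build the whole determinized system at once and compare the two greatest fixed points globally, exhibiting $\bigl(\bigcap_{i\in I}\sigma_i\bigr)_I$ as a post-fixed point of $\Phi$ for one inclusion and the singleton components of $\nu\Phi$ as a post-fixed point of $\Psi$ for the other; the key fact you need, the antitonicity $(\nu\Phi)_J\subseteq(\nu\Phi)_{J'}$ for $\emptyset\neq J'\subseteq J$, is exactly the role played by the paper's hard direction, but you prove it by transfinite induction on the approximants $\Phi^\lambda$ rather than coinductively. Both arguments are sound and rest on the same prerequisites (Lemmata~\ref{lem:systems:reorder} and~\ref{lem:systems:simultaneous}, plus standard Knaster--Tarski facts); your global formulation avoids maintaining the paper's invariant ``condition 2'' across replacement steps and isolates the combinatorial content in a single clean lemma, while the paper's incremental version only ever introduces the subset variables that are actually reachable and stays closer to a syntactic rewriting procedure.
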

    \begin{proof}
        Let $\mathit{SYS}=(Eq, X_1, \mathcal Y)$ be a system of $n$ equations in standard form that is equivalent to a formula $\varphi\in\shml$ and 
        $$Eq = (X_1 = F_1,\ X_2=F_2,\ldots ,\ X_n = F_n) .$$
        
        We define some useful functions:
        \begin{align*}
            S(i)        &= \{\alpha_j~|~\text{$[\alpha_j]X_j$ is a sub formula in $F_i$}\}  \\
            D(i,\alpha) &= \{r~|~\text{$[\alpha]X_r$ is a sub formula in $F_i$}\}           \\
            E(i)        &= \{r~|~\text{$Y_r$ is unguarded in $F_i$}\}                       \\
        \end{align*}

        We also define these functions for subsets of indices $Q\subseteq\{1,2,\dots,n\}$:
        \begin{align*}
            S(Q)        &= \bigcup_{i\in Q}S(i)   ,      \\
            D(Q,\alpha) &= \bigcup_{i\in Q}D(i,\alpha) , \text{ and} \\
            E(Q)        &= \bigcup_{i\in Q}E(i)         .
        \end{align*}

        For each equation $X_i = F_i$ where $F_i\neq\texttt{ff}$, using these functions we can rewrite the equation as follows:
        \[
            X_i = \bigwedge_{\alpha\in S(i)}\left([\alpha]\bigwedge_{j\in D(i,\alpha)}X_j\right)\land\bigwedge_{j\in E(i)}Y_i 
        \]

        This equation may not be in deterministic form since it contains the conjunction of variables $\bigwedge_{j\in D(i,\alpha)}X_j$ and $D(i,\alpha)$ may not be a singleton.
        To fix this, we define a new variable $X_Q$ for each subset $Q\subseteq\{1,2,\dots,n\}$, such that $X_Q$ behaves like $\bigwedge_{j\in Q}X_j$; we identify $X_{\{i\}}$ with $X_i$.

         If for any $j\in Q$ we have $F_i = \texttt{ff}$ then $\bigwedge_{j\in Q}F_j \equiv \texttt{ff}$ and the equation for $X_Q$ is $X_Q=\texttt{ff}$. Otherwise, the  equation for $X_Q$ is:
        \[
        X_Q = \bigwedge_{\alpha\in S(Q)}\left([\alpha] \bigwedge_{i \in D(Q,\alpha)} X_i \right)\land \bigwedge_{j\in E(Q)}Y_j \ \ (=F_Q).
        \]
        
         We apply the following steps, each of which preserves the two conditions:
        \begin{enumerate}
        	\item the resulting system is equivalent to the original system and
        	\item for every equation $X_A = F'_A$, where $A \subseteq \{1,2,\ldots,n\}$, in the current system $\sys$ and environment $\rho$,  $\trueset{X_A,\rho[\sys]} = \trueset{F_A,\rho[\sys]}$. Note that it may be the case that $F'_A \neq F_A$, as we may have replaced $F_A$ in the original equation for $X_A$ by $F'_A$. This condition assures us that it doesn't matter, because $F_A$ and $F'_A$ are semantically  equivalent.
        \end{enumerate}
    Notice that condition 2 implies that $\trueset{X_A,\rho[\sys]} = \trueset{\bigwedge_{i \in A}X_i,\rho[\sys]}$.
        
        Consider an equation 
        \[
        X_Q = \bigwedge_{\alpha\in S(Q)}\left([\alpha]\bigwedge_{i\in D(Q,\alpha)} X_i\right)\land\bigwedge_{i\in E(Q)}Y_i ,
        \]
        which is already in the system and is not in deterministic form. If there is no equation for $X_{D(Q,\alpha)}$ in the system, then we introduce the equation for $X_{D(Q,\alpha)}$ as defined above (this gives an equivalent system, because $X_{D(Q,\alpha)}$ does not appear in any other equation if its own equation is not already in the system and condition 2 is preserved trivially).
        
        Let $S_1(Q) \cup S_2(Q) = S(Q)$ be such that for every $\alpha \in S_1(Q)$, $Q = D(Q,\alpha)$ and for every $\alpha \in S_2(Q)$, $Q \neq D(Q,\alpha)$. 
        Then,
		 let $\sys_0$ be the result of removing the equation for $X_Q$ from the system (and adding $X_Q$ to the free variables)   and $\sys'$ the result of replacing it by 
        \[
        X_Q = \bigwedge_{\alpha\in S(Q)}[\alpha]{X_{ D(Q,\alpha)}} \land\bigwedge_{j\in E(Q)}Y_i 
        \ \ (= F'_Q)
        .
        \]
        
        We claim that $\sys$ and $\sys'$ are equivalent  and after proving this claim we are done, because we can repeat these steps until all equations are in deterministic form and we are left with an equivalent deterministic system.
        To prove the claim, it is enough to prove that for every environment $\rho$, $\trueset{X_Q,\rho[\sys]} = \trueset{X_Q,\rho[\sys']}$ (by Lemma \ref{lem:systems:reorder}).
        Equivalently, we show that $A = B$, where
        \begin{align*}
        A &= \bigcup \left\{ S \mid S \subseteq \trueset{F_Q, \overline{\rho[X_Q \mapsto S]}^{\sys_0}} \right\}
        = \trueset{X_Q,\rho[\sys]}
        \\
        &\text{and}
        \\
        B &= \bigcup \left\{ S \mid S \subseteq \trueset{F'_Q, \overline{\rho[X_Q \mapsto S]}^{\sys_0}} \right\} = \trueset{X_Q,\rho[\sys']}.
        \end{align*}

        Thus, it suffices to prove that $A \subseteq B$ and $B \subseteq A$. Notice that
        \begin{align*}
        A &= \trueset{F_Q, \overline{\rho[X_Q \mapsto A]}^{\sys_0}} , \\
        B &= \trueset{F'_Q, \overline{\rho[X_Q \mapsto B]}^{\sys_0}} ,
        \end{align*}
        and that $\overline{\rho[X_Q \mapsto A]}^{\sys_0} = \rho[\sys]$ and 
        $\overline{\rho[X_Q \mapsto B]}^{\sys_0} = \rho[\sys']$. Therefore, it suffices to prove:
        \begin{align*}
        A & \subseteq 
        \trueset{F'_Q, \rho[\sys]} \\
        \text{and }\  B & \subseteq   \trueset{F_Q, \rho[\sys']} .
        \end{align*}
        For the first direction,
        \begin{align*}
        \trueset{F'_Q,\rho[\sys]} &= \\
        &= \trueset{\bigwedge_{\alpha\in S(Q)}\left([\alpha]X_{ D(Q,\alpha)}\right)\land\bigwedge_{j\in E(Q)}Y_i,\rho[\sys]} 
        \\
        &=
        \trueset{\bigwedge_{\alpha\in S(Q)}\left([\alpha] \bigwedge_{i \in D(Q,\alpha)} X_i \right)\land \bigwedge_{j\in E(Q)}Y_j,\rho[\sys]}\\
        &
        \text{(because of preserved condition 2)}
        \\
        &=
        \trueset{F_Q,\rho[\sys]} = A.
        \end{align*}
        
        On the other hand,
        \begin{align*}
                &        \trueset{F_Q,\rho[\sys']} = \\
        &= 
        \trueset{\bigwedge_{\alpha\in S(Q)}\left([\alpha] \bigwedge_{i \in D(Q,\alpha)} X_i \right)\land \bigwedge_{j\in E(Q)}Y_j,\rho[\sys']}
        \\
        &= 
        \trueset{\bigwedge_{\alpha\in S_1(Q)}\left([\alpha] \bigwedge_{i \in Q} X_i \right)\land \bigwedge_{\alpha\in S_2(Q)}\left([\alpha] \bigwedge_{i \in D(Q,\alpha)} X_i \right)\land \bigwedge_{j\in E(Q)}Y_j,\rho[\sys']}
        \\
        &= 
        \trueset{\bigwedge_{\alpha\in S_1(Q)}\left([\alpha] \bigwedge_{i \in Q} X_i \right)\land \bigwedge_{\alpha\in S_2(Q)}[\alpha] X_{ D(Q,\alpha)} \land \bigwedge_{j\in E(Q)}Y_j,\rho[\sys']}
		,
        \end{align*}
        because if $Q \neq  D(Q,\alpha)$, then $\trueset{X_{ D(Q,\alpha)},\rho[\sys_0]} = \trueset{ \bigwedge_{i \in D(Q,\alpha)} X_i ,\rho[\sys_0]}$, by the preserved condition 2.
        If $Q$ is a singleton, then we are done, because then $ \bigwedge_{i \in Q} X_i = X_Q$ and the last expression is just $B$.
        Therefore, we assume $Q$ is not a singleton; so, for all $i \in Q$, $Q \neq \{i\}$ and thus, $\trueset{X_i,\rho[\sys']} = \trueset{F_i,\rho[\sys']}$.
		For convenience, let 
		\begin{align*}
		C =& \trueset{\bigwedge_{\alpha\in S_2(Q)}\left([\alpha] \bigwedge_{i \in D(Q,\alpha)} X_i \right)\land \bigwedge_{j\in E(Q)}Y_j,\rho[\sys']}\\
		=& \trueset{\bigwedge_{\alpha\in S_2(Q)}[\alpha] X_{ D(Q,\alpha)}\land \bigwedge_{j\in E(Q)}Y_j,\rho[\sys']}
		. 
		\end{align*}
		
		Then, let $\sys'_Q$ be $\sys'$ after removing the equations for all $X_i$, $i \in Q$ and inserting all  $X_i$, where $i \in Q$ in the set of free variables, $\mathcal{Y}$.
		\begin{align*}
		\trueset{F_Q,\rho[\sys']} 
		&= 
		\trueset{\bigwedge_{i \in Q} F_i ,\rho[\sys']}  
		\ \ 
		\text{ (by definition)}
			\\
		&= 
		\trueset{\bigwedge_{i \in Q} X_i ,\rho[\sys']}  
		\ \
		= 
		\bigcap_{i \in Q}\trueset{ X_i ,\rho[\sys']}  
		\\
		&= 
		\bigcap \bigtimes_{i \in Q}\trueset{ X_i ,\rho[\sys']}  
		= 
		\bigcap \trueset{ \bigtimes_{i \in Q}X_i ,\rho[\sys']}  
		\\
		&= 
		\bigcap \bigcup \left\{ \mathbb{T}  \mid \mathbb{T} \subseteq  \trueset{ \bigtimes_{i \in Q}F_i ,\overline{\rho\left[\bigtimes_{i \in Q}X_i \mapsto \mathbb{T}\right]}^{\sys'_Q}} \right\}
		,
		\end{align*}
		because of Lemma \ref{lem:systems:simultaneous}.
		Similarly as to how we analyzed $F_Q$,
				\begin{align}
		B = \trueset{F'_Q,\rho[\sys']} = 
		\trueset{\bigwedge_{\alpha\in S_1(Q)}[\alpha] X_{Q} ,\rho[\sys']} \cap C
		.
		\label{eq:Banalysed}
		\end{align}
		So, it suffices to prove that for $k = |Q|$, 
		$$B^k \subseteq \trueset{ \bigtimes_{i \in Q}F_i ,\overline{\rho\left[\bigtimes_{i \in Q}X_i \mapsto B^k\right]}^{\sys'_Q}}.$$ 

		Let $p = (p_1,\ldots,p_k) \in \trueset{F'_Q,\rho[\sys']}^k = B^k$. 
		By \eqref{eq:Banalysed}, $p \in C^k$. Therefore, to prove that
		$$p \in  \trueset{ \bigtimes_{i \in Q}F_i ,\overline{\rho\left[\bigtimes_{i \in Q}X_i \mapsto B^k\right]}^{\sys'_Q}},$$ 
		it suffices to prove that for all $1 \leq i \leq k$, 
		\[
			p_i \in \trueset{ \bigwedge_{\alpha \in S(i) \cap S_1(Q)} [\alpha] \bigwedge_{j \in D(i,\alpha)  }X_j ,\overline{\rho\left[\bigtimes_{i \in Q}X_i \mapsto B^k\right]}^{\sys'_Q}},
		\]
		or equivalently that for every $\alpha \in S(i) \cap S_1(Q)$ and  $j \in D(i,\alpha) \subseteq D(Q,\alpha)$,
		\begin{equation}
		p_i \in \trueset{  [\alpha] X_j ,\overline{\rho\left[\bigtimes_{i \in Q}X_i \mapsto B^k\right]}^{\sys'_Q}}.
		\label{eq:pi_in_smth}
		\end{equation}
		For any $\alpha \in S(i) \cap S_1(Q)$, $j \in D(i,\alpha) \subseteq D(Q,\alpha) = B$, and $p_i \xRightarrow{\alpha} q_i$, because $p_i \in B$ and because of \eqref{eq:Banalysed}, 
		$
		p_i \in \trueset{
			[\alpha] X_{Q} ,\rho[\sys']},
		$ so 
		$q_i \in \trueset{X_Q,\rho[\sys']} = B$. Therefore,
		$$q_i \in  \trueset{X_j,\overline{\rho\left[\bigtimes_{i \in Q}X_i \mapsto B^k\right]}^{\sys'_Q}} = B,$$
		which gives us \eqref{eq:pi_in_smth} and the proof is complete.
	\qed\end{proof}

\begin{examp}
	The deterministic form of $\sys'_e$ is 
		is $\sys''_e$, which has no free variables and includes the equations 
		\begin{align*}
		X &= [\alpha] X_1, \\
		X_1 &= [\alpha] X_{12} ,\\
		X_2 &= \false ,\\
		X_{12} &= \false,
		\end{align*} 
		where $X$ is the principal variable of $\sys'_e$.
\end{examp}

\begin{lemma}
    \label{lemma:det_formula}
    Let $\mathit{SYS}=(Eq, X_1, \mathcal Y)$ be a $\shml$ system of equations in deterministic form. There exists a formula $\varphi\in\shml$ that is in deterministic form and is equivalent to $\mathit{SYS}$.
\end{lemma}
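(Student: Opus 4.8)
The plan is to prove the lemma by induction on the number $n$ of equations in $\mathit{SYS}$, turning the system into a single nested-$\max$ formula by eliminating one variable at a time. To make the induction close I would work with a mild generalization of Definition~\ref{def:std_form}: call a system \emph{solved-deterministic} if each equation is either $X_i = \texttt{ff}$ or $X_i = \bigwedge_{\alpha \in A_i}[\alpha]\psi_{i,\alpha} \wedge \bigwedge_{j \in S_i} Y_j$, where the actions in $A_i$ are pairwise distinct, each $\psi_{i,\alpha}\in\shml$ is itself in deterministic form, and every bound (system) variable occurs only guarded in $F_i$, the only unguarded top-level conjuncts being the free variables $Y_j$. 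A system in deterministic form is the special case in which each $\psi_{i,\alpha}$ is a bare variable $X_{D(i,\alpha)}$; crucially, each such right-hand side is already a deterministic-form \emph{formula}. I would then prove the stronger statement that every solved-deterministic system is equivalent to a deterministic-form formula.

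For the base case $n=1$, the system is $X_1 = F_1$ with $\trueset{\mathit{SYS},\rho} = \trueset{\max X_1.F_1,\rho}$ by the definition of $Sol$, so I take $\varphi = \max X_1.F_1$ (which is $\texttt{ff}$ when $F_1 = \texttt{ff}$). Since $F_1$ is a deterministic-form formula and prefixing with $\max X_1$ introduces no new conjunction, $\varphi$ is in deterministic form.

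For the inductive step with $n \ge 2$, I would first use Lemma~\ref{lem:systems:reorder} to assume the principal variable is $X_1$ and the variable to be eliminated is the last one, $X_n$. Since in standard, hence deterministic, form $X_n$ occurs guarded even inside its own equation $F_n$, the formula $G_n \defeq \max X_n.F_n$ is a well-formed deterministic-form formula whose free system variables are among $X_1,\dots,X_{n-1}$ and occur only guarded. I form a new system $\mathit{SYS}''$ on the variables $X_1,\dots,X_{n-1}$ by deleting the equation $X_n = F_n$ and replacing every (necessarily guarded) occurrence of $X_n$ in $F_1,\dots,F_{n-1}$ by $G_n$. Form-preservation is then routine: replacing a conjunct $[\alpha]X_n$ by $[\alpha]G_n$ keeps $\alpha$ distinct from the other actions of that equation (by determinism of $\mathit{SYS}$, each action occurs at most once per equation) and keeps the conjunct guarded, so $\mathit{SYS}''$ is again solved-deterministic, now with $n-1$ equations. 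The induction hypothesis yields a deterministic-form formula $\varphi$ equivalent to $\mathit{SYS}''$, which I return.

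The main obstacle is the semantic equivalence $\trueset{\mathit{SYS},\rho} = \trueset{\mathit{SYS}'',\rho}$, i.e.\ that solving the innermost variable and substituting its solution does not change the solution of the principal variable. Here I would exploit that all fixpoints involved are greatest fixpoints, so by Lemma~\ref{lem:systems:simultaneous} the nested solution of $\mathit{SYS}$ coincides with the simultaneous greatest fixpoint of its $n$ equations; this reduces the claim to the standard Beki\'{c}-style elimination identity, namely that in the simultaneous greatest fixpoint $(S_1,\dots,S_n)$ the last component satisfies $S_n = \trueset{G_n,\rho[X_1\mapsto S_1,\dots,X_{n-1}\mapsto S_{n-1}]}$ while $(S_1,\dots,S_{n-1})$ is exactly the simultaneous greatest fixpoint of the reduced system $\mathit{SYS}''$. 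Formally this is carried out by applying Lemma~\ref{lem:systems:simultaneous} with $X_n$ as a one-element innermost block together with the replacement Lemma~\ref{lem:replacement_lemma}, where guardedness of $X_n$ guarantees that the syntactic substitution of $G_n$ for $X_n$ faithfully realizes this semantic elimination. I expect verifying this identity carefully, rather than the form-preservation bookkeeping, to be the bulk of the work.
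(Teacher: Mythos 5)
Your proposal is correct, but it runs the elimination in the opposite direction from the paper, and the two inductions are genuinely different. The paper peels off the \emph{first} (outermost, principal) equation: it forms $\mathit{SYS}'$ by deleting $X_1 = F_1$ and freeing $X_1$, invokes the inductive hypothesis once per remaining variable to get deterministic-form formulae $\varphi_2,\ldots,\varphi_n$ (each possibly with $X_1$ free and guarded), substitutes them all into $F_1$ simultaneously, and closes with $\max X_1$. Because the paper's semantics of $Sol$ is \emph{defined} by recursion on the first equation, the semantic step there is essentially reading off the definition, and no change to the notion of deterministic form is needed since the substitution happens only once, at the top. You instead eliminate the \emph{last} (innermost) variable by substituting $G_n = \max X_n.F_n$ into the other right-hand sides and recursing on the reduced $(n-1)$-equation system; this is Gaussian elimination from the inside out. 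Your route buys a cleaner one-variable-at-a-time bookkeeping, but it costs you two things the paper avoids: (i) the intermediate systems leave the class of deterministic-form systems, so you correctly have to generalize the invariant to your ``solved-deterministic'' systems where box-conjuncts carry arbitrary deterministic-form formulae rather than bare variables; and (ii) the semantic correctness of eliminating an \emph{inner} variable is a genuine Beki\'{c}-style identity that needs Lemma~\ref{lem:systems:simultaneous} plus Lemma~\ref{lem:replacement_lemma} (and guardedness of $X_n$), whereas the paper's outermost-first peeling matches the recursive definition of $Sol$ directly. Conversely, the paper pays by needing the implicitly strengthened inductive hypothesis that the sublemma yields a formula for \emph{every} variable of the subsystem (via Lemma~\ref{lem:systems:reorder}), not just the principal one. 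Both arguments produce essentially the same nested greatest-fixed-point formula, and your identification of the semantic elimination step as the real content of the proof is accurate.
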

    \begin{proof}%
        We use proof by induction on the number of equations in $\mathit{SYS}$.

            For the base case, we assume that $\mathit{SYS}$ contains a single equation, $X_1=F_1$. Since $X_1=F_1$ is the principal equation in $\mathit{SYS}$, $\mathit{SYS}$ is equivalent to the formula $\max X_1.F_1$, which is in deterministic form. 

                        Now assume that $\mathit{SYS}$ contains $n>1$ equations. 
                        Let $\mathit{SYS}'$ be the result of removing the first equation from $\mathit{SYS}$ and adding $X_1$ to $\mathcal{Y}$ if $X_1$ appears in the remaining equations of $\mathit{SYS}'$.
                        Then, 
                        \[
                        \trueset{SYS,\rho} = \bigcup \left\{
						                        S_1 \mid
						                        S_1 \subseteq 
						                        \trueset{F_1,
						                        	\overline{\rho[X_1 \mapsto S_1]}^{\sys'}
						                        }
                        \right\};
                        \]
                        by the inductive hypothesis, there are formulae $\varphi_2,\ldots,\varphi_n$ in deterministic form, such that
                        \begin{align*}
                        \trueset{SYS,\rho} &= \\
                        &= \bigcup \{
                        S_1 \mid
                        S_1 \subseteq 
                        \trueset{F_1[\varphi_2/X_2,\ldots,\varphi_n/X_n],
                        	\rho[X_1 \mapsto S_1]} \\
                        &= 
                        \trueset{\max X_1.F_1[\varphi_2/X_2,\ldots,\varphi_n/X_n],\rho}
                        \}
                        \end{align*}
                         and $\max X_1.F_1[\varphi_2/X_2,\ldots,\varphi_n/X_n]$ is in deterministic form.
   \qed\end{proof}

Finally, we are ready to prove the main theorem in this section.

\begin{theorem}
    For each formula $\varphi\in\shml$ there exists a formula $\psi\in\shml$ that is equivalent to $\varphi$ and is in deterministic form.
\end{theorem}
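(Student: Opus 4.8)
The plan is to chain together the three structural lemmas already established, exploiting the fact that each of them converts between representations while preserving semantic equivalence (equality of the truth sets $\trueset{\cdot,\rho}$ under every environment $\rho$). Starting from an arbitrary $\varphi \in \shml$, I would first invoke Lemma~\ref{lemma:formula_to_std_sys} to obtain a system of equations $\sys$ that is equivalent to $\varphi$ and in standard form. Next, I would apply Lemma~\ref{lemma:det_system} to turn $\sys$ into an equivalent system $\sys'$ in deterministic form. Finally, Lemma~\ref{lemma:det_formula} extracts from $\sys'$ a formula $\psi \in \shml$ that is in deterministic form and equivalent to $\sys'$. The desired $\psi$ is then this last formula.

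It remains only to verify that $\psi \equiv \varphi$, and this follows by transitivity. Each of the three notions of equivalence used along the way --- formula-to-system, system-to-system, and system-to-formula --- is defined as equality of $\trueset{\cdot,\rho}$ for all environments $\rho$, so they compose along the chain $\varphi \equiv \sys \equiv \sys' \equiv \psi$ to give $\trueset{\varphi,\rho} = \trueset{\psi,\rho}$ for every $\rho$. Because free variables are threaded through the environment $\rho$ and carried along in the free-variable set $\mathcal{Y}$ at each step, the argument works uniformly whether or not $\varphi$ is closed, so no separate treatment of free variables is needed.

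I do not expect any genuine obstacle here: all the real difficulty has already been absorbed into Lemmas~\ref{lemma:formula_to_std_sys}, \ref{lemma:det_system}, and \ref{lemma:det_formula}, and the theorem is essentially their composition. The only points worth checking are that the intermediate objects line up as required by the hypotheses of the next lemma --- in particular, that the deterministic-form system $\sys'$ returned by Lemma~\ref{lemma:det_system} is a legitimate input to Lemma~\ref{lemma:det_formula} (it is, being an $\shml$ system in deterministic form), and that the standard-form system produced first genuinely feeds Lemma~\ref{lemma:det_system}. Once these interfaces are confirmed to match, the conclusion is immediate and the proof is little more than a one-line composition of the three lemmas.
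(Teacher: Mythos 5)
Your proposal is correct and matches the paper's own proof exactly: the theorem is established there precisely by composing Lemmata~\ref{lemma:formula_to_std_sys}, \ref{lemma:det_system}, and \ref{lemma:det_formula}, with equivalence following by transitivity just as you describe. No further comment is needed.
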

    \begin{proof}
        Follows from Lemmata \ref{lemma:formula_to_std_sys}, \ref{lemma:det_system} and \ref{lemma:det_formula}.
   \qed\end{proof}

\begin{examp}
	If we follow through all the constructions, starting from $\varphi_e$, using Lemma \ref{lemma:formula_to_std_sys} we construct $\sys'_e$, which is a system of equations in standard form; from $\sys'_e$ we construct using Lemma \ref{lemma:det_system} we construct system $\sys''_e$, which is in deterministic form; finally, from $\sys''_e$ we can construct, using Lemma \ref{lemma:det_formula}, the deterministic form of $\varphi_e$, which happens to be $\varphi'_e = [\alpha][\alpha]\false$.
\end{examp}

This section's conclusion is that to monitor an $\shml$ formula, it is enough to consider deterministic monitors. 
On the other hand, if we are concerned with the computational cost of constructing a monitor (and we are), it is natural to ask what the cost of determinizing a monitor is. The following Section \ref{sec:bounds} is devoted to answering this and related questions.

\section{Bounds for Determinizing Monitors}
\label{sec:bounds}

The purpose of this section is to establish bounds on the sizes of monitors, both with respect to finite automata and when comparing deterministic to nondeterministic monitors.
The constructions of Section \ref{sec:rewriting} are not easy to extract bounds from. Therefore, we examine a different approach in this section. We remind the reader that in this section, we consider monitors which use only the \yes\ verdict --- although treating monitors which use only the \no\ verdict instead is completely analogous and we can also treat monitors which use both verdicts as described in Section \ref{sec:quick_fix}.

\subsection{Semantic Transformations}

We slightly alter the behavior of monitors 
to simplify this section's arguments. Specifically,
we provide three different sets of rules to define the behavior of the monitors, but we prove these are equivalent with respect to the traces which can reach a \yes\ value, which is what we need for this section.
Consider a single monitor, which appears at the beginning of the derivation under consideration --- that is, all other monitors are submonitors of this. We assume without loss of generality that each variable $x$ appears in the scope of a unique monitor of the form $\rec\ x.m$, which we call $p_x$; $m_x$ is the monitor such that $p_x = \rec\ x.m_x$. 
The monitors may behave according 
a system of rules.
System O is the old system of rules, as given in Table \ref{tab:Orules} of Section \ref{sec:def}; system N 
is
given by replacing rule \mycap{mRec} by the rules given by Table \ref{tab:Nrules}.

\begin{table}[h]
	\begin{align*}
	\mycap{mRecF}\frac{}{\texttt{rec} x.m_x\xrightarrow{\tau}m_x}   &&&&
	\mycap{mRecB}\frac{}{x\xrightarrow{\tau}p_x}   
	\end{align*}
	\caption{System N is the result of replacing rule \mycap{mRec} by rules \mycap{mRecF} and \mycap{mRecB}.}
	\label{tab:Nrules}
\end{table}

Derivations $\longrightarrow$ and $\Rightarrow$ are defined as before, but the resulting relations are called $\longrightarrow_O$ and $\Rightarrow_O$, and $\longrightarrow_N$ and $\Rightarrow_N$, respectively for systems O and N.
This subsection's main result, as given by Corollary \ref{cor:all_systems_eq}, is that systems O and N are equivalent. That is, for any monitor $p$, trace $t$, and value $v$, 
$$p \xRightarrow{t}_O v \ \text{ if and only if }\ p \xRightarrow{t}_N v.$$

To prove the equivalence of the two systems, we introduce an intermediate system, which we call system M. This is the result of replacing rule \mycap{mRec} of system O by the rules given by Table \ref{tab:Mrules}.

\begin{table}[h]
	\begin{align*}
	\mycap{mRecF}\frac{}{\texttt{rec} x.m_x\xrightarrow{\tau}m_x}   &&&&
	&\mycap{mRecP}\frac{}{x\xrightarrow{\tau}m_x}  
	\end{align*}
	\caption{System M is the result of replacing rule \mycap{mRec} by rules \mycap{mRecF} and \mycap{mRecP}.}
	\label{tab:Mrules}
\end{table}

For system M as well, derivations $\longrightarrow$ and $\Rightarrow$ are defined as before, but the resulting relations are called $\longrightarrow_M$ and $\Rightarrow_M$ for system M, to distinguish them from $\longrightarrow_O$, $\Rightarrow_O$, $\longrightarrow_N$, and $\Rightarrow_N$. 
Notice that the syntax used in system O and systems M and N is necessarily different.
While systems M and N require unique $p_x$ and $m_x$ and no substitutions occur in a derivation, the substitutions which occur in rule \mycap{mRec} produce new monitors and can result for each variable $x$ in several monitors of the form $\rec\ x.m$.
For example, consider monitor $$p = \rec\ x. (\alpha.\yes + \beta.\rec\ y.(\alpha.x +\beta.y)).$$ 
In this case, $p_x = p$ and $p_y = \rec\ y.(\alpha.x + \beta.y)$, but by rule \mycap{mRec} of system O, \[p\ \xrightarrow{\tau} \
\alpha.\yes + \beta.\rec\ y.(\alpha.\rec\ x. (\alpha.\yes + \beta.\rec\ y.(\alpha.x +\beta.y)) +\beta.y)
\]\[ \xrightarrow{\beta} \
\rec\ y.(\alpha.\rec\ x. (\alpha.\yes + \beta.\rec\ y.(\alpha.x +\beta.y)) +\beta.y) = p_y',
\]
but $p_y \neq p_y'$ and they are both of the form $\rec\ y.m$, which means that they cannot both appear in a derivation of system N or M.
Thus, when comparing the two systems, we assume one specific initial monitor $p_0$, from which all derivations in both systems are assumed to originate. 
Monitor $p_0$ satisfies the syntactic conditions for systems M and N and this is enough, since all transitions that can occur in these systems only generate submonitors of $p_0$.

The reason for changing the rules in the operational semantics of monitors is that for our succinctness results we need to track when recursion is used to reach a previous state of the monitor (i.e. a previous monitor in the derivation) by tracking when rule \mycap{mRecB} is used and then claim that this move takes us back a few steps in the derivation. Thus, we will be using system N for the remainder of this section. However, before that we need to demonstrate that it is equivalent to system O, in that for every monitor $p$, trace $t$, and value $v$, $p \xRightarrow{t}_O v$ iff $p \xRightarrow{t}_N v$. We prove this claim in this subsection by demonstrating equivalence of both systems to system M.

\begin{lemma}\label{lem:act_in_three_systems}
	Given monitors $p, q$ and action $\alpha \in \mycap{Act}$,
	$p \xrightarrow{\alpha}_N q$ iff $p \xrightarrow{\alpha}_M p'$ iff $p \xrightarrow{\alpha}_O p'$.
\end{lemma}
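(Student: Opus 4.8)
The plan is to exploit the fact that the three systems O, M, and N differ \emph{only} in their treatment of recursion: system O uses rule \mycap{mRec}, system M uses \mycap{mRecF} and \mycap{mRecP}, and system N uses \mycap{mRecF} and \mycap{mRecB}. Crucially, every one of these recursion rules yields a transition labelled by the silent action $\tau$. The remaining rules, namely \mycap{mAct}, \mycap{mSelL}, \mycap{mSelR}, and \mycap{mVerd}, are literally identical across all three systems, and they are the only rules whose conclusions carry a label in $\mycap{Act}$. Since the statement concerns a single step labelled by a visible action $\alpha \in \mycap{Act}$, I expect the three relations to coincide exactly, with a common target (so the $p'$ and $q$ appearing in the statement denote the same monitor).

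First I would make precise the observation that, in any of the three systems, a derivation of a transition $p \xrightarrow{\alpha} q$ with $\alpha \in \mycap{Act}$ can only use the shared rules \mycap{mAct}, \mycap{mSelL}, \mycap{mSelR}, and \mycap{mVerd}. This I would prove by induction on the height of the derivation tree, inspecting the last rule applied. The base cases are \mycap{mAct} and \mycap{mVerd}, which directly conclude an $\alpha$-labelled transition and have no premises; the inductive cases are \mycap{mSelL} and \mycap{mSelR}, which preserve the transition label, so that an $\alpha$-labelled conclusion forces an $\alpha$-labelled premise to which the induction hypothesis applies. No recursion rule can occur at any node of such a tree, since each of them produces a $\tau$-label and there is no rule that rewrites a label from $\tau$ to $\alpha$.

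With this established, the three biconditionals follow immediately: any derivation witnessing $p \xrightarrow{\alpha}_O q$ uses only rules shared by all three systems and is therefore simultaneously a valid derivation of $p \xrightarrow{\alpha}_M q$ and of $p \xrightarrow{\alpha}_N q$, and symmetrically in the other directions. In particular the target monitor is the same in all three cases, because it is computed entirely by the shared rules. I do not anticipate a genuine obstacle here; the only points requiring care are the bookkeeping of label preservation through the sum rules and the remark that the syntactic conventions distinguishing system O (where \mycap{mRec} performs a substitution) from systems M and N (with their unique $p_x$ and $m_x$) are irrelevant for a visible-action step, since such a step never unfolds a recursion.
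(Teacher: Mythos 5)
Your proposal is correct and follows essentially the same argument as the paper: both observe that a visible-action transition can only be derived from the rules shared by all three systems (the recursion variants all produce $\tau$-labels), so any derivation in one system is verbatim a derivation in the others. Your version is slightly more careful in that it spells out the induction on the derivation tree and explicitly includes \mycap{mVerd} among the $\alpha$-producing rules, which the paper's one-line proof omits.
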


\begin{proof}
	Notice that in all three systems, $p \xrightarrow{\alpha} p'$ can only be produced by a combination of rules \mycap{mAct}, \mycap{mSelL}, and \mycap{mSelR} (all variants of \mycap{mRec} can only produce $\tau$-transitions, which are preserved through \mycap{mSelL} and \mycap{mSelR}). Since all three rules are present in all three systems, when this transition is provable in one of the systems, it can be replicated in each of them.
\qed\end{proof}

\begin{lemma}\label{lem:tau_transitions_have_x}
	If $p\xrightarrow{\tau}_M q$, then there is some variable $x$, such that $p$ is a sum of $x$ or of $p_x$ and $q = m_x$.
\end{lemma}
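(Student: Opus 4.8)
The plan is to argue by induction on the derivation (proof tree) of the transition $p \xrightarrow{\tau}_M q$ in system M. The crucial observation is that, among the axioms of system M, only \mycap{mRecF} and \mycap{mRecP} produce $\tau$-labelled transitions: the axioms \mycap{mAct} and \mycap{mVerd} both yield transitions labelled by some $\alpha \in \mycap{Act}$, not $\tau$. The only compound rules are \mycap{mSelL} and \mycap{mSelR}, and these preserve the transition label, so a $\tau$-transition at the root of the proof tree must ultimately be traced back to an application of \mycap{mRecF} or \mycap{mRecP}.

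For the base cases, suppose the last (and only) rule applied is an axiom. Since the label is $\tau$, it is either \mycap{mRecF} or \mycap{mRecP}. If it is \mycap{mRecF}, then $p = \rec\ x.m_x = p_x$ and $q = m_x$; since $p_x$ is trivially a sum of $p_x$, the claim holds. If it is \mycap{mRecP}, then $p = x$ and $q = m_x$; since $x$ is trivially a sum of $x$, the claim holds again.

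For the inductive step, suppose the last rule is \mycap{mSelL} (the case of \mycap{mSelR} is symmetric). Then $p = m + n$ for some monitors $m, n$, the premise is $m \xrightarrow{\tau}_M q$, and this premise admits a strictly smaller derivation. By the induction hypothesis there is a variable $x$ such that $m$ is a sum of $x$ or of $p_x$, and $q = m_x$. Invoking the definition of ``sum of'' given in Subsection \ref{subsec:det_verd_choices} --- which is closed under forming $r + r'$ and $r' + r$ from a sum $r$ --- we conclude that $p = m + n$ is likewise a sum of $x$ or of $p_x$, while $q = m_x$ is unchanged. This completes the induction.

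The argument is essentially routine; the only point that needs care is the bookkeeping in the inductive step, namely explicitly appealing to the closure property of ``being a sum of $q$'' under the formation $r \mapsto r + r'$ in order to push the conclusion from the active summand $m$ up to the whole sum $p = m + n$. I expect no genuine obstacle beyond first establishing, as above, that the action axioms \mycap{mAct} and \mycap{mVerd} can never be the source of a $\tau$-transition, which is what pins the base cases down to the two recursion axioms.
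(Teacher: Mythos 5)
Your proof is correct and follows essentially the same route as the paper's: the paper observes that $\tau$-transitions can only be introduced by \mycap{mRecF} and \mycap{mRecP} and then propagated by \mycap{mSelL} and \mycap{mSelR}, and asserts (without spelling out the induction) that the introducing rules establish the property and the propagating rules preserve it. You have simply made the induction on the proof tree and the appeal to the closure of ``sum of'' explicit, which fills in exactly the details the paper leaves to the reader.
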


\begin{proof}
	Notice that $\tau$-transitions can only be introduced by rules \mycap{mRecF} and \mycap{mRecP} and then propagated by rules \mycap{mSelL} and \mycap{mSelR}. then, it is not hard to verify that if $p\xrightarrow{\tau}_M q$ was produced by \mycap{mRecF} or \mycap{mRecP}, then $p,q$ satisfy the property asserted by the lemma; and that rules \mycap{mSelL} and \mycap{mSelR} preserve this property.
\qed\end{proof}

\begin{lemma}\label{lem:tau_transitions_have_x_for_N}
	If $p\xrightarrow{\tau}_N q$, then there is some variable $x$, such that either $p$ is a sum of $x$ and $q = p_x$, or $p$ is a sum of $p_x$ and $q = m_x$.
\end{lemma}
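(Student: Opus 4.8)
The plan is to mirror the proof of Lemma~\ref{lem:tau_transitions_have_x}, performing a structural induction on the derivation of the transition $p \xrightarrow{\tau}_N q$ and reasoning according to the last rule applied. The key observation is that in system~N the $\tau$-transitions can only be generated by the two base rules \mycap{mRecF} and \mycap{mRecB}, and then propagated by \mycap{mSelL} and \mycap{mSelR}; the remaining rules (\mycap{mAct} and \mycap{mVerd}) produce only visible actions and so cannot be the last step of a $\tau$-derivation. Thus the induction has two base cases and one propagation step.

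First I would handle the base cases. If the last (and only) rule applied is \mycap{mRecF}, then $p = \rec\ x.m_x$ for some variable $x$ and $q = m_x$; since $p = p_x$ is trivially a sum of $p_x$, this is exactly the second disjunct of the statement. If the last rule is \mycap{mRecB}, then $p = x$ and $q = p_x$, and as $x$ is a sum of $x$, this is the first disjunct. In both cases the same variable $x$ witnesses the claim.

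For the inductive step, suppose the last rule is \mycap{mSelL}, so that $p = p_1 + p_2$ and the transition is derived from $p_1 \xrightarrow{\tau}_N q$. By the induction hypothesis there is a variable $x$ such that $p_1$ is a sum of $x$ with $q = p_x$, or $p_1$ is a sum of $p_x$ with $q = m_x$. By the definition of a sum of a monitor (namely, if $r$ is a sum of $s$ then so is $r + r'$), the monitor $p = p_1 + p_2$ remains a sum of $x$ (respectively of $p_x$), while $q$ is unchanged; hence the asserted property is preserved. The case \mycap{mSelR} is entirely symmetric.

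No routine computation is concealed here, so the only point requiring any care is the bookkeeping around the definition of ``a sum of'' a monitor, ensuring that adjoining further summands via \mycap{mSel} does not disturb which variable is the witness. This is precisely the same mild subtlety already encountered in Lemma~\ref{lem:tau_transitions_have_x}, and I expect it to present no genuine obstacle.
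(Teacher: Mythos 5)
Your proof is correct and follows exactly the route the paper intends: the paper's own proof of this lemma simply says it is ``very similar'' to that of Lemma~\ref{lem:tau_transitions_have_x}, namely that $\tau$-transitions arise only from the two recursion rules (here \mycap{mRecF} and \mycap{mRecB}) and are propagated by \mycap{mSelL}/\mycap{mSelR}, which preserve the stated property. Your write-up just makes the base cases and the sum-preservation step explicit.
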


\begin{proof}
	Very similar to the proof of Lemma \ref{lem:tau_transitions_have_x}.
\qed\end{proof}

\begin{lemma}\label{lem:tau_transitions_have_x_for_O}
	If $p\xrightarrow{\tau}_O q$, then there is a monitor $r = \rec\ x.r'$, such that  $p$ is a sum of $r$ and $q = r'[r/x]$.
\end{lemma}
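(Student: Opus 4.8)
The plan is to proceed by induction on the structure of the derivation of $p \xrightarrow{\tau}_O q$, exactly in the spirit of the proofs of Lemmata \ref{lem:tau_transitions_have_x} and \ref{lem:tau_transitions_have_x_for_N}. The first observation is that a $\tau$-labelled transition in system O can only be created by rule \mycap{mRec}, since \mycap{mAct} and \mycap{mVerd} produce transitions labelled by actions $\alpha \in \mycap{Act}$ and never by $\tau$. Once created, such a transition can only be carried upward through the syntax tree by the sum rules \mycap{mSelL} and \mycap{mSelR}. Hence every derivation of $p \xrightarrow{\tau}_O q$ consists of a single application of \mycap{mRec} at its leaf, followed by a (possibly empty) chain of applications of \mycap{mSel}.

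For the base case, the derivation is a single use of \mycap{mRec}, so $p = \rec\ x.m$ and $q = m[\rec\ x.m / x]$ for some variable $x$ and monitor $m$. Taking $r = \rec\ x.m$ and $r' = m$, we have $r = \rec\ x.r'$, and $p = r$ is (trivially) a sum of $r$, while $q = r'[r/x]$, as required. For the inductive step, suppose the last rule applied is \mycap{mSel}; then $p = p_1 + p_2$ and the premise is $p_i \xrightarrow{\tau}_O q$ for $i = 1$ (rule \mycap{mSelL}) or $i = 2$ (rule \mycap{mSelR}), with the same target $q$. By the induction hypothesis there is a monitor $r = \rec\ x.r'$ such that $p_i$ is a sum of $r$ and $q = r'[r/x]$. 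Since $p_i$ is a sum of $r$, the definition of ``sum of'' immediately gives that $p_1 + p_2 = p$ is a sum of $r$ as well, while $q = r'[r/x]$ is unchanged. This closes the induction.

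The argument is routine bookkeeping, and the only point that deserves care is the shape of the conclusion: unlike systems M and N, where rule \mycap{mRec} is replaced by rules that perform no substitution and the target of a $\tau$-step is always some fixed $m_x$, in system O the rule \mycap{mRec} genuinely substitutes $\rec\ x.r'$ for $x$. This is precisely why the statement must describe $q$ as $r'[r/x]$ with $r = \rec\ x.r'$ rather than as a syntactically fixed submonitor, and why the witness $r$ is read off directly from the \mycap{mRec} instance at the leaf of the derivation rather than being determined by the variable $x$ alone. Beyond tracking this substitution correctly, I expect no real obstacle.
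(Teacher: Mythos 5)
Your proof is correct and follows exactly the argument the paper intends: the paper proves this lemma by noting it is "very similar to the proof of Lemma \ref{lem:tau_transitions_have_x}", i.e., a $\tau$-transition can only originate from the recursion rule (here \mycap{mRec}) and is then propagated by \mycap{mSelL}/\mycap{mSelR}, which is precisely your induction on the derivation. The base case reading off $r = \rec\ x.m$ and $q = m[\rec\ x.m/x]$, and the step using the closure of ``sum of'' under $+$, match the paper's reasoning.
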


\begin{proof}
Again, 	very similar to the proof of Lemma \ref{lem:tau_transitions_have_x}.
\qed\end{proof}

We first prove the equivalence of the systems M and N.

\begin{lemma}
	For a monitor $p$, trace $t$, and value $v$, $p \xRightarrow{t}_N v$ iff $p \xRightarrow{t}_M v$.
\end{lemma}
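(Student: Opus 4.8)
The plan is to reduce the claim to a statement purely about $\tau$-reachability, since the two systems differ only in how variables unfold. In system M a variable takes the single step $x \xrightarrow{\tau}_M m_x$ (rule \mycap{mRecP}), whereas in system N it takes two steps $x \xrightarrow{\tau}_N p_x \xrightarrow{\tau}_N m_x$ (rule \mycap{mRecB} followed by \mycap{mRecF}). Because the weak transition $\xRightarrow{t}$ interleaves visible actions with arbitrary blocks of $\tau$-steps, and because the visible ($\alpha$-)transitions are literally the same in both systems by Lemma~\ref{lem:act_in_three_systems}, it is enough to compare the $\tau$-closures $(\xrightarrow{\tau}_M)^*$ and $(\xrightarrow{\tau}_N)^*$. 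I would first record the elementary fact that $p_x = \rec\ x.m_x$ has no visible transition and is never itself a verdict, so the only state that N visits but M skips is an intermediate $p_x$.

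First I would prove the easy inclusion $p(\xrightarrow{\tau}_M)^* q \Rightarrow p(\xrightarrow{\tau}_N)^* q$ for all $q$, by induction on the length of the M-derivation using Lemma~\ref{lem:tau_transitions_have_x}. A step produced by \mycap{mRecF} is already an N-step; a step $p' \xrightarrow{\tau}_M m_x$ produced by \mycap{mRecP} (so $p'$ is a sum of $x$) is simulated by the two N-steps $p' \xrightarrow{\tau}_N p_x \xrightarrow{\tau}_N m_x$. Chaining these simulations gives the inclusion.

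The main obstacle is the reverse inclusion, because an N-derivation may end at an intermediate state $p_x$ that M skips over. I would prove the following disjunctive invariant by induction on the length of the N-derivation: if $p(\xrightarrow{\tau}_N)^* q$, then either $p(\xrightarrow{\tau}_M)^* q$, or $q = p_x$ for some $x$ and $p(\xrightarrow{\tau}_M)^* m_x$. Using Lemma~\ref{lem:tau_transitions_have_x_for_N} to classify the last N-step, the base case is trivial; in the inductive step, if the first disjunct holds for the penultimate state then a final \mycap{mRecF} step is mirrored directly in M, while a final \mycap{mRecB} step $p' \xrightarrow{\tau}_N p_x$ is matched by $p' \xrightarrow{\tau}_M m_x$ (rule \mycap{mRecP}), landing in the second disjunct. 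If instead the penultimate state already satisfies the second disjunct, i.e. it equals some $p_y$ with $p(\xrightarrow{\tau}_M)^* m_y$, then the only N-transition available from $p_y = \rec\ y.m_y$ is \mycap{mRecF} to $m_y$, so the final state is $m_y$ and M has already reached it. This formulation also absorbs the subtlety that $m_x$ may itself be some $p_y$, since we never require $m_x$ to be distinct from all recursion processes.

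Finally I would assemble the full equivalence. Writing a witnessing N-derivation of $p \xRightarrow{t}_N v$ as alternating $\tau$-blocks and visible $\alpha_i$-transitions, I observe that every state from which a visible action is performed, as well as the terminal verdict $v$, is not of the form $p_x$; hence the second disjunct of the invariant is excluded at the end of each $\tau$-block, so each block converts to a genuine M $\tau$-block, while the visible steps transfer verbatim by Lemma~\ref{lem:act_in_three_systems}. This yields $p \xRightarrow{t}_M v$. The converse, $p \xRightarrow{t}_M v \Rightarrow p \xRightarrow{t}_N v$, follows immediately from the easy inclusion applied block-by-block, and the empty-trace case ($p(\xrightarrow{\tau})^* v$) is the same argument with a single $\tau$-block.
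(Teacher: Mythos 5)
Your proof is correct and follows essentially the same route as the paper's: both directions rest on Lemmata \ref{lem:act_in_three_systems}, \ref{lem:tau_transitions_have_x}, and \ref{lem:tau_transitions_have_x_for_N}, and the crux in each is that the intermediate state $p_x$ introduced by \mycap{mRecB} can neither be a verdict nor perform a visible action, so it is forced to continue to $m_x$. The paper packages this as a single induction on the length of the derivation with a one-step lookahead in the sum-of-$x$ case, whereas you isolate it into a disjunctive invariant on $\tau$-closures and then discharge the second disjunct at block boundaries; the difference is purely organizational.
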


\begin{proof}
	We first prove that if $p \xRightarrow{t}_M v$, then $p \xRightarrow{t}_N v$ by induction on the length of the derivation  $p \xRightarrow{t}_M v$.
\begin{description}
		\item[If $p = v$,] then we are done, as the trivial derivation exists for both systems.
	\item[If  $p \xrightarrow{\alpha}_M p' \xRightarrow{t'}_M v$,] then $p \xrightarrow{\alpha}_N p'$ by Lemma \ref{lem:act_in_three_systems}. By the inductive hypothesis, $p' \xRightarrow{t'}_N v$ and we are done.
	\item[If $p \xrightarrow{\tau}_M p' \xRightarrow{t}_M v$,] then 
	by Lemma \ref{lem:tau_transitions_have_x}, there is some variable $x$, such that $p$ is a sum of $x$ or of $p_x$ and $p' = m_x$. Then, if $p$ a sum of $p_x$, $p \xrightarrow{\tau}_N m_x$ and if $p$ a sum of $x$, then $p \xrightarrow{\tau}_N p_x \xrightarrow{\tau}_N m_x = p'$. By the inductive hypothesis, $p' \xRightarrow{t}_N v$ and we are done.
\end{description}
We now prove that if $p \xRightarrow{t}_N v$, then $p \xRightarrow{t}_M v$, again by induction on the length of the derivation  $p \xRightarrow{t}_N v$. The only different case from above is for 
	when $p \xrightarrow{\tau}_N p' \xRightarrow{t}_N v$.
	In this case, by Lemma \ref{lem:tau_transitions_have_x_for_N}, there is some variable $x$, such that 
	$p$ is a sum of $x$ and $p' = p_x$, or $p$ is a sum of $p_x$ and $p' = m_x$.
	Then, if $p$ a sum of $p_x$, $p \xrightarrow{\tau}_M m_x = p'$ and by the inductive hypothesis, $p' \xRightarrow{t}_M v$ and we are done.
	If $p$ a sum of $x$, then $p' = p_x \neq v$, so the derivation is of the form $p \xrightarrow{\tau}_N p' = p_x \xrightarrow{\tau}_N m_x \xRightarrow{t}_N v$ (as $p_x$ can only transition to $m_x$), thus
	$p \xrightarrow{\tau}_M m_x$. By the inductive hypothesis, $m_x \xRightarrow{t}_M v$ and we are done.
\qed\end{proof}

Now we prove the equivalence of systems O and M.

\begin{lemma}
	For a monitor $p$, trace $t$, and value $v$, $p \xRightarrow{t}_M v$ iff $p \xRightarrow{t}_O v$.
\end{lemma}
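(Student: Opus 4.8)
The plan is to set up a correspondence between the states reachable in system M -- which, by the standing assumption of this subsection, are all submonitors of the fixed initial monitor $p_0$ -- and the states reachable in system O -- which, because $p_0$ is closed and rule \mycap{mRec} substitutes as it unfolds, are all \emph{closed} monitors obtained from submonitors of $p_0$ by substitution. Concretely, I would define a closure map $\widehat{\,\cdot\,}$ sending each submonitor $r$ of $p_0$ to the closed monitor $\widehat{r}$ obtained by repeatedly replacing every free variable $x$ of $r$ by its binder $p_x = \rec\ x.m_x$ until the result is closed; this terminates because each substitution eliminates a variable bound strictly further out in $p_0$, whose recursion depth is finite. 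Writing $M_x$ for $m_x$ with every free variable \emph{other than} $x$ closed off, so that $\widehat{p_x} = \rec\ x.M_x$, the map satisfies the homomorphism equations $\widehat{v} = v$, $\widehat{\alpha.r} = \alpha.\widehat{r}$, and $\widehat{r_1 + r_2} = \widehat{r_1} + \widehat{r_2}$, together with the two crucial identities $\widehat{x} = \widehat{p_x}$ and $\widehat{p_x} \xrightarrow{\tau}_O \widehat{m_x}$. The latter is exactly the statement that one \mycap{mRec} unfolding of the closed recursion $\widehat{p_x} = \rec\ x.M_x$ in system O yields $M_x[\widehat{p_x}/x] = \widehat{m_x}$.

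With this map in hand I would prove the two single-step correspondences. For the action steps, since the action rules are shared by all systems (Lemma \ref{lem:act_in_three_systems}) and $\widehat{\,\cdot\,}$ commutes with prefixing and sums, a step $r \xrightarrow{\alpha}_M r'$ arises from a summand $\alpha.r'$ of $r$, whence $\alpha.\widehat{r'} = \widehat{\alpha.r'}$ is a summand of $\widehat{r}$ and $\widehat{r} \xrightarrow{\alpha}_O \widehat{r'}$, and symmetrically. For the silent steps, Lemma \ref{lem:tau_transitions_have_x} tells me that $r \xrightarrow{\tau}_M m_x$ holds exactly when $r$ is a sum of $x$ or of $p_x$; in either case the corresponding top-level summand of $\widehat{r}$ is $\widehat{p_x} = \rec\ x.M_x$, so $\widehat{r} \xrightarrow{\tau}_O \widehat{m_x}$ by a single \mycap{mRec} step. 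Conversely, Lemma \ref{lem:tau_transitions_have_x_for_O} says every $\tau$-step in O comes from a summand $\rec\ x.r''$ of the current state; analysing which summand of $r$ produced this summand of $\widehat{r}$ -- necessarily a $p_x$ or a free $x$, since closure turns nothing else into a $\rec$ -- recovers a matching \mycap{mRecF} or \mycap{mRecP} step in M landing on $\widehat{m_x}$.

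Finally I would lift these single-step facts to full traces by induction on the length of the derivation, exactly mirroring the preceding M-versus-N lemma, taking $p = p_0$ and using $\widehat{p_0} = p_0$ and $\widehat{v} = v$. For the direction $p_0 \xRightarrow{t}_M v$ implies $p_0 \xRightarrow{t}_O v$, I follow the M-derivation and translate each step through $\widehat{\,\cdot\,}$; for the converse I carry along an M-preimage $r$ of the current O-state $\widehat{r}$, which the silent-step analysis above lets me extend one step at a time. The main obstacle I anticipate is the substitution bookkeeping in defining $\widehat{\,\cdot\,}$ and verifying $\widehat{p_x} \xrightarrow{\tau}_O \widehat{m_x}$ under nested recursion, compounded by the fact that $\widehat{\,\cdot\,}$ is not injective -- both $x$ and $p_x$ collapse to $\widehat{p_x}$. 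This non-injectivity is harmless, however, because $x$ and $p_x$ share the same $\tau$-successor $m_x$ in M, so a single O-unfolding is matched by whichever of \mycap{mRecF}, \mycap{mRecP} corresponds to the summand actually present in the tracked preimage, and both produce the same closed continuation $\widehat{m_x}$.
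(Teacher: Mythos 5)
Your proposal is correct and follows essentially the same route as the paper: both set up a transition-preserving correspondence (a bisimulation) between submonitors of $p_0$ in system M and their closed instances in system O, rely on the same characterizations of $\tau$-steps (Lemmata \ref{lem:tau_transitions_have_x} and \ref{lem:tau_transitions_have_x_for_O}) and of action steps, and lift the single-step matching to traces by induction on the derivation. The only difference is presentational: where you use the explicit closure map $\widehat{\,\cdot\,}$, the paper works with the equivalence relation $\equiv$ generated by $x \equiv p_x$ and closure under substitution, of which your map is in effect a functional refinement.
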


\begin{proof}
	As we mentioned before, the syntax used in systems O and M is necessarily different. While rule $\mycap{mRecP}$ requires a unique $m_x$ (and thus a unique $p_x$), the substitutions which occur in rule \mycap{mRec} often result in several monitors of the form $\rec\ x.m$. The central idea of this proof is that if the derivation we consider starts with a monitor whose submonitors satisfy the uniqueness conditions we have demanded in systems $M$ and $N$ for $\rec\ x.m$ and then all produced monitors of the form $\rec\ x.m$ are somehow equivalent.
	
	Thus, we assume an initial monitor $p_0$, such that all derivations considered are subderivations of a derivation initialized at $p_0$ and such that every $x$ is bound in a unique submonitor $\rec\ x.m$, which is $p_x$ as required in system M. We define $\equiv$ to be the smallest equivalence relation such that 
	for all variables $x$, $x \equiv p_x$ and that for all $p$ and $q\equiv x$, $p \equiv p[q/x]$.
	We claim the following:
	\begin{description}[style=sameline]
		\item[For every $p = \rec\ x.q$ 
		which appears 
		in a derivation in O from $p_0$, $p \equiv p_x$:] 
		
		since only substitution can introduce new monitors of the form $\rec\ x.q$ and initially, for each variable $x$ the only such monitor is $p_x$, we can use induction on the overall number of  substitutions in the derivation from $p_0$ to prove the claim. If no substitutions happened, the claim is trivial. 
		Say the derivation so far is of the form $p_0 \xRightarrow{w}_O \rec\ y.r \xrightarrow{\tau}_O r[\rec\ y.r/y]$, with $r[\rec\ y.r/y]$ being the latest substitution in the derivation, such that 
		the claim holds for all submonitors of monitors appearing in 
		$p_0 \xRightarrow{w}_O \rec\ y.r$.
		Let $p = \rec\ x.q$ be a submonitor of $r[\rec\ y.r/y]$. We know by the inductive hypothesis that $\rec\ y.r \equiv p_y \equiv y$; then, there is some $p' = \rec\ x.q'$ submonitor of $\rec\ y.r$ (thus, $p' \equiv p_x$) and $p = p'[\rec\ y.r/y]$, but by the definition of $\equiv$, $p \equiv p' \equiv p_x$.
		
		\item[For a value $v$ and monitor $p$, if $v \equiv p$, then $p = v$:]
		 
			notice that $\equiv_M$ such that for value $v$, $v \equiv_M p$ iff $v = p$ and for $p,p'$ which are not values, always $p \equiv_M p'$, satisfies the above conditions, therefore $\equiv\ \subseteq\ \equiv_M$.

		\item[If $p+q \equiv r$, then there are $p'+q' = r$, where $p \equiv p'$ and $q \equiv q'$:] 
		
			notice that $\equiv$ can be constructed as the union $\bigcup_{i\in \NN}\equiv_i$, where $\equiv_0$ includes all pairs $(p,p)$, $(x,p_x)$, $(p_x,x)$, while $\equiv_{i+1}$ has all pairs in $\equiv_i$ and all $(p,p[q/x])$, $(p[q/x],p)$ and $(p,r)$, $(r,p)$, where $q \equiv_i x$ and $p \equiv_i p' \equiv_i r$;
			also, notice that neither $x$ not $p_x$ is of the form $p+q$ and all steps of the construction preserve the claim.

		\item[If $\alpha.p \equiv r$, then there is some $\alpha.q = r$, where $p \equiv q$:]
		 
		same as above.

		\item[If $\rec\ x.p \equiv r$ or $x \equiv r$, then $x = r$ or there is some $\rec\ x.q = r$, where $p \equiv q$:] 
		
		as above.
		
		\item[If $p \equiv r$ and $p$ is a sum of $q$, then $r$ is a sum of some $q' \equiv q$:] 
		
		by induction on the construction of a sum and the claim for $+$.
		
	\end{description}

	By the second claim, it is enough to prove that if $p \equiv p'$ and $p \xrightarrow{\mu}_O q$, then there is some $q' \equiv q$, such that $p' \xrightarrow{\mu}_M q'$ and vice-versa, if $p \equiv p'$ and $p' \xrightarrow{\mu}_M q'$, then there is some $q \equiv q'$, such that $p \xrightarrow{\mu}_O q$ (notice that this makes $\equiv$ a bisimulation).
	
	\begin{description}
		\item[For the first direction,] 
		let $p \equiv p'$ and $p \xrightarrow{\mu}_O q$. 
		
		If $\mu = \tau$, then by Lemma \ref{lem:tau_transitions_have_x_for_O}, $p$ is a sum of $\rec\ x.r$  and $q = r[p/x]$ --- from the last claim above, we assume for simplicity that $p = \rec\ x.r$ and it does not affect the proof. 
	By the claims above, either $p' = x$ or $p' = \rec\ x.q'$, where $r \equiv q'$. 
	Since $r \equiv r[p/x] = q$, if $p' = \rec\ x.q'$, then $p' = \rec\ x.q' \xrightarrow{\tau}_M q' \equiv r \equiv q$ and we are done.
	Otherwise, $p' = x$ and by the first claim, $p \equiv p_x$, so $m_x \equiv r \equiv q$, therefore $p' = x \xrightarrow{\tau}_M m_x \equiv q$ and we are done.
	
	If $\mu = \alpha \in \mycap{Act}$, then $p$ is a sum of some $\alpha.q$, so by the claims above, $p'$ is the sum of some $\alpha.q'$, where $q' \equiv q$; thus, $p' \xrightarrow{\mu}_M q'$.
	
	\item[For the opposite direction,] 
	let $p \equiv p'$ and $p' \xrightarrow{\mu}_M q'$. 
	
	If $\mu \in \mycap{Act}$, then the argument is as above. If $\mu = \tau$, then by Lemma \ref{lem:tau_transitions_have_x}, $p'$ is a sum of $x$ or $p_x$ and $q' = m_x$. Therefore, by the claims above, either $p = x$, which cannot occur in system O, or $p = \rec\ x.q$, for some $q \equiv m_x$. Therefore, $p \xrightarrow{\tau}_O q[p/x] \equiv q \equiv q'$. 
	\qed
	\end{description}
\end{proof}

\begin{corollary}\label{cor:all_systems_eq}
	For a monitor $p$, trace $t$, and value $v$, $p \xRightarrow{t}_N v$ iff $p \xRightarrow{t}_M v$ iff $p \xRightarrow{t}_O v$.
\end{corollary}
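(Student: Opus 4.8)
The plan is to obtain the corollary as an immediate consequence of the two preceding lemmas by transitivity of biconditional. The two lemmas already establish, for every monitor $p$, trace $t$, and value $v$, the equivalences $p \xRightarrow{t}_N v \iff p \xRightarrow{t}_M v$ and $p \xRightarrow{t}_M v \iff p \xRightarrow{t}_O v$. Since both share the common middle term $p \xRightarrow{t}_M v$, chaining them yields the full three-way equivalence $p \xRightarrow{t}_N v \iff p \xRightarrow{t}_M v \iff p \xRightarrow{t}_O v$ with no additional argument required.

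Concretely, I would first invoke the lemma equating systems $M$ and $N$ to pass between $\xRightarrow{t}_N$ and $\xRightarrow{t}_M$, and then invoke the lemma equating systems $O$ and $M$ to pass between $\xRightarrow{t}_M$ and $\xRightarrow{t}_O$. Because each lemma is stated as a genuine biconditional (rather than a one-directional implication), there is no need to track directions separately: the composite is symmetric and transitive, so the three relations coincide exactly on the event of reaching verdict $v$ along trace $t$. This is precisely the statement needed for the remainder of Section \ref{sec:bounds}, where system $N$ will be used exclusively to track uses of rule \mycap{mRecB} while remaining faithful to the original semantics of system $O$ from Table \ref{tab:Orules}.

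There is no substantive obstacle at the level of the corollary itself; all the difficulty was already absorbed into the two lemmas, in particular the $O$-versus-$M$ lemma, whose proof had to reconcile the differing syntactic regimes (the substitution-based rule \mycap{mRec} of system $O$ versus the unique-$p_x$/$m_x$ discipline of systems $M$ and $N$) by constructing the equivalence relation $\equiv$ and verifying it is a bisimulation. Thus the only thing to verify here is that the shared middle system $M$ really is the same object in both lemmas, which it is by construction, so the chaining is legitimate and the corollary follows in one line.
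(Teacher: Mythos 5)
Your proposal is correct and matches the paper's intent exactly: the corollary is stated without an explicit proof precisely because it follows immediately by chaining the lemma equating systems $M$ and $N$ with the lemma equating systems $O$ and $M$ through the common middle relation $\xRightarrow{t}_M$. Nothing further is needed.
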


By Corollary \ref{cor:all_systems_eq}, systems M, N, and O are equivalent, so we will be using whichever is more convenient in the proofs that follow. For the remaining section, we use system N and we call $\rightarrow_N$ and $\Rightarrow_N$ simply $\rightarrow$ and $\Rightarrow$, respectively.

When using these new systems, we need to alter the definition of determinism. Notice that in System O, it is possible to have a nondeterministic monitor, which has a deterministic submonitor. Specifically, all variables are deterministic. This is fine in System O, because variables do not derive anything on their own. In systems M and N, though, a variable $x$ can derive $p_x$, so it is not a good idea to judge that any variable is deterministic --- and thus judge the determinism of a monitor only from its structure. 
For example, if $p_x = \rec\ x.(\alpha.x + \alpha.\yes)$, $x \xrightarrow{\tau}_N p_x$ and $x$ is deterministic (by default), while $p_x$ is not.
Of course, so far all sums encountered in the course of an O-derivation can be already observed in the initial monitor, so the definition of determinism so far was enough (in our example above, simply $x \not \rightarrow$). The issue now are free variables in a monitor $p$, which can derive supermonitors of $p$. Therefore, we demand of the initial monitor $p_0$,  that $p_0$ is deterministic (using the same definition as before).
Since it is the determinism of $p_0$ itself we are interested in, we will not see much difference from using the usual definition.

\subsection{Size Bounds for Monitors}

We present upper and lower bounds on the size of monitors. We first compare monitors to finite automata and then we examine the efficiency of 
monitor determinization. Note that monitors can be considered a special case of nondeterministic finite automata (NFA) and this observation is made explicit. 

This section's results are the following. We first provide methods to transform monitors to automata and back. 
One of the consequences of these transformations is that we can use the classic subset construction for the determinization of NFAs and thus determinize monitors. One advantage of this method over the one given in the previous sections is that it makes it easier to extract upper bounds on the size of the constructed monitors. 
Another is that it can be applied to an equivalent NFA, which can be smaller than the given monitor, thus resulting in a smaller deterministic monitor.
Then, we
demonstrate that there is an infinite family of languages $(L_n)_{n \in \NN}$, such that for each $n$, $L_n$ is recognizable by an NFA of $n+1$ states, a DFA of $2^n$ states, a monitor of size $O(2^n)$, and a deterministic monitor of size $2^{2^{O(n)}}$. Furthermore, we cannot do better, as we demonstrate that any  monitor which accepts $L_n$ must  be of size $\Omega(2^n)$ and every deterministic monitor which accepts $L_n$ must be of size $2^{2^{\Omega(n)}}$.

\subsubsection{From Monitors to Finite Automata}

A monitor can be considered to be a finite automaton with its submonitors as states  and $\Rightarrow$ as its transition relation. Here we make this observation explicit.\footnote{This is also possible, because system N only transitions to submonitors of an initial monitor; otherwise we would need to consider all monitors reachable through transitions and, perhaps, it would not be as clear which ones these are.} For a monitor $p$, we define the automaton $A(p)$ to be $(Q,\mycap{Act},\delta,q_0,F)$, where
\begin{itemize}
	\item $Q$, the set of states, is the set of submonitors of $p$;
	\item \mycap{Act}, the set of actions, is also the alphabet of the automaton;
	\item $q' \in \delta(q,\alpha)$ iff $q \Rightarrow\xrightarrow{\alpha} q'$;
	\item $q_0$, the initial state is $p$;
	\item $F = \{\yes \}$, that is, $\yes$ is the only accepting state.
\end{itemize}

\begin{proposition}\label{prp:A(p)is_good}
	Let $p$ be a monitor and $t \in \mycap{Act}^*$ a trace. Then, $A(p)$ accepts $t$ iff $t \xRightarrow{t} \yes$.
\end{proposition}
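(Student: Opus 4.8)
The plan is to prove a slightly more general statement by induction, namely that for \emph{every} submonitor $q$ of $p$ and every trace $t$, the automaton $A(p)$ has a run on $t$ from state $q$ to state $\yes$ if and only if $q \xRightarrow{t} \yes$ (the right-hand side of the proposition is clearly meant to read $p \xRightarrow{t} \yes$). Since the initial state of $A(p)$ is $p$ and its unique accepting state is $\yes$, instantiating the generalization at $q = p$ gives the proposition. Generalizing to an arbitrary starting submonitor $q$ is what makes the induction go through, and it is harmless because, under system N, every state reachable in $A(p)$ is itself a submonitor of $p$ and hence a legitimate state of the automaton.

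I would induct on the length of $t$. For the base case $t = \epsilon$, the automaton admits the empty run from $q$ to $\yes$ exactly when $q = \yes$, since $F = \{\yes\}$; meanwhile $q \xRightarrow{\epsilon} \yes$ means $q\,(\xrightarrow{\tau})^*\,\yes$. Here I would invoke the standing assumption that there is no $\tau$-transition into a verdict (guaranteed by forbidding submonitors $\rec\, x.v$): this forces the $\tau$-path to be trivial, so $q \xRightarrow{\epsilon} \yes$ is also equivalent to $q = \yes$, and the two sides agree.

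For the inductive step, write $t = \alpha t'$. On the automaton side, a run on $\alpha t'$ from $q$ to $\yes$ exists iff there is a state $q'$ with $(q,\alpha,q') \in \delta$, i.e. $q \Rightarrow \xrightarrow{\alpha} q'$, together with a run on $t'$ from $q'$ to $\yes$. On the monitor side, I would decompose $q \xRightarrow{\alpha t'} \yes$ as $q \xRightarrow{\alpha} q'' \xRightarrow{t'} \yes$ and then expand $q \xRightarrow{\alpha} q''$ into $q\,(\xrightarrow{\tau})^* \xrightarrow{\alpha} q'\,(\xrightarrow{\tau})^* q''$. The prefix $q\,(\xrightarrow{\tau})^* \xrightarrow{\alpha} q'$ is precisely the edge $(q,\alpha,q') \in \delta$, while the trailing steps $q'\,(\xrightarrow{\tau})^* q''$ can be absorbed into the following weak transition, turning $q'' \xRightarrow{t'} \yes$ into $q' \xRightarrow{t'} \yes$. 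Thus both sides reduce to the existence of a $q'$ with $(q,\alpha,q') \in \delta$ and $q' \xRightarrow{t'} \yes$, and the inductive hypothesis applied to the shorter trace $t'$ and the submonitor $q'$ closes the step.

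The one delicate point, and the step I would treat most carefully, is the bookkeeping of $\tau$-transitions: the relation $\delta$ absorbs the internal steps \emph{preceding} each action but not those following it, whereas $\xRightarrow{\alpha}$ allows $\tau$-steps on both sides. Reconciling the two rests on two facts I would isolate explicitly, namely that $\tau$-steps trailing one visible action can always be pushed forward to serve as the leading $\tau$-steps of the next visible action, and that the final block of trailing $\tau$-steps must be empty because no $\tau$-transition reaches a verdict. Once these are recorded, the remainder of the induction is routine.
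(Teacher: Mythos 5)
Your proof is correct and follows essentially the same route as the paper's: both generalize to an arbitrary starting state $q$ (via $A_q(p)$), induct on the length of $t$, and use the absence of $\tau$-transitions into verdicts for the base case. Your explicit treatment of the trailing-$\tau$ bookkeeping in the step $q' \xRightarrow{t'} \yes$ versus $q'' \xRightarrow{t'} \yes$ is a point the paper's chain of ``iff''s passes over silently, so your version is if anything slightly more careful.
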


\begin{proof}
	We actually prove that for every $q \in Q$, $A_q(p) = (Q,\mycap{Act},\delta,q,F)$ accepts $t$ iff $q \xRightarrow{t} \yes$ and we do this by induction on $t$.
	If $t = \epsilon$, then $A_q(p)$ accepts iff $q \in F$ iff $q = \yes$ iff $q \Rightarrow \yes$. If $t = \alpha t'$, then\\
	  $A_q(p)$ accepts $t$
	  \\
	  \emph{iff} 
	  there is some $q' \in \delta(q,\alpha)$ such that $A_{q'}(p)$ accepts $t'$
	  \\ 
	  \emph{iff}
	  there is some $q'$ s.t. $q \Rightarrow\xrightarrow{\alpha} q'$ and $q' \xRightarrow{t'} \yes$
		\\
	  \emph{iff} 
	  $q \xRightarrow{t} \yes$. 
\qed\end{proof}

Notice that $A(p)$ has at most $|p|$ states (because $Q$ only includes submonitors of $p$), but probably fewer, since two occurrences of the same monitor as submonitors of $p$ give the same state --- and we can cut that down a bit by removing submonitors which can only be reached through $\tau$-transitions. Furthermore, if $p$ is deterministic, then $A(p)$ is deterministic.

\begin{corollary}
	For every (deterministic) monitor $p$, there is an (resp. deterministic) automaton which accepts the same language and has at most $|p|$ states.
\end{corollary}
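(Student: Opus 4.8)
The plan is to read the Corollary off directly from Proposition~\ref{prp:A(p)is_good} together with the two observations recorded immediately above it, taking the witnessing automaton to be $A(p)$ itself. The whole argument then reduces to checking three points: that $A(p)$ recognizes the language of $p$, that it has at most $|p|$ states, and that it is deterministic whenever $p$ is.

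For the language, I would invoke Proposition~\ref{prp:A(p)is_good}: for every trace $t$, $A(p)$ accepts $t$ iff $p \xRightarrow{t} \yes$. Since a monitor $p$ was defined to recognize the language $\{t \in \mycap{Act}^* \mid p \xRightarrow{t} \yes\}$, this says exactly that $L(A(p))$ is the language recognized by $p$, so no further work is needed here. For the size bound, the state set $Q$ of $A(p)$ is by construction the set of submonitors of $p$; since $|p|$ was defined to coincide with the total number of submonitor occurrences in $p$, the number of \emph{distinct} submonitors is at most $|p|$, and hence $A(p)$ has at most $|p|$ states (in fact typically fewer, as remarked above).

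The one point that genuinely requires an argument is the determinism claim, and this is where I expect the only real work to lie. Here I would show that when $p$ is deterministic, $\delta$ is a partial function, i.e.\ for every state $q$ and action $\alpha$ there is at most one $q'$ with $q \Rightarrow \xrightarrow{\alpha} q'$. The key structural observation is that in a deterministic monitor the $\tau$-transition relation is itself deterministic and never conflicts with a visible step: every sum of at least two summands has the shape $\sum_{\alpha \in A}\alpha.m_\alpha$, so such sums carry no $\tau$-transitions, and the only $\tau$-steps come from unfolding a unique $\rec\ x.m_x$ or a unique $x$. Consequently the $\tau$-closure of $q$ is a linear chain, the $\alpha$-step in $q \Rightarrow \xrightarrow{\alpha} q'$ can only be taken from the unique action-guarded monitor in that chain, and within it there is at most one summand guarded by $\alpha$; combined with the remark following the definition of determinism (every $\alpha$-derivation out of a deterministic monitor begins with a unique transition) this forces $q'$ to be unique. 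Having verified these three points, the Corollary follows immediately by exhibiting $A(p)$.

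One subtlety I would flag explicitly, in light of the earlier discussion of systems M and N, is that determinism must be judged on the whole monitor $p$ (the initial $p_0$) and not on individual submonitors reached via $\tau$: a free variable $x$ may satisfy the structural condition vacuously while its unfolding $p_x$ does not. Since the Corollary hypothesis is that $p$ itself is deterministic, this causes no difficulty, but it is the place where I would be most careful to avoid circular reasoning.
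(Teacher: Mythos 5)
Your proposal is correct and follows essentially the same route as the paper: the paper also takes $A(p)$ as the witness, cites Proposition~\ref{prp:A(p)is_good} for the language, and observes in the preceding remark that $Q$ consists of submonitors of $p$ (hence at most $|p|$ states) and that $A(p)$ is deterministic when $p$ is. Your expansion of the determinism claim (sums in a deterministic monitor carry no $\tau$-transitions, so the $\tau$-closure is a chain and the $\alpha$-successor is unique) is a correct filling-in of what the paper merely asserts, and anticipates its later Lemma~\ref{lem:no_tau_from_det_sum} and Corollary~\ref{cor:all_det_tau}.
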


\begin{corollary}
	All languages recognized by monitors are regular.
\end{corollary}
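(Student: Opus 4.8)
The plan is to recognize that this statement is an immediate consequence of the construction $A(p)$ together with Proposition~\ref{prp:A(p)is_good}; the real content has already been established, so the task here is just to assemble the pieces and check finiteness. First I would recall the definition given earlier in Section~\ref{sec:def}: a monitor $p$ recognizes the language $L_p = \{t \in \mycap{Act}^* \mid p \xRightarrow{t} \yes\}$. Fixing an arbitrary monitor $p$, I would then take the finite automaton $A(p)$ defined just above and invoke Proposition~\ref{prp:A(p)is_good}, which states that $A(p)$ accepts a trace $t$ exactly when $p \xRightarrow{t} \yes$. This yields $L(A(p)) = L_p$ directly.

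The one point that genuinely needs to be checked is that $A(p)$ is a bona fide finite automaton, i.e. that its state set $Q$ is finite. This holds because $Q$ is taken to be the set of submonitors of $p$, and since $|p|$ coincides with the total number of submonitor occurrences in $p$ (a finite syntactic object), there are at most $|p|$ distinct such states. Once finiteness is secured, $L_p$ is recognized by a finite automaton and is therefore regular by the definition of a regular language, completing the argument.

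I expect no real obstacle here: all the difficulty was absorbed into establishing the correctness of the transition relation $\delta$ and Proposition~\ref{prp:A(p)is_good}. The only subtlety worth a sentence is that System~N may generate, via rule \mycap{mRecB}, transitions back to supermonitors, but because these remain submonitors of the fixed initial monitor $p$ (as noted in the footnote accompanying the definition of $A(p)$), the state set stays confined to submonitors of $p$ and hence finite. Thus the corollary truly is a corollary, and its proof reduces to citing the preceding proposition and the finiteness of $Q$.
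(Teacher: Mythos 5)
Your proposal is correct and matches the paper's (implicit) argument exactly: the corollary is stated immediately after Proposition~\ref{prp:A(p)is_good} and the observation that $A(p)$ has at most $|p|$ states, and the intended proof is precisely the assembly you describe --- $L(A(p))$ equals the language of $p$ by the proposition, and $Q$ is finite because it consists of submonitors of the fixed syntactic object $p$. Your added remark about rule \mycap{mRecB} keeping all reachable monitors within the submonitors of $p$ is the same point the paper makes in its footnote, so nothing is missing.
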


\subsubsection{From Automata to Monitors}

We would also like to be able to transform a finite automaton to a monitor and thus recognize regular languages by monitors. This is not always possible, though, since there are simple regular languages not recognized by any monitor. 
Consider, for example, $(11)^*$, which includes all strings of ones of even length. If there was such a monitor for the language, since $\epsilon$ is in the language, the monitor can only be \yes, which accepts everything (so, this conclusion is also true for any  regular language of the form $\epsilon + L \neq \mycap{Act}^*$).

One of the properties which differentiates monitors from automata is the fact that verdicts are irrevocable for monitors. Therefore, if for a monitor $p$ and finite trace $t$, $p \xRightarrow{t} \yes$, then for every trace $t'$, it is also the case that $p \xRightarrow{tt'} \yes$ (because of rule \mycap{mVerd}, which yields that for every $t'$, $\yes \xRightarrow{t'} \yes$). So, if $L$ is a regular language on \mycap{Act} recognized by a monitor, then $L$ has the property that for every  $t, t' \in \mycap{Act}^*$, if $t \in L$, then $tt' \in L$. We call such languages irrevocable (we could also call them suffix-closed). Now, consider an automaton which recognizes an irrevocable language. Then, if $q$ is any (reachable) accepting state of the automaton, notice that if we can reach $q$ through a word $t$, then $t$ is in the language and so is every $t \alpha$; therefore, we can safely add an $\alpha$-transition from $q$ to an accepting state (for example, itself) if no such transition exists.
We call an automaton which can always transition from an accepting state  to an accepting state irrevocable. Note that in an irrevocable DFA, all transitions from accepting states go to accepting states.

\begin{corollary}
	A language is regular and irrevocable if and only if it is recognized by an irrevocable NFA.
\end{corollary}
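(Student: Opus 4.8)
The plan is to prove the two directions separately, reusing the informal argument that precedes the statement to handle the construction in the forward implication.

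For the easy direction, I would assume $L = L(A)$ for some irrevocable NFA $A = (Q, \mycap{Act}, \delta, q_0, F)$. Regularity is immediate. To see that $L$ is irrevocable, I would take any $t \in L$ and fix an accepting run of $A$ on $t$ ending in some $q \in F$. For any $t' = \alpha_1 \cdots \alpha_k$, the irrevocability of $A$ guarantees that from each accepting state and each action there is a transition into $F$; so I can extend the run step by step, at each $\alpha_i$ choosing a transition that lands in $F$. This yields an accepting run on $tt'$, hence $tt' \in L$, which is exactly the irrevocability condition.

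For the converse, I would start from an arbitrary NFA $A = (Q, \mycap{Act}, \delta, q_0, F)$ recognizing the regular irrevocable language $L$, and build $A'$ by adding, for every $q \in F$ and every $\alpha \in \mycap{Act}$ for which $\delta$ has no transition from $q$ on $\alpha$ into $F$, the self-loop $(q, \alpha, q)$. Since $q \in F$, every added transition lands in $F$, so $A'$ is irrevocable by construction. It then remains to check $L(A') = L$. The inclusion $L \subseteq L(A')$ is trivial, as $A'$ has only more transitions than $A$.

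The crux — and the step I expect to be the main obstacle — is the reverse inclusion $L(A') \subseteq L$, which is precisely where irrevocability of $L$ is needed. I would take $w = w_1 \cdots w_k \in L(A')$ together with an accepting run $q_0 q_1 \cdots q_k$ of $A'$ on $w$, and examine the first index $i$ (if any) at which the run uses a newly added self-loop. If there is none, the run is already a run of $A$, so $w \in L(A) = L$. Otherwise, the prefix $w_1 \cdots w_{i-1}$ is read using only original transitions and reaches $q_{i-1} = q_i \in F$ (new loops sit only on accepting states), whence $w_1 \cdots w_{i-1} \in L(A) = L$; irrevocability of $L$ then immediately gives $w = (w_1 \cdots w_{i-1})(w_i \cdots w_k) \in L$. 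Either way $w \in L$, completing the argument. Unreachable accepting states need no separate treatment, since adding outgoing self-loops never makes a state reachable, so they affect neither the recognized language nor the irrevocability of $A'$.
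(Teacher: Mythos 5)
Your proposal is correct and follows essentially the same route as the paper, which justifies this corollary only by the informal discussion immediately preceding it: the forward direction extends an accepting run action by action using the irrevocability of the automaton, and the converse adds the missing transitions out of accepting states and uses the irrevocability of the language to show no new words are accepted. Your explicit treatment of the reverse inclusion $L(A')\subseteq L$ (splitting at the first newly added self-loop) and of unreachable accepting states merely makes precise what the paper leaves implicit.
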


\begin{corollary}
	A language is regular and irrevocable if and only if it is recognized by an irrevocable DFA.
\end{corollary}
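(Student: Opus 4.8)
The plan is to reduce both directions to the immediately preceding corollary (the NFA characterization of regular irrevocable languages) together with the subset construction stated earlier as Rabin's theorem; the only genuinely new content is the observation that the subset construction carries irrevocability from NFAs to DFAs.

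For the ``if'' direction, I would first note that an irrevocable DFA is a special case of an irrevocable NFA: in a DFA the transition on each symbol is unique, so the DFA condition that every transition out of an accepting state lands in an accepting state is exactly the NFA condition that one can always move from an accepting state to an accepting state. Hence, if $L$ is recognized by an irrevocable DFA, it is recognized by an irrevocable NFA, and the preceding corollary immediately yields that $L$ is regular and irrevocable. One could alternatively argue directly: a run on $t \in L$ ends in an accepting state, and since from accepting states one only moves to accepting states, the run on any extension $t t'$ stays in accepting states, so $t t' \in L$ and $L$ is irrevocable, while regularity is immediate since an irrevocable DFA is a DFA.

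For the ``only if'' direction, suppose $L$ is regular and irrevocable. By the preceding corollary there is an irrevocable NFA $A = (Q, \mycap{Act}, \delta, q_0, F)$ with $L(A) = L$. I would then apply the subset construction to obtain an equivalent DFA $A' = (2^Q, \mycap{Act}, \delta', \{q_0\}, F')$, where a subset state $S$ is accepting iff $S \cap F \neq \emptyset$ and $\delta'(S,\alpha) = \bigcup_{q \in S}\delta(q,\alpha)$; by Rabin's theorem $L(A') = L(A) = L$. The key step, and the only place where irrevocability of $A$ is used, is to check that $A'$ is itself irrevocable. Take any accepting state $S$ of $A'$ and any $\alpha \in \mycap{Act}$; then there is some $q \in S \cap F$, and since $A$ is irrevocable there is an accepting state $q' \in \delta(q,\alpha) \cap F$. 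As $q' \in \delta(q,\alpha) \subseteq \delta'(S,\alpha)$, the state $\delta'(S,\alpha)$ meets $F$ and is therefore accepting, so every transition out of an accepting state of $A'$ lands in an accepting state. This shows $A'$ is an irrevocable DFA recognizing $L$.

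The hard part is exactly this last verification that the subset construction preserves irrevocability; everything else is bookkeeping and a direct appeal to the preceding corollary. I expect no subtlety beyond the convention that $\delta'$ may be partial (empty-set transitions), which does not interfere with the argument since, as just shown, accepting states always have defined transitions to accepting states.
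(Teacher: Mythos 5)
Your proof is correct and follows essentially the same route as the paper, whose entire proof is the one-line observation that the subset construction applied to an irrevocable NFA yields an irrevocable DFA; you have simply spelled out that observation (and the easy converse) in detail.
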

\begin{proof}
	Simply notice that the usual subset construction on an irrevocable NFA gives an irrevocable DFA.
\qed\end{proof}

	Let $A = (Q,\mycap{Act},\delta,q_0,F)$ be an automaton and $n = |Q|$.
For $1 \leq k \leq n$, $q_1,q_2,\ldots,q_k \in Q$ and $\alpha_1,\alpha_2,\ldots,\alpha_{k-1} \in \mycap{Act}$, $P = q_1\alpha_1 q_2\alpha_2\cdots \alpha_{k-1}q_k$ is a path of length $k$ on the automaton if all $q_1,q_2,\ldots,q_k$ are distinct and for $0 < i < k$, $q_{i+1} \in \delta(q_i,\alpha_i)$. Given such path, $P|_{q_i} = q_1,q_2,\ldots,q_i$ and $q_P = q_k$. By $q \in P$ we (abuse notation and) mean that $q$ appears in $P$.

\begin{theorem}\label{thm:nfa_to_monitor}
	Given an irrevocable NFA 
	of $n$ states, there is a monitor 
	of size $2^{O(n \log n)}$ which accepts the same traces as the automaton.
\end{theorem}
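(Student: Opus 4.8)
The plan is to construct the monitor by unfolding the automaton along its \emph{simple} paths (the paths defined just above, in which all states are distinct), introducing a recursion variable each time a transition would revisit a state already on the current path. Concretely, for every path $P = q_1\alpha_1\cdots\alpha_{k-1}q_k$ of $A$ (with $q_1 = q_0$ when $P$ starts the unfolding) I would define a submonitor $m_P$ as follows: if $q_P \in F$, set $m_P = \yes$ (sound precisely because the automaton is irrevocable, so reaching an accepting state means every continuation is accepted); if $q_P \notin F$ and $q_P$ has no outgoing transition, set $m_P = \mend$; otherwise set
\[
m_P \;=\; \rec x_P.\sum_{\alpha \in \mycap{Act}}\ \sum_{q' \in \delta(q_P,\alpha)} \alpha.n_{P,\alpha,q'},
\]
where $n_{P,\alpha,q'} = m_{P\alpha q'}$ if $q' \notin P$, and $n_{P,\alpha,q'} = x_{P|_{q'}}$ if $q'$ already occurs in $P$. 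The whole monitor is $m_{P_0}$ for the trivial path $P_0 = q_0$. Since the states of $P$ are distinct, $q'$ occurs at most once in $P$, so $P|_{q'}$ and hence $x_{P|_{q'}}$ are well defined; moreover $P|_{q'}$ is a prefix of $P$, so its binder $\rec x_{P|_{q'}}$ is an ancestor of $m_P$ in the term and the variable is in scope. Thus $m_{P_0}$ is a closed monitor.

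Second, I would bound the size. The nodes of the term are in bijection with the simple paths used in the unfolding. A simple path of length $k \le n$ is determined by a sequence of $k$ distinct states (at most $n!$ choices) together with a label sequence of length $k-1$ (at most $|\mycap{Act}|^{k-1}$ choices); since $|\mycap{Act}|$ is constant, the number of paths is at most $n \cdot n! \cdot |\mycap{Act}|^{n} = 2^{O(n\log n)}$. Each $m_P$ contributes, apart from its subterms $m_{P\alpha q'}$, only a $\rec$ binder and a sum of at most $|\mycap{Act}| \cdot n = O(n)$ action-prefixed summands (each summand being either a single variable or a pointer to a child). Hence $|m_{P_0}| = O(n)\cdot 2^{O(n\log n)} = 2^{O(n\log n)}$, as required.

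Third, for correctness I would work in system N and prove that the reachable configurations of $m_{P_0}$ are exactly the $m_P$ and that the map $m_P \mapsto q_P$ behaves like a bisimulation onto $A$. The central lemma, proved by induction on $|t|$ (universally over paths $P$), is that $m_P \xRightarrow{t} \yes$ iff $A$ started from $q_P$ accepts $t$; here I use that, by irrevocability, ``$A$ from $q_P$ accepts $t$'' is equivalent to ``some prefix of $t$ drives $q_P$ into an accepting state''. The base case $t = \epsilon$ holds because the only way $m_P$ reaches $\yes$ by $\tau$-moves alone is $q_P \in F$. For $t = \alpha t'$: if $q_P \in F$ then $m_P = \yes$ accepts everything, matching irrevocability; otherwise an $\alpha$-move of $m_P$ uses a summand $\alpha.n_{P,\alpha,q'}$, and after the $\tau$-closure (which, for a back-edge, unfolds $x_{P|_{q'}} \xrightarrow{\tau} m_{P|_{q'}}$) the configuration is some $m_{P''}$ with $q_{P''} = q'$ for $q' \in \delta(q_P,\alpha)$. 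Applying the induction hypothesis to $m_{P''}$ and $t'$ and matching with the nondeterministic successors of $q_P$ yields the step. Taking $P = P_0$ gives $m_{P_0} \xRightarrow{t}\yes$ iff $A$ accepts $t$.

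The main obstacle I anticipate is this correctness lemma, and specifically justifying that resetting the path context along a back-edge is harmless: when a transition returns to a state $q_i$ already on the path, the monitor jumps to $m_{P|_{q_i}}$ rather than to a configuration recording the full history, yet the future behavior of $A$ depends only on the current state $q_i$ and not on how it was reached. Making this precise --- that $m_P \mapsto q_P$ is well defined on reachable configurations and that the free variables resolve to the intended ancestor binders in system N --- is the delicate part; the path count and the irrevocability bookkeeping are then routine.
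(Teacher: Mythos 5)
Your proposal is correct and follows essentially the same route as the paper: the paper also unfolds the automaton along simple paths $P$, sets $p(P)=\yes$ at accepting states (using irrevocability), turns back-edges into occurrences of $x_{P|_{q}}$ bound by an ancestor $\rec$, bounds the size by counting paths ($\leq |\mycap{Act}|^{n-1}\cdot n!$, giving $2^{O(n\log n)}$), and proves the same lemma ``$p(P)\xRightarrow{t}\yes$ iff $A_{q_P}$ accepts $t$'' by induction on $t$. The only cosmetic differences are that the paper first collapses $F$ to a single accepting state and does not introduce an explicit $\mend$ case for dead states.
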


\begin{proof}
	Let $A = (Q,\mycap{Act},\delta,q_0,F)$ be an irrevocable finite automaton and $n = |Q|$. We can assume that $F$ has a single accepting state (since all states in $F$ behave the same way), which we call $Y$. 
	For some $q \in Q$, $A_q = (Q,\mycap{Act},\delta,q,F)$, which is the same automaton, but the run starts from $q$.
	Given a set $S$ of monitors, $\sum S$ is some sum of all elements of $S$.

	For every path $P = q_1\alpha_1\cdots \alpha_{k-1}q_k$ of length $k \leq n$ on $Q$,
	 we construct a monitor $p(P)$ by recursion on $n-k$.\\
	If $q_P = Y$, then $p(P) = \yes$. Otherwise,
	$$p(P) =  \rec\ x_P.\left( 
\begin{array}{c}
	\sum \{\alpha.p(P\alpha q) \mid 
	\alpha \in \mycap{Act} \text{ and } 
	q \in \delta(q_P,a) \text{ and } q \notin P \}\\
	+ \\
	\sum \{\alpha.x_{P|_q} \mid 
	\alpha \in \mycap{Act} \text{ and } 
	q \in \delta(q_P,a) \text{ and } q \in P \}
\end{array}
	\right) 
	.$$
	This was a recursive definition, because if $n = k$, all transitions from $q_P$ lead back to a state in $P$.
	Let $p = p(q_0)$. 
	Notice that $p$ satisfies our assumptions that all variables $x$ (i.e. $x_P$) are bound by a unique $p_x$ (i.e. $p(P)$). 
	Furthermore, following the definition above, for each path $P$ originating at $q_0$ we have generated at most $2|\mycap{Act}|\cdot n$ new submonitors of $p(P)$ --- this includes all the generated sums of submonitors of the forms $\alpha.p(P\alpha q)$ and $\alpha.x_{P_q}$ and $p(P)$ itself.
	Specifically, if the number of transitions from each state is at most $\Delta$ ($\Delta$ is at most $n |\mycap{Act}|$), then 
	for each path $P$ originating at $q_0$ we have generated at most $2\Delta$ new submonitors of $p(P)$.
	If the number of paths originating at $q_0$ is $\Pi$, then 
	$$|p| \leq 2 \Delta  \cdot \Pi.$$
	Let $P = q_1\alpha_1 q_2\alpha_2\cdots \alpha_{k-1}q_k$ be a path on $Q$. Notice that $q_1 q_2 \cdots q_k$ is a permutation of length $k$ of elements of $Q$ and $\alpha_1\alpha_2\cdots\alpha_{k-1}$ is a $(k-1)$-tuple of elements form $\mycap{Act}$.
	Therefore, the number of paths originating at $q_0$ is 
	at most $$\Pi \ \leq \ \sum_{k=1}^{n-1} |\mycap{Act}|^k \cdot (k!) \leq (n-1)!\cdot |\mycap{Act}|^{n-1} \cdot n \ = \ |\mycap{Act}|^{n-1} \cdot (n!)$$
	and thus,
	$$|p| \ \leq \ 2 n |\mycap{Act}|  \cdot 
	|\mycap{Act}|^{n-1} \cdot (n!)
	\ = \  2 n |\mycap{Act}|^{n} \cdot (n!)
	\ = \  2^{O(n \log n)}.$$
	To prove that $p$ accepts the same traces as $A$, we prove the following claim. 
	
	\emph{Claim:} for every such path $P$, $p(P) \xRightarrow{t} \yes$ if and only if $A_{q_P}$ accepts $t$. \\
	We prove the claim by induction on $t$. \\
	If $t = \epsilon$, then $p(P) \xRightarrow{t} \yes$ iff $p(P) = \yes$ iff $q_P = Y$. \\
	If $t = \alpha t'$ and $A_{q_P}$ accepts $t$, 
	then there is some $q \in \delta(q_P,\alpha)$, such that $A_q$ accepts $t'$.
	We consider the following cases: 
	\begin{itemize}
		\item[-] 
		if $q_P = Y$, then $p(P) = \yes \xRightarrow{t} \yes$;
		\item[-]
		if $q \in P$, then $p(P) \xrightarrow{\alpha} x_{P_q} \xrightarrow{\tau} p(P_q)$ and by the 
		inductive hypothesis, $p(P_q) \xRightarrow{t'}\yes$;
		\item[-] 
		otherwise, $p(P) \xrightarrow{\tau} \xrightarrow{\alpha} p(P\alpha q)$ 
		and by the 
		inductive hypothesis, $p(P\alpha q) \xRightarrow{t'}\yes$.
	\end{itemize}
	If $t = \alpha t'$ and $p(P) \xRightarrow{t} \yes$ and $q_P \neq Y$, 
	then there is some $q \in \delta(q_P,\alpha)$, such that
	either 
	$p(P) \xrightarrow{\tau} \xrightarrow{\alpha} p(P\alpha q) \xRightarrow{t'} \yes$ (when $q \notin P$), 
	or 
	$p(P) \xrightarrow{\tau} \xrightarrow{\alpha}  x_{P|_q} \xrightarrow{\tau} p(P|_q)  \xRightarrow{t'} \yes$ (when $q \in P$);
	in both cases, by the inductive hypothesis, $A_q$ accepts $t'$, so $A$ accepts $t$.
\qed\end{proof}

\begin{corollary}\label{cor:dfa_to_monitor} 
	Given an irrevocable DFA of $n$ states, there is a deterministic monitor of size at most  $2 n \cdot |\mycap{Act}|^{n} = 2^{O(n)}$ which accepts the same traces as the automaton.\footnote{Note that if $|\mycap{Act}| = 1$, then this Corollary claims a linear bound on the size of the deterministic monitor with respect to the number of states of the DFA. See Corollary \ref{cor:unary_is_easy} at the end of this section and the discussion right above it for more details.}
\end{corollary}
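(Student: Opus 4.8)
The plan is to reuse verbatim the monitor $p = p(q_0)$ constructed in the proof of Theorem \ref{thm:nfa_to_monitor}, observing that an irrevocable DFA is in particular an irrevocable NFA. Consequently that construction already produces a monitor accepting exactly the traces of the automaton, and the correctness of acceptance is inherited unchanged — the claim in this corollary about which traces are accepted requires no new argument. What remains is to (i) sharpen the size estimate by exploiting determinism of the automaton and (ii) verify that the resulting monitor is itself deterministic.

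For the size bound, I would revisit the two quantities driving the estimate $|p| \leq 2\Delta \cdot \Pi$ from the theorem, where $\Delta$ is the maximum out-degree of a state and $\Pi$ is the number of paths originating at $q_0$. Since $A$ is deterministic, $\delta(q,\alpha)$ is a single state for each $q$ and $\alpha$, so every state has at most $|\mycap{Act}|$ outgoing transitions and $\Delta \leq |\mycap{Act}|$. Moreover, a path $P = q_0\alpha_1 q_1\cdots\alpha_{k-1}q_k$ originating at $q_0$ is completely determined by its action word $\alpha_1\cdots\alpha_{k-1}$, because following those actions from $q_0$ fixes every intermediate state. Thus the map sending each path to its action word is injective, and since paths have length at most $n$ (hence at most $n-1$ actions), $\Pi \leq \sum_{k=0}^{n-1}|\mycap{Act}|^k \leq n\cdot|\mycap{Act}|^{n-1}$. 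Combining the two estimates gives $|p| \leq 2|\mycap{Act}| \cdot n|\mycap{Act}|^{n-1} = 2n|\mycap{Act}|^n = 2^{O(n)}$, as required; for $|\mycap{Act}| = 1$ this collapses to the linear bound $2n$ noted in the footnote.

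For determinism, I would inspect the single sum appearing in each $p(P)$. In the general construction this sum ranges over all pairs $(\alpha,q)$ with $q \in \delta(q_P,\alpha)$, contributing a summand $\alpha.p(P\alpha q)$ when $q \notin P$ and $\alpha.x_{P|_q}$ when $q \in P$. Because $A$ is deterministic, for each $\alpha$ there is at most one such $q$, so each action contributes at most one summand, of the form $\alpha.m_\alpha$; hence the whole sum has the shape $\sum_{\alpha \in A}\alpha.m_\alpha$ with $A \subseteq \mycap{Act}$, matching the definition of a deterministic monitor. As this holds for every subpath $P$, the initial monitor $p = p(q_0)$ is deterministic, and — as discussed in the Semantic Transformations subsection — it is precisely the determinism of the initial monitor that we require.

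The only genuinely new content over the theorem is the observation that determinism of $A$ simultaneously caps the out-degree, replaces the factorial factor in the path count (reducing $|\mycap{Act}|^{n-1}\cdot n!$ to $\leq n|\mycap{Act}|^{n-1}$), and forces each action into at most one summand. The main point to get right, though it is not hard, is the injectivity of the path-to-action-word map, which is exactly what removes the factorial blow-up present in the NFA case.
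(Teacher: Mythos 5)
Your proposal is correct and follows essentially the same route as the paper's own proof: reuse the construction of Theorem \ref{thm:nfa_to_monitor}, bound $\Delta$ by $|\mycap{Act}|$ and $\Pi$ by $n\cdot|\mycap{Act}|^{n-1}$ via the injectivity of the path-to-action-word map, and check that determinism of $A$ forces each sum into the shape $\sum_{\alpha \in A}\alpha.m_\alpha$. No gaps.
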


\begin{proof}
	Notice that the proof of Theorem \ref{thm:nfa_to_monitor} works for DFAs as well.
	We use the same construction.
	Unless $p = \yes$, every recursive operator (except the first one), variable, and value is prefixed by an action; furthermore, if $A$ is deterministic and $\alpha.p_1$, $\alpha.p_1$ appear as part of the same sum, then $p_1 = p_2$. 
	So, we have constructed a deterministic monitor.
	
	Since $A$ is deterministic, every path in $A$, $q_1\alpha_1 \cdots \alpha_{k-1}q_k$ is fully defined by the sequence of actions which appear in the path, $\alpha_1\alpha_2 \cdots \alpha_{k-1}$. Since $k \leq n$, the number of paths in $A$ is thus at most $$\sum_{k = 1}^{n} |\mycap{Act}|^{k - 1} < n \cdot |\mycap{Act}|^{n - 1}.$$
	As we mentioned in the proof of  Theorem \ref{thm:nfa_to_monitor}, if the number of paths originating at $q_0$ is $\Pi$ and the number of transitions from each state is at most $\Delta$, then 
	$$|p| \leq 2 \Delta  \cdot \Pi $$
	and therefore, 
	$$|p| \leq 2 n |\mycap{Act}|^n.
	$$
\qed\end{proof}

\begin{corollary}
	A language is regular and irrevocable if and only if it is recognized by a (deterministic) monitor.
\end{corollary}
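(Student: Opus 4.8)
The plan is to assemble this corollary from the results already established in this subsection, treating the ``only if'' and ``if'' directions separately and, within each, handling the monitor and the deterministic-monitor variants side by side.

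For the ``only if'' direction I would take a (deterministic) monitor $p$ recognizing a language $L$ and show that $L$ is both regular and irrevocable. Regularity is immediate from the earlier Corollary stating that all languages recognized by monitors are regular (itself a consequence of Proposition \ref{prp:A(p)is_good}, since $A(p)$ is a finite automaton with at most $|p|$ states). Irrevocability follows from the irrevocability of verdicts: if $t \in L$, then $p \xRightarrow{t} \yes$, and by rule \mycap{mVerd} we have $\yes \xRightarrow{t'} \yes$ for every $t'$, so $p \xRightarrow{tt'} \yes$ and hence $tt' \in L$. Thus $L$ is suffix-closed, i.e. irrevocable.

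For the ``if'' direction, suppose $L$ is regular and irrevocable. By the earlier Corollary characterizing regular irrevocable languages via irrevocable NFAs, there is an irrevocable NFA recognizing $L$, and Theorem \ref{thm:nfa_to_monitor} then yields a monitor accepting exactly the traces of that automaton, i.e. recognizing $L$. For the deterministic variant I would instead invoke the companion Corollary characterizing regular irrevocable languages via irrevocable DFAs to obtain an irrevocable DFA for $L$, and then apply Corollary \ref{cor:dfa_to_monitor}, which converts an irrevocable DFA into an equivalent deterministic monitor.

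I do not expect a genuine obstacle here, since the substantive work was already carried out in Theorem \ref{thm:nfa_to_monitor}, in Corollary \ref{cor:dfa_to_monitor}, and in the automata-level characterizations of regular irrevocable languages. The only point demanding minor care is keeping the two parenthesized cases consistently aligned throughout the argument, so that the deterministic direction pairs the DFA characterization with Corollary \ref{cor:dfa_to_monitor} while the general direction pairs the NFA characterization with Theorem \ref{thm:nfa_to_monitor}.
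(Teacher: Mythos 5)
Your proposal is correct and matches the paper's intent: the paper states this corollary without an explicit proof precisely because it is the immediate assembly of the preceding results — the monitor-to-automaton direction plus the irrevocability of verdicts for the ``only if'' part, and the irrevocable NFA/DFA characterizations combined with Theorem \ref{thm:nfa_to_monitor} and Corollary \ref{cor:dfa_to_monitor} for the ``if'' part. Your care in pairing the DFA characterization with Corollary \ref{cor:dfa_to_monitor} for the deterministic variant is exactly the right reading of the parenthesized statement.
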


\begin{corollary}\label{cor:determinization_upper}
	Let $p$ be a monitor for $\varphi$. Then, there is a deterministic monitor for $\varphi$ of size $2^{O\left(2^{|p|}\right)}$.
\end{corollary}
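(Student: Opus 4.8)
The plan is to compose the monitor-to-automaton and automaton-to-monitor translations established above, inserting the classical subset construction in the middle so that the determinization happens on the automaton side, where the blowup is transparent. First I would pass from the given monitor $p$ to the automaton $A(p)$ of Proposition \ref{prp:A(p)is_good}. This automaton has at most $|p|$ states, accepts exactly the traces $t$ with $p \xRightarrow{t} \yes$, and --- since its unique accepting state $\yes$ carries an $\alpha$-loop for every $\alpha \in \mycap{Act}$ by rule \mycap{mVerd} --- is irrevocable. Thus $A(p)$ is an irrevocable NFA of at most $|p|$ states recognizing the language of $p$.

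The second step is to determinize $A(p)$. By Rabin's theorem \cite{rabin1959finite} the subset construction yields an equivalent DFA $A'$ of at most $2^{|p|}$ states, and by the earlier observation that the subset construction carries an irrevocable NFA to an irrevocable DFA, $A'$ is itself irrevocable. Hence $A'$ is an irrevocable DFA of at most $n := 2^{|p|}$ states recognizing $L(A(p))$.

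The final step is to return to the monitor world using Corollary \ref{cor:dfa_to_monitor}, which turns an irrevocable DFA of $n$ states into a deterministic monitor $p'$ of size at most $2n\,|\mycap{Act}|^{n}$ that accepts the same language. Substituting $n = 2^{|p|}$ gives
\[
|p'| \ \leq\ 2\cdot 2^{|p|}\cdot |\mycap{Act}|^{2^{|p|}} \ =\ 2^{O\left(2^{|p|}\right)},
\]
using that $|\mycap{Act}|$ is of constant size. Since $p'$ accepts exactly the traces accepted by $p$, it monitors for $\varphi$ as well, which completes the argument.

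I expect essentially no serious obstacle here, since the result is a bookkeeping composition of three facts already proved; the only points requiring care are verifying that $A(p)$ is genuinely irrevocable (so that the irrevocable-DFA version of Corollary \ref{cor:dfa_to_monitor} applies rather than some weaker variant), and checking that feeding the exponential state count $n = 2^{|p|}$ into the singly-exponential DFA-to-monitor bound collapses to the claimed doubly-exponential size --- which it does precisely because $|\mycap{Act}|$ is constant.
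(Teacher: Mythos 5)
Your proposal is correct and follows essentially the same route as the paper's own (very terse) proof: monitor to NFA via $A(p)$, subset construction to a DFA of $2^{|p|}$ states, and back to a deterministic monitor via Corollary \ref{cor:dfa_to_monitor}. Your added care in checking that $A(p)$ is irrevocable and that the constant-size alphabet makes the final bound collapse to $2^{O(2^{|p|})}$ merely makes explicit what the paper leaves implicit.
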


\begin{proof}
	Transform $p$ into an equivalent NFA of at most $|p|$ states, then to a DFA of $2^{|p|}$ states, and then, into an equivalent deterministic monitor of size $2^{O\left(2^{|p|}\right) }.$
\qed\end{proof}

Now, this is a horrible upper bound. 
Unfortunately, as the remainder of this section demonstrates, we cannot do much better.

\subsubsection{Lower Bound for (nondeterministic) Monitor Size}

The family of languages we consider is (initially) the following. For $n \geq 1$, let
$$L_n = \{ \alpha 1 \beta \in \{0,1\}^* \mid |\beta| = n-1 \}.$$ This is a well-known example of a regular language recognizable by an NFA of $n+1$ states, by a DFA of $2^n$ states, but by no DFA of fewer than $2^n$ states. 
As we have mentioned, monitors do not behave exactly the way automata do and can only recognize irrevocable languages. Therefore, we modify $L_n$ to mark the ending of a word with a special character, $e$, and make it irrevocable. Thus,
\[ M_n = \{\alpha e \beta \in \{0,1,e\}^* \mid \alpha \in L_n  \}.\footnote{Note that we can also allow for infinite traces without consequence.} \]

Note that an automaton (deterministic or not) accepting $L_n$ can easily be transformed into one (of the same kind) accepting $M_n$ by introducing two new states, $Y$ and $N$, where $Y$ is accepting and $N$ is not, so that all transitions from $Y$ go to $Y$ and from $N$ go to $N$ ($N$ is a junk state, thus unnecessary for NFAs); then we add an $e$-transition from all accepting states to $Y$ and from all other states to $N$. The reverse transformation is also possible: 
From an automaton accepting $M_n$, we can have a new one accepting $L_n$ by shedding all $e$-transitions and turning all states that can $e$-transition to an accepting state of the old automaton to accepting states.
The details are left to the reader.

Thus, there is an an NFA for $M_n$ with $n+2$ states and a DFA for $M_n$ with $2^n +2$ states, but no less.
We construct a (nondeterministic) monitor for $M_n$ of size $O(2^n)$. For every $\alpha \in \{0,1\}^*$, where $|\alpha| = k \leq n - 1$, we construct a monitor $p_\alpha$ by induction on $n-k$: if $k = n - 1$, 
$p_\alpha = e.\yes$; otherwise, $p_\alpha = 0.p_{\alpha 0} + 1.p_{\alpha 1}$. Let $p = \rec\ x.(0.x + 1.x + 1.p_\epsilon)$. Then, $p$ mimics the behavior of the NFA for $M_n$ and $|p| = 8 + |p_\epsilon | = O(2^n)$.

\begin{definition}
	We call a derivation $ p \xRightarrow{t} q$ simple, if rules \mycap{mRecB} and \mycap{mVerd} are not used in the proof of any transition of the derivation. 
	We say that a trace $t \in \mycap{Act}^*$ is simple for monitor $p$ if there is a simple  derivation $ p \xRightarrow{t} q$. We say that a set $G$ of simple traces for $p$ is simple for $p$.
\end{definition} 

\begin{lemma}
	Every subderivation of a simple derivation is simple.
\end{lemma}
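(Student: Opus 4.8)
The plan is to unwind the definitions, since the statement is essentially a heredity property of the set of rules that a derivation is allowed to use. First I would recall that a derivation $d : p \xRightarrow{t} q$ is, by the conventions established in Section~\ref{sec:def} and this subsection, a finite sequence of single-step transitions $p = p_0 \xrightarrow{\mu_1} p_1 \xrightarrow{\mu_2} \cdots \xrightarrow{\mu_r} p_r = q$ whose non-$\tau$ labels spell out $t$, where each step $p_{i-1} \xrightarrow{\mu_i} p_i$ is justified by a proof tree built from the rules of system N. Simplicity of $d$ means precisely that neither \mycap{mRecB} nor \mycap{mVerd} occurs as a rule application anywhere in any of these proof trees.

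Next I would fix what a subderivation is, so that the claim is unambiguous: it is any derivation $d'$ obtained from $d$ either by taking a contiguous block $p_i \xRightarrow{t'} p_j$ of consecutive steps (with $0 \leq i \leq j \leq r$ and $t'$ the corresponding infix of $t$), or by descending into the proof tree of a single step through the premises of rules \mycap{mSelL} and \mycap{mSelR}, so that $d'$ is a subtree that is itself a valid transition derivation. In either case every rule application that occurs in $d'$ already occurs in $d$.

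The argument then closes immediately: since $d$ is simple, no rule application in $d$ is an instance of \mycap{mRecB} or \mycap{mVerd}; because the applications occurring in $d'$ form a subset of those occurring in $d$, none of them is an instance of \mycap{mRecB} or \mycap{mVerd} either, so $d'$ is simple by definition. If a more structural formulation is preferred, the same conclusion follows by a trivial induction on the length of $d$ (or on the height of the proof tree of a single step), observing that the property ``uses neither \mycap{mRecB} nor \mycap{mVerd}'' is preserved by each of the constructors \mycap{mSelL}, \mycap{mSelR} and is inherited by their premises.

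The only point that genuinely requires care -- and the nearest thing to an obstacle -- is pinning down the notion of subderivation so that the claim is both meaningful and true; once that is settled, simplicity is manifestly a hereditary property and there is no computation to carry out.
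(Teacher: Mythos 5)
Your proof is correct and is exactly the reasoning the paper intends: the paper states this lemma without proof precisely because, as you observe, simplicity is defined by the absence of \mycap{mRecB} and \mycap{mVerd} among the rule applications, and a subderivation's rule applications form a subset of the original's. Your care in pinning down the notion of subderivation is the only substantive content here, and it is handled appropriately.
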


\begin{corollary}
	If $t'\sqsubseteq t$ and $t$ is a simple trace for monitor $p$, then $t'$ is also a simple trace for $p$.
\end{corollary}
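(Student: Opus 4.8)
The plan is to reduce the statement directly to the preceding lemma, namely that every subderivation of a simple derivation is simple. First I would unpack the two hypotheses: by the definition of prefix given in the conventions, $t' \sqsubseteq t$ means $t = t' t''$ for some $t'' \in \mycap{Act}^*$; and ``$t$ is simple for $p$'' means there is a simple derivation $d \colon p \xRightarrow{t} q$ for some monitor $q$, i.e. a derivation of $p \xRightarrow{t} q$ in which neither \mycap{mRecB} nor \mycap{mVerd} is ever applied.

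The key step is to split $d$ at the prefix $t'$. Since $t = t' t''$, the composite transition $p \xRightarrow{t} q$ factors as $p \xRightarrow{t'} r \xRightarrow{t''} q$ for some intermediate monitor $r$ (reading off the state reached after the actions of $t'$ together with any immediately following $\tau$-steps). The initial segment $p \xRightarrow{t'} r$ is a subderivation of $d$, so by the preceding lemma it is itself simple. Hence there is a simple derivation witnessing $p \xRightarrow{t'} r$, which is exactly what it means for $t'$ to be a simple trace for $p$.

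Finally I would dispatch the degenerate case $t' = \epsilon$ separately, in case one does not wish to read it as a vacuous instance of the split above: the trivial derivation $p \Rightarrow p$ uses no rule at all, so it is simple, and therefore $\epsilon$ is a simple trace for every monitor.

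The only genuinely delicate point is the factorization of $d$ at the boundary between $t'$ and $t''$: the $\tau$-transitions occurring between the last action of $t'$ and the first action of $t''$ may be attributed to either side, so one must check that some legitimate choice of intermediate state $r$ yields a bona fide initial subderivation reading exactly $t'$. This is guaranteed by the definition of $\xRightarrow{t}$ as a sequence of $\xRightarrow{\alpha}$ steps interleaved with $\tau$-closures; once such an initial segment is identified, the subderivation lemma does all the remaining work, so I expect no real obstacle beyond this bookkeeping.
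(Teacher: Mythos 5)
Your argument is correct and is exactly the one the paper intends: the corollary is stated as an immediate consequence of the preceding lemma that every subderivation of a simple derivation is simple, obtained by splitting the derivation for $t = t't''$ at the prefix $t'$. Your extra care about the $\epsilon$ case and the attribution of boundary $\tau$-steps is sound bookkeeping but does not change the route.
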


\begin{lemma}\label{lem:simple_traces}
	Let $p$ be a monitor and $G$ a (finite) simple set of traces for $p$. Then, $|p| \geq |G|$.
\end{lemma}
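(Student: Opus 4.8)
The plan is to prove the bound by structural induction on the monitor $p$ (equivalently, on $|p|$), exploiting the fact that a simple derivation is forbidden from using rule \mycap{mRecB} --- the only rule that can move from a submonitor back up to a containing one. First I would record the guiding observation: each of the rules available to a simple derivation, namely \mycap{mAct}, \mycap{mRecF}, \mycap{mSelL}, and \mycap{mSelR}, takes a monitor to one of its immediate syntactic submonitors ($\alpha.q \to q$, $\rec x.m_x \to m_x$, and $m+n$ to one of its summands). Hence, along a simple derivation, the active monitor strictly descends through the syntax tree of $p$ and can never climb back, which is precisely what makes simple traces cheap to count.

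For the base cases $p = v$ and $p = x$, note that a verdict has no transitions other than those provided by \mycap{mVerd}, and a variable has none other than \mycap{mRecB} (by Lemma \ref{lem:tau_transitions_have_x_for_N}); since both rules are barred, the only simple trace is $\epsilon$, so $|G| \le 1 = |p|$. For $p = \alpha.q$, the sole first step is $p \xrightarrow{\alpha} q$, so every nonempty trace in $G$ has the form $\alpha t'$ with $t'$ simple for $q$; collecting these $t'$ into a set $G'$ gives a simple set for $q$ in bijection with $G \setminus \{\epsilon\}$, and the induction hypothesis yields $|G| \le 1 + |G'| \le 1 + |q| = |p|$. For $p = \rec x.m_x$, the only first step is $p \xrightarrow{\tau} m_x$ via \mycap{mRecF}, so every simple trace for $p$ is already a simple trace for $m_x$; thus $G$ is a simple set for $m_x$, and the hypothesis gives $|G| \le |m_x| < |p|$.

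The main case, and the one I expect to carry the weight of the argument, is the sum $p = m + n$. Here the key point is that any nontrivial simple derivation from $m+n$ must take its very first transition through \mycap{mSelL} or \mycap{mSelR} (these are the only rules yielding a transition out of a sum), thereby committing to $m$ or to $n$; and because no subsequent step can climb back up, the remainder of the derivation stays inside the chosen summand. Consequently each simple trace for $m+n$ is already a simple trace for $m$ or for $n$ (with $\epsilon$ trivially simple for both), so writing $G = G_m \cup G_n$, where $G_m$ and $G_n$ are the traces of $G$ that are simple for $m$ and for $n$ respectively, the induction hypothesis gives $|G| \le |G_m| + |G_n| \le |m| + |n| < |p|$. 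The only point requiring care is justifying the ``no climbing back'' claim rigorously --- that once a simple derivation enters one summand it cannot contribute a step that could only originate from the other --- and this is exactly where the exclusion of \mycap{mRecB} is essential, since that is the unique rule whose target is not a syntactic submonitor of its source.
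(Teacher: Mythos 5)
Your proof is correct and follows essentially the same route as the paper's: a structural induction on $p$ with the identical case analysis (verdicts/variables admit only the empty simple trace, an action prefix shifts the count by one, a sum splits $G$ into simple sets for the two summands, and recursion passes $G$ unchanged to the body). The extra discussion of why simple derivations can never ``climb back'' via \mycap{mRecB} is a sound elaboration of the point the paper leaves implicit.
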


\begin{proof}
	By structural induction on $p$. 
	\begin{description}
		\item[If $p = v$ or $x$,] then $|G|$ is either empty or $\{\epsilon \}$ and $|p| \geq 1$. 
		\item[If $p = \alpha.q$,] then all non-trivial derivations that start from $p$, begin with $p \xrightarrow{\alpha} q$. Therefore, all traces in $G$, except perhaps for $\epsilon$, are of the form $\alpha t$. Let $G_\alpha = \{ t \in \mycap{Act}^* \mid \alpha t \in G \}$. Then, $|G| \leq |G_\alpha| + 1$, as there is a 1-1 and onto mapping from $G \setminus \{\epsilon \}$ to $G_\alpha$, namely $\alpha t \mapsto t$. By the inductive hypothesis, $|q| \geq |G_\alpha|$, so $|p| = |q| + 1 \geq |G_\alpha| +1 \geq |G|$.
		\item[If $p = q+r$,] then notice that all derivations that start from $p$ (including the trivial one, if you consider $p \Rightarrow p$ and $q\Rightarrow q$ to be the same) can also start from either $q$ or $r$. Therefore, $G = G_q \cup G_r$, where $G_q$ is simple for $q$ and $G_r$ is simple for $r$. Then, $|p| = |q| + |r|  \geq |G_q| + |G_r| \geq |G|$.
		\item[If $p = \rec\ x.q$,] then all non-trivial derivations that start from $p$, must begin with $p \xrightarrow{\tau} q$. Therefore, it is not hard to see that $G$ is simple for $q$ as well, so $|p| = |q| +1 > |G|$.
		\qed
	\end{description}
\end{proof}

\begin{corollary}\label{cor:height_to_simple}
	Let $t$ be a simple trace for $p$. Then, $h(p) \geq |t|$.
\end{corollary}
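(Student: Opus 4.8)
The plan is to prove the bound by structural induction on $p$, following the same case analysis as the proof of Lemma~\ref{lem:simple_traces} but tracking the height $h(p)$ in place of the size $|p|$. The whole argument rests on the fact that a simple derivation may use neither \mycap{mRecB} nor \mycap{mVerd}, so that the only available steps are prefix consumption (\mycap{mAct}), choice resolution (\mycap{mSel}), and the forward unfolding \mycap{mRecF}. Among these, the only rule that produces a visible action is \mycap{mAct}, while \mycap{mRecF} contributes a $\tau$-step and \mycap{mSel} merely propagates a summand's move. Thus each symbol of $t$ must be produced by descending past one $\alpha.(-)$ guard, and I would show that every such descent is charged against the height.

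Concretely, I would handle the cases as follows. For the base cases $p = v$ and $p = x$, since \mycap{mVerd} and \mycap{mRecB} are unavailable, neither a verdict nor a variable can move, so the only simple trace is $\epsilon$ and $|t| = 0 \leq 1 = h(p)$. For $p = \alpha.q$, a nonempty simple trace must be $\alpha t'$ with $t'$ simple for $q$; the inductive hypothesis gives $h(q) \geq |t'|$, and since $h(\alpha.q) = h(q)+1$ and $|t| = |t'|+1$ we are done. For $p = q + r$, the derivation acts as a single summand, so $t$ is simple for $q$ or for $r$, and $h(q+r) = \max\{h(q),h(r)\}$ dominates the relevant inductive bound. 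For $p = \rec\ x.q$, the only first move is the $\tau$-step $\rec\ x.q \xrightarrow{\tau} q$, which consumes no action, so $t$ is still simple for $q$; since $h(\rec\ x.q) = h(q)$, the inductive hypothesis transfers verbatim.

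The hard part is the recursion case, and more precisely seeing why the ban on \mycap{mRecB} is indispensable. The height assigns $h(x) = h(\rec\ x.q) = h(q)$, placing no cost on the loop, so if a free variable $x$ were allowed to fire $x \xrightarrow{\tau} p_x$ it could regenerate the body and produce traces far longer than $h(p)$; excluding \mycap{mRecB} is exactly what guarantees that each visible action strictly decreases the remaining height budget. It is also worth contrasting this with Lemma~\ref{lem:simple_traces}: there the sum case needs $|q+r| = |q| + |r|$ because a simple \emph{set} of traces may split across both summands, whereas here a single simple trace lives entirely in one summand, which is why the $\max$ in the definition of $h$ suffices. Once the induction is set up this way, every case is routine, so I expect no further complications.
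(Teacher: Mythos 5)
Your proof is correct and matches the paper's intended argument: the paper's own proof of Corollary~\ref{cor:height_to_simple} simply says it is ``very similar to the proof of Lemma~\ref{lem:simple_traces}'', i.e.\ the same structural induction on $p$ with height in place of size, which is exactly what you carry out. Your observations about why banning \mycap{mRecB} is essential and why $\max$ suffices in the sum case (a single trace cannot split across summands) are accurate elaborations of that same argument.
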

\begin{proof}
	Very similar to the proof of Lemma \ref{lem:simple_traces}.
\qed\end{proof}

\begin{corollary}
	Let $p$ be a monitor and $G$ a simple set of traces for $p$. Then, $G$ is finite.
\end{corollary}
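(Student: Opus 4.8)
The plan is to derive this directly from Lemma \ref{lem:simple_traces} by a cardinality argument, observing that simplicity of a set of traces is a property inherited by every subset. The key preliminary remark is that, by the very definition, a set $G$ is simple for $p$ precisely when each of its members is a simple trace for $p$; this is a per-trace condition, so any subset $G' \subseteq G$ is again simple for $p$.

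First I would argue by contradiction. Suppose $G$ is a simple set of traces for $p$ that is infinite. Since $G$ is infinite, it contains a finite subset $G'$ with $|G'| = |p| + 1$. By the remark above, $G'$ is itself a \emph{finite} simple set of traces for $p$, so Lemma \ref{lem:simple_traces} applies and yields $|p| \geq |G'| = |p| + 1$, which is absurd. Hence no such infinite $G$ can exist.

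Equivalently, and perhaps more cleanly, I would phrase it as a bound: every finite subset $G' \subseteq G$ is simple for $p$, so $|G'| \leq |p|$ by Lemma \ref{lem:simple_traces}. A set all of whose finite subsets have cardinality at most the fixed finite number $|p|$ must itself satisfy $|G| \leq |p|$, and is in particular finite. This simultaneously recovers the quantitative statement $|G| \leq |p|$ for arbitrary simple sets, not just finite ones.

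I do not expect any genuine obstacle here: the entire content has already been packaged into Lemma \ref{lem:simple_traces}, and the corollary merely removes its finiteness hypothesis. The only point requiring a moment's care is the explicit justification that subsets of simple sets are simple, which is immediate from the definition since $G' \subseteq G$ implies every trace of $G'$ is a simple trace for $p$.
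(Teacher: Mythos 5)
Your proof is correct and is essentially the argument the paper intends: the paper's one-line proof (``Notice that $|p| \in \NN$'') is shorthand for exactly the observation that Lemma \ref{lem:simple_traces} bounds every finite simple subset by the fixed natural number $|p|$, so an infinite simple set is impossible. You merely spell out the (valid) detail that subsets of simple sets are simple, which the paper leaves implicit.
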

\begin{proof}
	Notice that $|p| \in \NN$.
\qed\end{proof}

\begin{corollary}
	Let $p$ be a monitor and $t$ a simple trace for $p$. Then, $t$ is finite.
\end{corollary}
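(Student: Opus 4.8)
The plan is to obtain the finiteness of $t$ directly from Corollary \ref{cor:height_to_simple}. That corollary asserts that for every simple trace $t$ for $p$ we have $h(p) \geq |t|$, so the whole task reduces to observing that $h(p)$ is itself a finite quantity. First I would note that the height is defined by a finite recursion over the (finite) syntactic structure of a monitor: the clauses $h(v) = h(x) = 1$, $h(\alpha.q) = h(q)+1$, $h(q+r) = \max\{h(q),h(r)\}$, and $h(\rec\ x.q) = h(q)$ each return a natural number whenever their arguments do, so a straightforward structural induction gives $h(p) \in \NN$ for every monitor $p$.

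Having established $h(p) \in \NN$, I would simply combine it with the inequality supplied by Corollary \ref{cor:height_to_simple} to get $|t| \leq h(p) < \infty$, whence $t$ has finite length and is therefore a finite trace. The only point deserving a word of care is the reading of the phrase ``simple trace.'' In the default setting a trace lies in $\mycap{Act}^*$ and is finite by fiat, so the statement is vacuous there; its actual content surfaces in the extended setting where infinite traces are permitted (cf.\ the footnote accompanying $M_n$). In that setting the corollary says precisely that no infinite trace can be simple for a monitor: were $t$ infinite we would have $|t| = \infty > h(p)$, directly contradicting Corollary \ref{cor:height_to_simple}.

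There is no genuine obstacle to overcome here. All the work has already been done in proving Corollary \ref{cor:height_to_simple} (which in turn mirrors the induction behind Lemma \ref{lem:simple_traces}); the present statement is an immediate consequence of the height bound together with the finiteness of $h(p)$.
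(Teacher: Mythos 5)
Your proof is correct and follows exactly the paper's route: the paper also derives this as a direct consequence of Corollary \ref{cor:height_to_simple}, with the finiteness of $h(p)$ left implicit. Your additional remarks---that $h(p)\in\NN$ by structural induction and that the statement only has content when infinite traces are admitted---are accurate elaborations of the same argument.
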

\begin{proof}
A direct consequence of Corollary \ref{cor:height_to_simple}.
\qed\end{proof}

\begin{lemma}\label{lem:find_px}
	In a derivation $p \xRightarrow{t} x$, such that $x$ is bound in $p$, a sum of $p_x$
	 must appear acting as $p_x$.
\end{lemma}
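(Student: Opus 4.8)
The plan is to argue \emph{positionally} inside the fixed initial monitor $p_0$. Since we are working in system N, every monitor occurring in the derivation is a submonitor-occurrence of $p_0$, and by the standing assumption each variable $x$ occurs only within the scope of its unique binder $p_x = \rec\ x.m_x$; hence every occurrence of $x$ lies within the subterm $m_x$. I would therefore track the derivation $d\colon p \xRightarrow{t} x$ as a sequence of occurrences $c_0 = p,\ c_1,\ \ldots,\ c_\ell = x$ in the parse tree of $p_0$, where each $c_{i+1}$ is produced from $c_i$ by one transition. Call an occurrence \emph{inside} if it lies within the subterm $m_x$ and \emph{outside} otherwise. Because $x$ is bound in $p$ we have $p_x \sqsubseteq p$, so $c_0 = p$ sits at or above the node $p_x$ and is \emph{outside}, whereas $c_\ell = x$ lies within $m_x$ and is \emph{inside}. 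Consequently there is a first index $j$ with $c_j$ outside and $c_{j+1}$ inside, and the whole lemma reduces to showing that this crossing step is exactly the unfolding of a sum of $p_x$, i.e.\ that $c_j$ is a sum of $p_x$ acting as $p_x$.

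To analyse the crossing step I would use the transition characterisations already established. By Lemma~\ref{lem:act_in_three_systems} an action step has the form ``$c_j$ is a sum of $\alpha.q$ and $c_j \xrightarrow{\alpha} q$'', and by Lemma~\ref{lem:tau_transitions_have_x_for_N} a $\tau$-step is either ``$c_j$ is a sum of $p_z$ with $c_{j+1} = m_z$'' or ``$c_j$ is a sum of $w$ with $c_{j+1} = p_w$''. In each case $c_{j+1}$ is obtained by first selecting, via the $\mycap{mSel}$ rules, a non-sum summand $s$ of $c_j$ and then applying $\mycap{mAct}$, $\mycap{mRecF}$, or $\mycap{mRecB}$ to $s$. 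The key positional observation is that summand selection descends only through $+$-nodes of a single sum-block, while the subtree $m_x$ can be entered from above solely through the recursion edge leaving the node $p_x$; hence, when $c_j$ is outside, every summand $s$ of $c_j$ is again outside.

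Running through the possibilities for the crossing transition then leaves a single survivor. If the step is an action ($s = \alpha.q$, $c_{j+1}=q$), then $q$ descends from the outside summand $s$ along an action edge, so $q$ is outside, contradicting that $c_{j+1}$ is inside. If the step is $\mycap{mRecB}$ ($s = w$, $c_{j+1}=p_w$), then the outside occurrence of $w$ forces its binder $p_w$, an ancestor, to be outside as well, again a contradiction. If the step is $\mycap{mRecF}$ on $p_z$ with $z \neq x$, then $c_{j+1}=m_z$ is reached across the recursion edge of $p_z \neq p_x$, which does not enter the subtree $m_x$, so $c_{j+1}$ is outside. The only remaining case is $\mycap{mRecF}$ applied to $p_x$ itself: $c_j$ is a sum of $p_x$ and $c_{j+1} = m_x$, which traverses the edge $p_x \to m_x$ and is precisely a sum of $p_x$ \emph{acting as} $p_x$. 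Since this step occurs in $d$, the claim follows.

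The hard part is making the ``summand selection cannot cross the binder of $x$'' observation fully rigorous. This is what forces us to reason about occurrences in the parse tree of $p_0$ rather than about monitors as mere terms, since the same term may occur both inside and outside $m_x$; and it depends essentially on $p_0$ being closed with unique binders, so that the scope of $\rec\ x$ is entered only by unfolding $p_x$. Once this positional invariant, together with the fact that $+$-selection never traverses the recursion edge into $m_x$, is set up carefully, the case analysis above is routine and the remaining bookkeeping (relating the occurrence sequence to the derivation $d$, and the degenerate cases such as $m_x = x$) is straightforward.
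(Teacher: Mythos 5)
Your proposal is correct, and it turns on the same core insight as the paper's proof: the derivation has to enter the scope of $\rec\ x$, and the only transition that can take it there is the unfolding of $p_x$, selected as a summand via \mycap{mSel} and unfolded via \mycap{mRecF} --- which is precisely what ``a sum of $p_x$ acting as $p_x$'' means. The execution, however, is genuinely different. The paper runs a terse extremal argument on the term-level property ``$x$ is bound in the current monitor'': it takes the longest initial subderivation along which $x$ stays bound and asserts that only \mycap{mRecF} can flip that property. Read literally, that assertion needs care --- an \mycap{mAct} step can discard the summand containing the binder $p_x$, and a later \mycap{mRecB} jump to some $p_y$ above $p_x$ can restore it --- so the paper's one-liner implicitly leans on the fact that the derivation must still reach $x$ afterwards. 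Your first-crossing argument on occurrences in the parse tree of $p_0$ (outside versus inside $m_x$) sidesteps this: the inside region is closed under descent through $+$, action and $\rec$ edges other than $p_x \to m_x$, and under \mycap{mRecB} jumps to binders (an ancestor of an outside occurrence is outside), so your case analysis really does eliminate every rule except \mycap{mRecF} applied to $p_x$ itself. What this buys is rigor exactly where the paper is informal; what it costs is the bookkeeping you flag, namely lifting the derivation from terms to occurrences, which is well defined because every step either descends to a sub-occurrence or jumps to the unique occurrence of a binder. In short: same key idea, but a positional refinement that closes the gaps the paper's proof glosses over.
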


\begin{proof}
	By induction on the length of the derivation. If it is $0$, then $p = x$ and $x$ is not bound. Otherwise, notice that naturally, 
	$x$ is bound in $p$, but $x$ is not bound in $x$. Let $p \xRightarrow{t'} q \xrightarrow{\mu} q'$ be the longest initial subderivation such that 
	$x$ remains bound in $q$. Thus, it must 
	be the case that 
	$x$ is not bound in $q'$. The only rule which can have this effect is \mycap{mRecF},
	 possibly combined with \mycap{mSel}, 
	 so $q$ is a sum of $p_x$ acting as $p_x$.
\qed\end{proof}

The following lemma demonstrates that derivations which are not simple enjoy a property which resembles the classic Pumping Lemma for regular languages.

\begin{lemma}\label{lem:pump_not_simple}
	Let $p \xRightarrow{t} q$, such that $t$ is not simple for $p$. Then, there are $t = x u z$, such that $|u| >0$ and for every $i \in \NN$,  $p \xRightarrow{xu^i z} q$.
\end{lemma}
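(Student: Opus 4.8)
The plan is to exploit the fact that a derivation which fails to be simple must use rule $\mycap{mRecB}$ or rule $\mycap{mVerd}$, and that each of these rules is the fingerprint of a repeatable loop in the derivation. I would fix the given derivation $d\colon p \xRightarrow{t} q$ (which, since $t$ is not simple for $p$, is necessarily not simple) and split into two cases. To avoid clashing with the trace names $x,u,z$ of the statement, I write the recursion variable as $y$, so $\mycap{mRecB}$ is $y \xrightarrow{\tau} p_y$ and $p_y = \rec\ y.m_y$.

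If $\mycap{mVerd}$ is used anywhere in $d$, then at some point $d$ reaches a verdict $v$ and applies $v \xrightarrow{\alpha} v$. Since the only transitions available from a verdict are the $\mycap{mVerd}$ self-loops, every later configuration is again $v$, so $q = v$. Writing $p \xRightarrow{x} v$ for the part of $d$ up to that step, $v \xrightarrow{\alpha} v$ for the step itself, and $v \xRightarrow{z} v = q$ for the remainder, I obtain $t = x\alpha z$. Taking $u = \alpha$ and using $v \xRightarrow{\alpha^i} v$ for every $i$ (with the trivial derivation for $i=0$) gives $p \xRightarrow{x u^i z} q$, and here $|u| = 1 > 0$.

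Otherwise $\mycap{mRecB}$ is used, and I would look at its \emph{first} occurrence in $d$, say $r \xrightarrow{\tau} p_y$ where $r$ is a sum of $y$ acting as $y$. By Lemma \ref{lem:find_px} applied to the initial segment of $d$ that reaches $r$, a sum of $p_y$ acting as $p_y$ has already appeared; fixing the corresponding configuration $w$, I write $d$ as $p \xRightarrow{x} w \xRightarrow{u} p_y \xRightarrow{z} q$, where the middle portion $w \xrightarrow{\tau} m_y \xRightarrow{u} r \xrightarrow{\tau} p_y$ is the segment delimited by the two visits to $p_y$ and reads exactly $u$. Because $w$ acts as $p_y$, it takes the same transitions, so the loop $p_y \xRightarrow{u} p_y$ may be traversed any number of times: for $i \geq 1$ we get $p \xRightarrow{x} w \xRightarrow{u^i} p_y \xRightarrow{z} q$, and for $i = 0$ we use $w \xrightarrow{\tau} m_y \xRightarrow{z} q$, which holds since $p_y \xRightarrow{z} q$ forces $m_y \xRightarrow{z} q$ and $w$ acts as $p_y$. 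In every case $p \xRightarrow{x u^i z} q$.

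The main obstacle is guaranteeing $|u| > 0$ in this second case, since the loop between the two occurrences of $p_y$ could a priori read nothing (an empty $\tau$-cycle, as for $p_y = \rec\ y.(y + \alpha.\yes)$). This is exactly where the hypothesis that $t$ is \emph{not simple for} $p$ is indispensable. If the first $\mycap{mRecB}$ loop is empty, i.e. $u = \epsilon$, then I excise it: replacing $w \xrightarrow{\tau} m_y \xRightarrow{\epsilon} r \xrightarrow{\tau} p_y$ by the single step $w \xrightarrow{\tau} m_y$ and continuing with $m_y \xRightarrow{z} q$ yields a derivation of the same trace $t$ to the same $q$ using strictly fewer $\mycap{mRecB}$ steps (and no new $\mycap{mVerd}$ step). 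Arguing by induction on the number of $\mycap{mRecB}$ steps, I either reach a derivation whose first $\mycap{mRecB}$ loop reads a nonempty $u$, to which the pumping construction above applies, or I reach a derivation with no $\mycap{mRecB}$ step and (being in the second case) no $\mycap{mVerd}$ step — that is, a simple derivation of $t$ from $p$, contradicting the assumption. Hence a nonempty loop must exist, completing the proof.
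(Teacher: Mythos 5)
Your proposal is correct and follows essentially the same route as the paper's proof: locate the first use of \textsc{mVerd} or \textsc{mRecB}, pump at the verdict in the former case, and in the latter use Lemma \ref{lem:find_px} to find the earlier occurrence of (a sum acting as) $p_x$ and pump the intervening loop. The only cosmetic difference is that the paper discharges the empty-loop worry by assuming up front that all segments $s(\xrightarrow{\tau})^+p_x$ have been removed, whereas you excise them via an induction on the number of \textsc{mRecB} steps; the two devices are interchangeable.
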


\begin{proof}
	Let $p \xRightarrow{t} q$ be a non-simple derivation $d$. We assume that there are no $s (\xrightarrow{\tau})^+ p_x$ parts in it, where $s$ is a sum of $p_x$ acting as $p_x$;
	otherwise remove them and the resulting derivation is non-simple, because $t$ is non-simple. Let $s \sqsubseteq t$ be the longest prefix of $t$, such that subderivation $p \xRightarrow{s} r$ is simple. Then, there is a $s\alpha \sqsubseteq t$, such that there is a subderivation of $d$, $p \xRightarrow{s} r \xrightarrow{\alpha} r'$ and in $r \xrightarrow{\alpha} r'$, either \mycap{mVerd} or \mycap{mRecB} is used and neither is used in $p \xRightarrow{s} r$. If \mycap{mVerd} is used, then $p \xRightarrow{s} v$ as part of $d$, so $q = v$ and $p \xRightarrow{su^i} v$, for $s u=t$ and $i \in \NN$. 
	If \mycap{mRecB} is used, then $p \xRightarrow{s} x \xrightarrow{\tau = \alpha} p_x$; by Lemma \ref{lem:find_px}, a sum of $p_x$ acting as $p_x$, say $s_x$ must appear in $p \xRightarrow{s} x$, so $s_x \xRightarrow{u} p_x$ is thus part of the derivation and $|u| >0$ by our assumption at the beginning of the proof. Then, for some $x,z$,  $t = x u z$ and for every $i \in \NN$,  $p \xRightarrow{xu^i z} q$.
\qed\end{proof}

\begin{corollary}\label{cor:pump_wrt_height}
	Let $p$ be a monitor and $t$ a trace, such that $|t| > h(p)$. Then, there are $t = x u z$, such that $|u| >0$ and for every $i \in \NN$,  $p \xRightarrow{xu^i z} q$.
\end{corollary}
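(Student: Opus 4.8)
The plan is to obtain this corollary as an immediate consequence of Corollary \ref{cor:height_to_simple} together with Lemma \ref{lem:pump_not_simple}; no fresh induction or case analysis should be needed beyond what those two results already supply. The first thing I would do is pin down the (tacit) standing hypothesis: the conclusion mentions a monitor $q$, which can only be furnished by a derivation $p \xRightarrow{t} q$. So I read the statement as: given a derivation $p \xRightarrow{t} q$ with $|t| > h(p)$, the trace $t$ can be pumped. This reading matters, since a long trace need not be performable by $p$ at all, and Lemma \ref{lem:pump_not_simple} requires an actual derivation as input.

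Granting the derivation $p \xRightarrow{t} q$, I would first argue that $t$ cannot be simple for $p$. By Corollary \ref{cor:height_to_simple}, any simple trace $s$ for $p$ satisfies $h(p) \geq |s|$. Taking the contrapositive, the assumption $|t| > h(p)$ forces $t$ to fail to be a simple trace for $p$; in particular, the derivation $p \xRightarrow{t} q$ at hand is not a simple derivation (were it simple, $t$ would be a simple trace for $p$, contradicting the length bound). This is the only genuine content of the argument, and it is a one-line application of the height bound.

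Having established that $p \xRightarrow{t} q$ is a derivation in which $t$ is not simple for $p$, I would then invoke Lemma \ref{lem:pump_not_simple} directly on this derivation. That lemma yields a decomposition $t = x u z$ with $|u| > 0$ such that $p \xRightarrow{x u^i z} q$ for every $i \in \NN$, which is precisely the assertion of the corollary.

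There is essentially no hard step here; the \emph{only} point requiring a moment's care is verifying that the hypotheses of Lemma \ref{lem:pump_not_simple} are met, namely (i) a concrete derivation $p \xRightarrow{t} q$ and (ii) non-simplicity of $t$ for $p$. Point (i) is the implicit standing assumption discussed above, and point (ii) follows from the length condition via the contrapositive of Corollary \ref{cor:height_to_simple}. Once both are in place the pumping decomposition transfers verbatim, so I would expect the written proof to be a two-sentence reduction rather than a new construction.
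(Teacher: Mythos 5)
Your proposal is correct and matches the argument the paper intends (the paper states this as an immediate corollary of Lemma~\ref{lem:pump_not_simple} without writing out a proof): the hypothesis $|t| > h(p)$ combined with the contrapositive of Corollary~\ref{cor:height_to_simple} shows the (tacitly assumed) derivation $p \xRightarrow{t} q$ is not simple, and Lemma~\ref{lem:pump_not_simple} then supplies the pumping decomposition. Your observation that the existence of a derivation $p \xRightarrow{t} q$ must be read as a standing hypothesis is a fair and worthwhile clarification of the statement.
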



\begin{proposition}\label{prp:nfa2mon_is_hard}
	Let $p$ be a monitor for $M_n$. Then, $|p| \geq 3 \cdot 2^{n-1}$.
\end{proposition}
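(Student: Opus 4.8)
The plan is to lower-bound $|p|$ by exhibiting a large \emph{simple} set of traces and applying Lemma~\ref{lem:simple_traces}, which gives $|p| \ge |G|$ for every finite simple set $G$. The set I would use is the prefix closure of the minimal accepted words. Put $W = \{\,1\gamma e \mid \gamma \in \{0,1\}^{n-1}\,\}$, every element of which lies in $M_n$ and has length $n+1$, and let $G = \{\, t \mid t \sqsubseteq w \text{ for some } w \in W \,\}$. Counting prefixes by length, $G$ contains the single empty trace, and for $1 \le k \le n$ exactly the $2^{k-1}$ traces of the form $1$ followed by $k-1$ bits, and finally the $2^{n-1}$ full words of $W$; summing gives $|G| = 1 + (2^n - 1) + 2^{n-1} = 3\cdot 2^{n-1}$.

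It remains to show that $G$ is simple for $p$. Since $G$ is prefix-closed and any prefix of a simple trace is again simple (the corollary following the observation that subderivations of simple derivations are simple), it suffices to prove that each maximal element $1\gamma e$ is simple for $p$. This is the main obstacle, and I would argue it by contradiction using the pumping-style Lemma~\ref{lem:pump_not_simple}. Because $p$ monitors $M_n$ and $1\gamma e \in M_n$, there is a derivation $p \xRightarrow{1\gamma e} \yes$. If $1\gamma e$ were not simple for $p$, then Lemma~\ref{lem:pump_not_simple} applied to this derivation would give a factorisation $1\gamma e = xuz$ with $|u| > 0$ and $p \xRightarrow{x u^i z} \yes$ for every $i \in \NN$; taking $i = 0$ yields $p \xRightarrow{xz} \yes$, hence $xz \in M_n$.

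But every word of $M_n$ has length at least $n+1$: a word of $M_n$ is $\alpha e \beta$ with $\alpha \in L_n$, and each element of $L_n$ has length at least $n$. On the other hand $|xz| = (n+1) - |u| \le n$, so $xz \notin M_n$ --- a contradiction. Hence each $1\gamma e$ is simple, so $G$ is a simple set for $p$, and Lemma~\ref{lem:simple_traces} gives $|p| \ge |G| = 3 \cdot 2^{n-1}$. The only routine points to verify carefully are the prefix count and the minimal-length claim for $M_n$; the conceptual content is entirely in the observation that pumping an accepted word down along a recursion loop would produce an accepted word shorter than the $M_n$ minimum, which is impossible.
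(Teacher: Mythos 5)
Your proposal is correct and follows essentially the same route as the paper's proof: the same prefix-closed set $G$ of prefixes of the words $1\gamma e$, the same appeal to Lemma~\ref{lem:simple_traces}, and the same contradiction via Lemma~\ref{lem:pump_not_simple} by pumping an accepted word down below the minimum length $n+1$ of words in $M_n$. You merely spell out the cardinality count and the minimal-length argument in more detail than the paper does.
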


\begin{proof}
	Because of Lemma \ref{lem:simple_traces}, it suffices to find a set $G$ of simple traces for $p$, such that $|G| \geq 3 \cdot 2^{n-1}$. We define 
	$$G = \{t \in \{0,1,e\}^* \mid t \sqsubseteq 1 s e, \text{ where }  s \in \{0,1\}^{n-1}\}.$$ 
	Then, $|G| \geq 3 \cdot 2^{n-1}$, so it suffices to demonstrate that all traces in $G$ are simple. 
	In turn, it suffices to demonstrate that for $s \in \{0,1\}^{n-1}$, $1 s e$ is simple. If it is not, then by Lemma \ref{lem:pump_not_simple}, since $p \xRightarrow{1se} \yes$, there is a (strictly) shorter trace $t$, such that $p \xRightarrow{t} \yes$, which contradicts our assumption that $p$ is a monitor for $M_n$.
\qed\end{proof}

We have thus demonstrated that for every $n \geq 1$, there is a monitor for $M_n$ of size $O(2^n)$ and furthermore, that there is no monitor for $M_n$ of size less than $3 \cdot 2^{n-1}$. So, to recognize languages $M_n$,  monitors of size exponential with respect to $n$ are required and thus we have a lower bound on the construction of a monitor from an NFA, which is close to the respective upper bound provided by Theorem \ref{thm:nfa_to_monitor}.

\subsubsection{Lower Bound for Deterministic Monitor Size}

We now consider deterministic monitors. We demonstrate (see Theorem \ref{thm:MnisbadforDmon}) that to recognize languages $M_n$ a deterministic monitors needs to be of size $2^{2^{\Omega(n)}}$. Therefore, a construction of a deterministic monitor from an equivalent NFA can result in a double-exponential blowup in the size of the monitor; constructing a deterministic monitor from an equivalent nondeterministic one can result in an exponential blowup in the size of the monitor. As Theorem \ref{thm:determinizing_monitors_is_VERY_hard} demonstrates, the situation is actually worse for the determinization of monitors, as there is a family $U_n$ of irrevocable regular languages, such that for $n\geq 1$, $U_n$ is recognized by a nondeterministic monitor of size $O(n)$, but by no deterministic one of size $2^{2^{o(\sqrt{n \log n})}}$. The proof of Theorem \ref{thm:determinizing_monitors_is_VERY_hard} relies on a result by Chrobak \cite{CHROBAK1986149} for unary languages (languages on only one symbol).

\begin{lemma}
	Let $p$ be a
	deterministic monitor. If $p \xRightarrow{t} q$, then $q$ is 
	deterministic.
\end{lemma}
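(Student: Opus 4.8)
The plan is to reduce the claim to two simpler facts: (i) in System N, every monitor reachable from the initial monitor $p$ is a submonitor of $p$, and (ii) every submonitor of a deterministic monitor is itself deterministic. Combining these immediately yields that the reduct $q$ of $p \xRightarrow{t} q$ is deterministic.

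For (i), recall that we work in System N under the standing assumption that $p$ is the initial monitor, so that every variable $x$ is bound by a unique $\rec\ x.m_x = p_x$ occurring as a submonitor of $p$. I would prove by induction on the length of the derivation $p \xRightarrow{t} q$ that every monitor occurring along it is a submonitor of $p$. The base case is immediate, and for the inductive step I inspect the rule producing the last transition: rule $\mycap{mAct}$ ($\alpha.m \xrightarrow{\alpha} m$) and rule $\mycap{mRecF}$ ($\rec\ x.m_x \xrightarrow{\tau} m_x$) clearly descend to submonitors, while rule $\mycap{mRecB}$ ($x \xrightarrow{\tau} p_x$) lands on $p_x$, which is a submonitor of $p$ because $x$ is bound in $p$; rules $\mycap{mSelL}$ and $\mycap{mSelR}$ preserve the property since their premise transition starts from a summand. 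This is exactly the remark, already made in this section, that System N transitions only generate submonitors of the initial monitor, so one may alternatively just cite it.

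For (ii), observe that determinism is the purely syntactic requirement that every sum of at least two summands occurring in a monitor has the shape $\sum_{\alpha \in A}\alpha.m_\alpha$. If $q$ is a submonitor of $p$, then every sum occurring in $q$ also occurs in $p$; since $p$ is deterministic, each such sum already has the required shape, whence $q$ is deterministic.

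The one point that needs care --- and the reason the lemma is phrased for the initial monitor rather than for arbitrary monitors --- is the interaction of rule $\mycap{mRecB}$ with the syntactic notion of determinism. Taken in isolation, a bare variable $x$ is vacuously deterministic, yet $x \xrightarrow{\tau} p_x$ and $p_x$ may well be nondeterministic; this is precisely the phenomenon flagged in the discussion of Systems M and N. The assumption that $p$ itself is deterministic is what rules this out: it forces every $p_x$ reachable via $\mycap{mRecB}$ to be a submonitor of $p$ and hence, by (ii), deterministic. So the apparent obstacle dissolves once the derivation is anchored at the deterministic initial monitor, and no genuinely hard step remains.
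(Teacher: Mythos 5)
Your proposal is correct and follows essentially the same route as the paper: the paper's (two-line) proof likewise observes that $p$ and $q$ are submonitors of the deterministic initial monitor $p_0$ and that determinism, being a syntactic condition on sums, is inherited by submonitors; your step (i) just spells out the closure-under-reachability fact that the paper relegates to an earlier remark. The only cosmetic quibble is in your final paragraph: what forces each $p_x$ reached via $\mycap{mRecB}$ to be a submonitor of $p$ is that the initial monitor is closed (every variable is bound inside it), not the determinism assumption itself --- determinism is only needed afterwards to conclude that this submonitor is deterministic.
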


\begin{proof}
	Both $p$ and $q$ are submonitors of the original monitor $p_0$. If $p$ is deterministic, then so is $p_0$, and so is $q$.
\qed\end{proof}

\begin{lemma}\label{lem:no_tau_from_det_sum}
	If $q+ q'$  is  deterministic, then $q+ q' \not \xrightarrow{\tau}$.
\end{lemma}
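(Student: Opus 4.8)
The plan is to use the definition of determinism to pin down the syntactic shape of $q+q'$ and then appeal to the characterization of $\tau$-transitions in system N given by Lemma \ref{lem:tau_transitions_have_x_for_N}. First I would observe that, since $q+q'$ has a top-level $+$, it is a sum of at least two summands (in the extremal case exactly $q$ and $q'$, when neither is itself a sum). Hence the determinism of $q+q'$ forces it to be of the form $\sum_{\alpha \in A}\alpha.m_\alpha$ for some $A \subseteq \mycap{Act}$; in particular, every summand appearing in its $+$-decomposition is action-prefixed, i.e. of the form $\alpha.m_\alpha$ with $\alpha \in \mycap{Act}$.

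Next I would argue by contradiction: suppose $q+q' \xrightarrow{\tau} r$ for some $r$. By Lemma \ref{lem:tau_transitions_have_x_for_N}, any $\tau$-transition in system N forces the source monitor to be a sum of some variable $x$ (with target $p_x$) or a sum of some $p_x = \rec\ x.m_x$ (with target $m_x$). Thus $q+q'$ would have to contain, as a summand in its $+$-decomposition, either a bare variable $x$ or a recursion $\rec\ x.m_x$. But we have just established that every such summand is of the form $\alpha.m_\alpha$, which is neither a variable nor of the form $\rec\ x.m_x$. This contradiction shows that no $\tau$-transition is possible, giving $q+q' \not\xrightarrow{\tau}$.

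I expect no serious obstacle here; the only point requiring care is the bookkeeping around the paper's notion of ``a sum of $q$,'' namely checking that the terms appearing in the $+$-decomposition of $\sum_{\alpha\in A}\alpha.m_\alpha$ are exactly the action-prefixed summands $\alpha.m_\alpha$ and can therefore never coincide with a variable or a top-level $\rec$ operator. Alternatively, one could bypass Lemma \ref{lem:tau_transitions_have_x_for_N} entirely and argue directly from the rules: a $\tau$-transition of a sum must, through \mycap{mSel}, originate from a $\tau$-transition of one of its summands, but an action-prefixed summand $\alpha.m_\alpha$ only matches rule \mycap{mAct} and can thus perform the visible action $\alpha$ and nothing else.
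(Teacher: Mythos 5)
Your proof is correct. The paper's own argument starts from the same observation---determinism forces $q+q'$ to have the shape $\sum_{\alpha\in A}\alpha.m_\alpha$---but then proceeds by a direct structural induction on $q+q'$: in the base case both summands are action-prefixed, so the rules \mycap{mAct} and \mycap{mSel} only permit visible transitions, and in the inductive case a $\tau$-transition of the sum would have to originate, via \mycap{mSel}, from a $\tau$-transition of one of the summands, contradicting the inductive hypothesis. That is essentially the ``alternative'' argument you sketch in your final sentence. Your primary route instead reuses Lemma~\ref{lem:tau_transitions_have_x_for_N}: any $\tau$-step in system N exhibits its source as a sum of some variable $x$ or of $p_x = \rec\ x.m_x$, and neither a bare variable nor a $\rec$-term can occur among the action-prefixed summands of $\sum_{\alpha\in A}\alpha.m_\alpha$. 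This is a legitimate shortcut---it avoids redoing the case analysis on the derivation rules by appealing to a characterization the paper has already established---at the cost of an extra dependency; the paper's induction is self-contained. Both arguments turn on the same point, namely that the deterministic sum format excludes the only possible sources of $\tau$-transitions, so the choice between them is a matter of economy rather than substance.
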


\begin{proof}
	The monitor $q+q'$ is a submonitor of the initial monitor, which is deterministic, so $q+q'$ must be of the form $\sum_{\alpha \in A} \alpha.p_\alpha$.
	We continue by induction on $q+q'$: if $q = \alpha.r$ and $q' = \beta.r'$, then by the production rules, only $q+ q' \xrightarrow{\alpha} r$ and $q+ q' \xrightarrow{\beta} r'$ are allowed; if one of $q$, $q'$ is also a sum, then by the derivation rules, if $q+ q' \xrightarrow{\tau}$, then also $q \xrightarrow{\tau}$ or $q' \xrightarrow{\tau}$, but by the inductive hypothesis, this is a contradiction.
\qed\end{proof}

\begin{corollary}\label{cor:all_det_tau}
	Only $\tau$-transitions of the form $x \xrightarrow{\tau} p_x$ and $\rec\ x.m \xrightarrow{\tau} m$ are allowed for  deterministic monitors.
\end{corollary}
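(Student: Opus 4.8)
The plan is to derive this corollary directly from Lemma~\ref{lem:no_tau_from_det_sum} by a case analysis on the syntactic shape of a deterministic monitor. Recall that we are working in system N, where a $\tau$-transition can be introduced only by rule \mycap{mRecF} (yielding $\rec\ x.m_x \xrightarrow{\tau} m_x$) or by rule \mycap{mRecB} (yielding $x \xrightarrow{\tau} p_x$), after which it can only be propagated upward through the sum rules \mycap{mSelL} and \mycap{mSelR}. Hence, to establish the corollary it suffices to show that for a deterministic monitor $p$, whenever $p \xrightarrow{\tau} q$ the transition is one of these two introduction rules applied at the top level, rather than one propagated through a sum.

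First I would fix a deterministic monitor $p$ with $p \xrightarrow{\tau} q$ and split on the grammar production at the root of $p$. If $p = v$ is a verdict, then by rule \mycap{mVerd} its only outgoing transitions are labelled by actions in \mycap{Act}, so no $\tau$-transition exists; similarly, if $p = \alpha.m$, the only transition is $p \xrightarrow{\alpha} m$ by \mycap{mAct}, so again there is no $\tau$-transition. If $p = \rec\ x.m$, then by the uniqueness convention on the initial monitor this is exactly $p_x = \rec\ x.m_x$, so $m = m_x$ and the only $\tau$-transition is $\rec\ x.m \xrightarrow{\tau} m$ via \mycap{mRecF}; if $p = x$ is a variable, then \mycap{mRecB} gives precisely $x \xrightarrow{\tau} p_x$. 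In each of these cases the conclusion holds.

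The only remaining case, and the one carrying the actual content, is when $p = q_1 + q_2$ is a sum. Here a $\tau$-transition, if it existed, could only arise by propagating a silent move from one of the summands through \mycap{mSelL} or \mycap{mSelR}. But $p$ is a deterministic monitor, so Lemma~\ref{lem:no_tau_from_det_sum} applies and gives $p \not\xrightarrow{\tau}$, ruling this case out entirely. I expect this step to be where the weight of the argument lies, since it is exactly the failure of deterministic sums to emit silent moves that forces every $\tau$-transition of a deterministic monitor to be one of the two stated introduction forms. Combining the cases completes the proof.
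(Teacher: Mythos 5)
Your proof is correct and matches the paper's intended argument: the corollary is stated without an explicit proof precisely because, once Lemma~\ref{lem:no_tau_from_det_sum} rules out $\tau$-transitions from deterministic sums, the only remaining sources of $\tau$-transitions in system N are the two introduction rules \mycap{mRecF} and \mycap{mRecB}, exactly as your case analysis shows. The only substantive step is the sum case, and you correctly identify Lemma~\ref{lem:no_tau_from_det_sum} as the tool that closes it.
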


\begin{lemma}
	Let $p \sim p'$ be  deterministic monitors. If $p \xRightarrow{t} q $ and $p' \xRightarrow{t} q'$, then $q \sim q'$.
\end{lemma}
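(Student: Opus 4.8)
The plan is to \emph{not} attempt to prove ``global determinacy'' of $p$ and $p'$ directly — the statement that the entire derivation on $t$ is forced — because that is essentially as strong as the lemma itself and invites circular reasoning. Instead I would peel off one action at a time and induct on $|t|$, so that the only determinacy fact I ever need is a \emph{single-action} one that lives inside a single monitor. Three ingredients are required: (i) submonitors reached from a deterministic monitor are again deterministic (shown above); (ii) \emph{forced} $\tau$-transitions preserve $\sim$; and (iii) single-action determinacy: if $p$ is deterministic and $p \xRightarrow{\alpha} q_1$ and $p \xRightarrow{\alpha} q_2$, then $q_1 \sim q_2$. Only (iii) and the main lemma are about a pair of monitors; (i) and (ii) are local.

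For (ii), recall that by Corollary~\ref{cor:all_det_tau} every $\tau$-transition out of a deterministic monitor has the forced shape $x \xrightarrow{\tau} p_x$ or $\rec\ x.m \xrightarrow{\tau} m$, and such a source is never a value and has no other outgoing transition (a sum emits no $\tau$ by Lemma~\ref{lem:no_tau_from_det_sum}, and an action or value emits no $\tau$ at all, nor does a recursion/variable emit an action). Hence if $q \xrightarrow{\tau} q_1$ with $q$ deterministic, then for any $u,v$ a derivation $q \xRightarrow{u} v$ cannot begin by reading an action and cannot be trivial (as $q$ is not a value), so it must begin with the unique step $q \xrightarrow{\tau} q_1$; reading off the remainder gives $q \xRightarrow{u} v$ iff $q_1 \xRightarrow{u} v$, i.e.\ $q \sim q_1$. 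Chaining this along a $\tau$-path yields $p \sim q$ whenever $p \xRightarrow{\epsilon} q$.

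For (iii), both derivations $p \xRightarrow{\alpha} q_i$ begin by following the $\tau$-successor chain out of $p$, which is a forced linear path since each state offers at most one $\tau$-step. Only the first state $r$ on this chain that is neither a recursion nor a variable can perform $\alpha$, and the chain ends there (such an $r$ emits no $\tau$); so both derivations reach this same $r$ and, by determinism of $r$ (of the form $\sum_{\beta} \beta.m_\beta$, or an action, or a value), take the $\alpha$-step to the \emph{same} target $r'$. The suffixes $r' (\xrightarrow{\tau})^* q_i$ are forced $\tau$-paths, so by (ii) $q_1 \sim r' \sim q_2$. The single-action case of the lemma then follows: assuming $p \sim p'$ deterministic with $p \xRightarrow{\alpha} s$ and $p' \xRightarrow{\alpha} s'$, for any $u,v$ we have $s \xRightarrow{u} v$ implies $p \xRightarrow{\alpha u} v$ (composition), hence $p' \xRightarrow{\alpha u} v$ by $p \sim p'$; factoring this as $p' \xRightarrow{\alpha} s'' \xRightarrow{u} v$ gives $s'' \sim s'$ by (iii), whence $s' \xRightarrow{u} v$. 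The converse is symmetric, so $s \sim s'$.

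Finally I would prove the full statement by induction on $|t|$. For $t=\epsilon$, (ii) gives $q \sim p$ and $q' \sim p'$, so $q \sim p \sim p' \sim q'$. For $t = \alpha t'$, split $p \xRightarrow{\alpha} s \xRightarrow{t'} q$ and $p' \xRightarrow{\alpha} s' \xRightarrow{t'} q'$; the single-action case gives $s \sim s'$, both deterministic by (i), and the induction hypothesis applied to $s \sim s'$ along $t'$ yields $q \sim q'$. The main obstacle is exactly the circularity trap flagged at the outset: one must resist collapsing everything into ``$q \xRightarrow{s} v$ iff $p \xRightarrow{ts} v$ iff $p' \xRightarrow{ts} v$ iff $q' \xRightarrow{s} v$'', because the final ``iff'' silently assumes that $p' \xRightarrow{ts} v$ refactors through a state equivalent to $q'$ — which is full determinacy, the very thing being proved. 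Peeling one symbol at a time keeps the determinacy obligation at the single-action level, where it is settled purely from the local shape of $\tau$- and $\alpha$-transitions.
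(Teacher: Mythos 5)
Your proof is correct and follows essentially the same route as the paper: the same induction on $|t|$, the same opening observation that forced $\tau$-steps ($x \xrightarrow{\tau} p_x$, $\rec\ x.m \xrightarrow{\tau} m$) preserve $\sim$, and the same reduction of the inductive step to the uniqueness of the initial $\alpha$-transition out of a deterministic monitor. The only cosmetic difference is that you prove single-monitor determinacy (your item (iii)) up front as an ingredient, whereas the paper obtains it afterwards as Corollary~\ref{cor:det_mon_is_det} by instantiating the lemma with $p = p'$.
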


\begin{proof}
	First, notice that $x \sim p_x$, since all derivations from $x$ are either trivial or must start with $x \xrightarrow{\tau} p_x$. For the same reason, $\rec\ x.m \sim m$. Therefore, by Corollary \ref{cor:all_det_tau}, if $r \xrightarrow{\tau} r'$, then $r \sim r'$.
	Now we can prove the lemma by induction on $|t|$.
	\begin{description}
		\item[If $t = \epsilon$,] then  $p \xRightarrow{} q $ and $p' \xRightarrow{} q'$, so (by the observation above) $q \sim p \sim p' \sim q'$.
		\item[If $t = \alpha t'$,] then we demonstrate that if $p \xRightarrow{\alpha} r $ and $p' \xRightarrow{\alpha} r'$, then $r \sim r'$ and, by induction, we are done. 
		In fact, by our observations, it is enough to prove that if $p \xrightarrow{\alpha} r $ and $p' \xrightarrow{\alpha} r'$, then $r \sim r'$. Thus, let $p \xrightarrow{\alpha} r $ and $p' \xrightarrow{\alpha} r'$; 
		then, $p$ is a sum  of $\alpha.r$ and of no other $\alpha.f$ (because $p$ is deterministic) and $p'$ is a sum  of $\alpha.r'$ and of no other $\alpha.f$.
		If $r \not \sim r'$, then there is a trace $s$ and value $v$, such that $r \xRightarrow{s} v$ and $r' \not \xRightarrow{s} v$ (or $r' \xRightarrow{s} v$ and $r \not \xRightarrow{s} v$, but this case is symmetric). Therefore, $p \xRightarrow{\alpha s} v$, but since all derivations from $p'$ on trace $\alpha s$ must start from $p' \xrightarrow{\alpha} r'$, if $p' \xRightarrow{\alpha s} v$, also $r' \xRightarrow{s} v$, so $p \not \sim p'$, a contradiction. Therefore, $r \sim r'$ and the proof is complete. 
		\qed
	\end{description}
\end{proof}

\begin{corollary}\label{cor:det_mon_is_det}
	If $p$ is deterministic, $p \xRightarrow{t} q $, and $p \xRightarrow{t} q'$, then $q \sim q'$.
\end{corollary}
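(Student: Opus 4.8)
The plan is to derive this corollary as the diagonal case of the lemma stated immediately above it, instantiating both monitors to the single monitor $p$. First I would observe that the equivalence relation $\sim$ is reflexive: since $p \xRightarrow{t} v$ iff $p \xRightarrow{t} v$ holds trivially for every trace $t$ and value $v$, we have $p \sim p$.

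Next, since $p$ is deterministic and $p \sim p$, I would apply the preceding lemma with the choice $p' := p$. The two hypotheses $p \xRightarrow{t} q$ and $p \xRightarrow{t} q'$ match exactly the hypotheses $p \xRightarrow{t} q$ and $p' \xRightarrow{t} q'$ of that lemma, and its conclusion $q \sim q'$ is precisely the statement to be proved. Thus the corollary is immediate.

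There is essentially no obstacle here, because all the genuine content has already been discharged in the proof of the preceding lemma: the observation that $x \sim p_x$ and $\rec\ x.m \sim m$, the reduction (via Corollary \ref{cor:all_det_tau}) to the case where the only $\tau$-transitions available to a deterministic monitor are $x \xrightarrow{\tau} p_x$ and $\rec\ x.m \xrightarrow{\tau} m$, and the induction on $|t|$ establishing that equivalent deterministic monitors have equivalent $\alpha$-successors. The one point worth emphasising is that the determinism hypothesis on $p$ is exactly what is needed: by Lemma \ref{lem:no_tau_from_det_sum} and Corollary \ref{cor:all_det_tau} it guarantees that from $p$ there is, up to the harmless $\tau$-steps, a unique $\alpha$-transition to follow for each action $\alpha$, so the two derivations $p \xRightarrow{t} q$ and $p \xRightarrow{t} q'$ cannot genuinely diverge and must reach equivalent monitors.
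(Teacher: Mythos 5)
Your proposal is correct and matches the paper's intent exactly: the paper states this as an immediate corollary of the preceding lemma (with no separate proof), obtained by instantiating $p' := p$ and using the reflexivity of $\sim$. Nothing further is needed.
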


\begin{corollary}\label{cor:determinism_gives_always_same_v}
	If $p$ is deterministic, $p \xRightarrow{t} q $, and $p \xRightarrow{t} v$, where $v$ is a value, then $q = v$.
\end{corollary}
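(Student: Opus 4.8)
The plan is to derive this corollary almost immediately from the preceding Corollary~\ref{cor:det_mon_is_det}, together with the standing assumption that no monitor admits a $\tau$-transition into a verdict. First I would apply Corollary~\ref{cor:det_mon_is_det} to the two given derivations $p \xRightarrow{t} q$ and $p \xRightarrow{t} v$: since $p$ is deterministic, this yields at once that $q \sim v$.

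Next I would unpack the definition of $\sim$. By definition, $q \sim v$ means that for every trace $t'$ and value $v'$ we have $q \xRightarrow{t'} v'$ if and only if $v \xRightarrow{t'} v'$. Instantiating with $t' = \epsilon$ and $v' = v$, and using that $v \xRightarrow{\epsilon} v$ holds trivially via the empty ($\tau$-free) derivation, I obtain $q \xRightarrow{\epsilon} v$, that is, $q (\xrightarrow{\tau})^* v$.

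Finally I would invoke the standing assumption recorded just after Lemma~\ref{lem:no_sums_of_verdicts}, namely that there is no submonitor of the form $\rec\ x.v$ and hence no $\tau$-transition ever leads to a verdict. Consequently the derivation $q (\xrightarrow{\tau})^* v$ cannot contain even a single $\tau$-step, since its last step would then be a $\tau$-transition into the verdict $v$, which is forbidden. Thus the derivation is empty, forcing $q = v$.

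The only real point to watch, rather than a genuine obstacle, is bookkeeping: one must confirm that the operational system in force for this part of the paper (system~N) indeed has no $\tau$-transition into a verdict. This is quick, since the only $\tau$-rules of system~N are \mycap{mRecF} and \mycap{mRecB}, whose respective targets $m_x$ and $p_x = \rec\ x.m_x$ are never verdicts under the assumption that no submonitor has the form $\rec\ x.v$.
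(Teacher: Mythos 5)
Your proof is correct and is precisely the argument the paper leaves implicit: the statement is presented as an immediate consequence of Corollary~\ref{cor:det_mon_is_det} together with the standing assumption (recorded after Lemma~\ref{lem:no_sums_of_verdicts}, and preserved in system~N) that no $\tau$-transition targets a verdict. Your instantiation of $\sim$ at the empty trace to get $q (\xrightarrow{\tau})^* v$, and the observation that this derivation must therefore be empty, fills in the intended reasoning exactly.
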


\begin{lemma}\label{lem:no_sums}
	If $p$ is a deterministic monitor and $q$ is a submonitor of $p$, such that $q$ is a sum of some $p_x$, then $q = p_x$.
\end{lemma}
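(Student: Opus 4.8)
The plan is to prove the statement by contradiction, exploiting the purely syntactic nature of the determinism condition together with the observation that $p_x$ is a recursion term and hence cannot occur as an action-prefixed summand.

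First I would unfold the definition of ``sum of $p_x$''. By the inductive definition, $q$ is either $p_x$ itself (the base case), or of the form $r + r'$ or $r' + r$ where $r$ is again a sum of $p_x$. Suppose, towards a contradiction, that $q \neq p_x$; then $q$ must have been produced by at least one application of $+$, so when its top-level $+$-structure is flattened, $q$ is a sum of at least two summands, and $p_x$ occurs among them as one of the atomic (non-$+$) summands.

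Next I would invoke determinism. Since $q$ is a submonitor of $p$ and $p$ is deterministic, $q$ is a sum of at least two summands appearing in $p$, so by the definition of a deterministic monitor it must be of the form $\sum_{\alpha \in A} \alpha.m_\alpha$ for some $A \subseteq \mycap{Act}$. In particular, every atomic summand of $q$ is action-prefixed, i.e.\ of the shape $\alpha.m_\alpha$. Finally I would derive the contradiction: recall that $p_x = \rec\ x.m_x$ is a recursion term and is therefore syntactically distinct from any action-prefixed monitor $\alpha.m_\alpha$. But we established that $p_x$ is one of the atomic summands of $q$, each of which is action-prefixed --- a contradiction. Hence the assumption $q \neq p_x$ is untenable, and we conclude $q = p_x$.

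The argument is short, and the only point requiring a little care is the first step: confirming that ``$q$ is a sum of $p_x$ and $q \neq p_x$'' genuinely forces $q$ to have two or more summands with $p_x$ among the atomic pieces. This is what lets both tools act at once --- the determinism condition pinning down the form $\sum_{\alpha \in A}\alpha.m_\alpha$, and the syntactic clash between $\rec\ x.m_x$ and $\alpha.m_\alpha$.
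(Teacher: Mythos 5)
Your proof is correct and is essentially the paper's own argument, just spelled out: the paper disposes of the lemma in one line by noting that a summand $r + p_x$ is forbidden by the definition of deterministic monitors, which is exactly the syntactic clash you make explicit between the required form $\sum_{\alpha\in A}\alpha.m_\alpha$ and the recursion term $p_x = \rec\ x.m_x$. No gaps; the care you take in unfolding the inductive definition of ``sum of $p_x$'' is a welcome elaboration rather than a deviation.
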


\begin{proof}
	By the definition of deterministic monitors, $r + p_x$ is not allowed.
\qed\end{proof}

\begin{corollary}\label{cor:find_px_determ}
	In a derivation $p \xRightarrow{t} x$, such that $x$ is bound in $p$ and $p$ is deterministic, $p_x$
	must appear.
\end{corollary}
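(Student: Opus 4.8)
The plan is to derive this corollary by strengthening the conclusion of Lemma~\ref{lem:find_px} using the determinism hypothesis together with Lemma~\ref{lem:no_sums}. The two ingredients fit together immediately: Lemma~\ref{lem:find_px} locates \emph{a sum of} $p_x$ in the derivation, and Lemma~\ref{lem:no_sums} tells us that, under determinism, such a sum can only be $p_x$ itself.

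First I would apply Lemma~\ref{lem:find_px} directly. Since $x$ is bound in $p$ and we are given a derivation $p \xRightarrow{t} x$, the lemma guarantees that a sum $s$ of $p_x$, acting as $p_x$, occurs somewhere in this derivation. This step makes no use of the determinism assumption and is available as stated. Next I would exploit determinism to collapse ``a sum of $p_x$'' into ``$p_x$ itself''. The key observation is that in system N every monitor occurring in a derivation is a submonitor of the initial monitor $p_0$; since $p$ is deterministic, so is $p_0$, and hence every monitor appearing in the derivation is a submonitor of a deterministic monitor. In particular the sum $s$ produced above is a submonitor of the deterministic monitor $p_0$, so Lemma~\ref{lem:no_sums} applies and forces $s = p_x$. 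Therefore $p_x$ itself appears in the derivation $p \xRightarrow{t} x$, which is exactly the claim of the corollary.

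The only real obstacle is the bookkeeping needed to justify that $s$ is genuinely a submonitor of the deterministic initial monitor $p_0$, so that Lemma~\ref{lem:no_sums} may legitimately be invoked. This is immediate from the structural fact, already established for system N, that all transitions produce only submonitors of $p_0$, combined with the earlier observation that submonitors of a deterministic monitor are themselves deterministic. Apart from this routine verification, the proof is a one-line composition of Lemmata~\ref{lem:find_px} and~\ref{lem:no_sums}, so I expect no genuine difficulty beyond citing the two results in the correct order.
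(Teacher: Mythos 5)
Your proposal is correct and follows exactly the paper's own argument, which simply combines Lemma~\ref{lem:find_px} (a sum of $p_x$ acting as $p_x$ appears in the derivation) with Lemma~\ref{lem:no_sums} (under determinism such a sum must be $p_x$ itself). The extra bookkeeping you mention about every monitor in an N-derivation being a submonitor of the deterministic initial monitor is the right justification and is already implicit in the paper's setup.
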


\begin{proof}
	Combine Lemmata \ref{lem:find_px} and \ref{lem:no_sums}.
\qed\end{proof}

\begin{lemma}\label{lem:find_px_det}
	Let $p \xRightarrow{t} q$, such that $t$ is not simple for $p$ and $p$ is deterministic. Then, there are $t = x u z$ and monitor $r$, such that $|u| >0$ and  $p \xRightarrow{x} r \xRightarrow{u} r \xRightarrow{z} q$.
\end{lemma}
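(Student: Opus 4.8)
The plan is to follow the structure of the proof of Lemma~\ref{lem:pump_not_simple}, but to exploit determinism in order to strengthen its conclusion from a \emph{pumpable} trace to an actual syntactic loop that returns to the very same monitor $r$. Two tools make this upgrade possible: Corollary~\ref{cor:find_px_determ}, which guarantees that $p_w$ itself (not merely a sum of $p_w$) literally occurs in any derivation reaching a bound variable $w$, and Corollary~\ref{cor:all_det_tau}, which pins down the only admissible $\tau$-transitions of a deterministic monitor.

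First I would note that, since $t$ is not simple for $p$, \emph{every} derivation $p \xRightarrow{t} (\cdot)$ is non-simple; hence I may assume without loss of generality that the given derivation contains no pure $\tau$-loop $p_w (\xrightarrow{\tau})^* p_w$, deleting such loops exactly as in Lemma~\ref{lem:pump_not_simple}. The cleaned derivation still reads $t$, still ends in $q$, and is still non-simple. I then locate the first transition of this cleaned derivation that uses \mycap{mVerd} or \mycap{mRecB}, and split into the two corresponding cases.

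In the \mycap{mVerd} case, that transition is applied to a value, so $p \xRightarrow{x} v$ for a prefix $x$ and then $v \xrightarrow{\alpha} v$. Applying \mycap{mVerd} repeatedly shows $p \xRightarrow{t} v$, so Corollary~\ref{cor:determinism_gives_always_same_v} forces $q = v$; taking $r = v$, $u = \alpha$, and $z$ the remaining suffix gives $p \xRightarrow{x} r \xRightarrow{u} r \xRightarrow{z} q$ with $|u| = 1$. In the \mycap{mRecB} case, by Corollary~\ref{cor:all_det_tau} the only way a deterministic monitor performs such a transition is $w \xrightarrow{\tau} p_w$, so the source is exactly the variable $w$ and $p \xRightarrow{x} w \xrightarrow{\tau} p_w$. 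By Corollary~\ref{cor:find_px_determ}, $p_w$ itself occurs in $p \xRightarrow{x} w$, so I write $x = x_1 x_2$ with $p \xRightarrow{x_1} p_w \xRightarrow{x_2} w$. Combining $p_w \xRightarrow{x_2} w$ with $w \xrightarrow{\tau} p_w$ yields $p_w \xRightarrow{x_2} p_w$, and appending the rest of the derivation gives $p \xRightarrow{x_1} p_w \xRightarrow{x_2} p_w \xRightarrow{z} q$, so $r = p_w$ and $u = x_2$ work; here the two endpoints of the loop are literally the same monitor because $p_x$ is uniquely determined by $x$ in system N.

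The step I expect to be the main obstacle is the requirement $|u| > 0$ in the \mycap{mRecB} case, i.e.\ ruling out $x_2 = \epsilon$. That degenerate situation would mean $p_w$ reaches $w$ by $\tau$-transitions alone and then returns to $p_w$ via \mycap{mRecB}, which is precisely a pure $\tau$-loop $p_w (\xrightarrow{\tau})^* p_w$; but such loops were removed in the clean-up step, so $|x_2| > 0$ is forced. The only remaining work is routine bookkeeping: checking that the clean-up preserves the endpoint $q$ and the trace $t$, and that the suffix derivation $p_w \xRightarrow{z} q$ is left untouched by the decomposition.
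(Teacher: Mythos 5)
Your proof is correct and follows essentially the same route as the paper, which proves this lemma by repeating the argument of Lemma~\ref{lem:pump_not_simple} with Corollary~\ref{cor:find_px_determ} in place of Lemma~\ref{lem:find_px} so that the loop returns to the monitor $p_w$ itself rather than to a sum acting as $p_w$. The extra details you supply (using Corollary~\ref{cor:all_det_tau} to pin down the source of the \mycap{mRecB} transition, and the $\tau$-loop clean-up to secure $|u|>0$) are exactly the bookkeeping the paper leaves implicit.
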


\begin{proof}
	Similar to the proof of Lemma \ref{lem:pump_not_simple}, but use Corollary \ref{cor:find_px_determ}, instead of Lemma \ref{lem:find_px}.
\qed\end{proof}

\begin{theorem}\label{thm:MnisbadforDmon}
	Let $p$ be a deterministic monitor for $M_n$. Then, $|p| =  2^{2^{\Omega(n)}}$.
\end{theorem}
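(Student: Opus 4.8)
The plan is to reduce the desired lower bound on $|p|$ to a counting problem about de Bruijn sequences, in the same spirit as the nondeterministic lower bound of Proposition~\ref{prp:nfa2mon_is_hard}, but now exploiting the extra rigidity of determinism recorded in Lemma~\ref{lem:find_px_det}. By Lemma~\ref{lem:simple_traces} it suffices to exhibit a \emph{simple} set $G$ of traces for $p$ with $|G| = 2^{2^{\Omega(n)}}$, since then $|p| \ge |G|$. The set $G$ will consist of purely binary strings, and the heart of the argument is a criterion guaranteeing that such a string is simple for $p$.

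First I would establish the key claim: \emph{if a binary trace $t$ drives the minimal DFA of $M_n$ through pairwise distinct states, then $t$ is simple for $p$.} Suppose not. Since $p$ is deterministic, Lemma~\ref{lem:find_px_det} yields a decomposition $t = x u z$ with $|u| > 0$ together with a \emph{single} monitor $r$ such that $p \xRightarrow{x} r \xRightarrow{u} r$; in particular $p \xRightarrow{x} r$ and $p \xRightarrow{xu} r$ reach the identical submonitor $r$. By Corollary~\ref{cor:det_mon_is_det} and the definition of $\sim$, for every continuation $w$ and value $v$ we have $p \xRightarrow{xw} v$ iff $r \xRightarrow{w} v$ iff $p \xRightarrow{xuw} v$. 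Taking $v = \yes$ and using that $p$ recognizes $M_n$, this says $xw \in M_n \iff xuw \in M_n$, i.e.\ the distinct binary prefixes $x$ and $xu$ of $t$ are Myhill--Nerode equivalent for $M_n$ --- contradicting the hypothesis that they drive the minimal DFA to distinct states. (That the monitor can read all of $t$ and never fires \mycap{mVerd} en route follows from correctness: every binary prefix extends to a word of $M_n$, e.g.\ by appending $1\,0^{n-1}e$, so the monitor must stay alive and below \yes\ throughout, hence is at a non-verdict submonitor at each step and uses neither \mycap{mVerd} nor a back-edge.)

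Next I would instantiate $t$ with binary de Bruijn sequences of order $n$. Such a sequence has length $2^n + n - 1$ and contains every length-$n$ block exactly once, so the minimal DFA of $M_n$ --- whose states relevant to binary input are the $2^n$ length-$n$ windows together with a tree of $\sum_{i<n} 2^i$ partial-window states --- runs through pairwise distinct states on it: distinct windows are inequivalent via the distinguisher $0^{k-1}e$ that probes the bit at which they differ, and the initial partial-window states are pairwise distinct and distinct from the full windows. Hence every de Bruijn sequence of order $n$ lies in the simple set $G$ by the criterion above. Since the number of binary de Bruijn sequences of order $n$ is the classical count $2^{2^{n-1}-n}$, we get $|G| \ge 2^{2^{n-1}-n} = 2^{2^{\Omega(n)}}$, and therefore $|p| \ge |G| = 2^{2^{\Omega(n)}}$.

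The conceptual crux --- and the step I expect to require the most care --- is the simplicity criterion: it is exactly here that determinism is indispensable, since it forces a non-simple run to revisit the \emph{same} submonitor $r$ (via the syntactic back-edge of Lemma~\ref{lem:find_px_det}) rather than merely an equivalent one, which is what lets us collapse $x$ and $xu$ into a single Myhill--Nerode class. The remaining ingredients are routine: the Myhill--Nerode distinguishers for $M_n$ (already implicit in the paper's remark that $L_n$ needs $2^n$ DFA states) and the enumeration of de Bruijn sequences. Intuitively, the double exponential arises because a monitor's loops must be \emph{nested} (back-edges target ancestors), so it cannot reuse states along the highly non-nested cyclic structure of the de Bruijn graph and is forced to keep a distinct loop-free branch for each of the doubly-exponentially many self-avoiding traversals that de Bruijn sequences encode.
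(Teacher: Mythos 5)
Your overall strategy coincides with the paper's: exhibit a doubly-exponential \emph{simple} set of traces and invoke Lemma~\ref{lem:simple_traces}, with determinism entering through Lemma~\ref{lem:find_px_det} to force a non-simple run to revisit the \emph{same} submonitor $r$, whence (via Corollary~\ref{cor:det_mon_is_det}) the two prefixes $x$ and $xu$ are Myhill--Nerode equivalent for $M_n$. That criterion --- all prefixes pairwise inequivalent implies the trace is simple --- is correct and is exactly the engine of the paper's argument. The gap is in your instantiation: a linear de Bruijn sequence of order $n$ does \emph{not} drive the minimal DFA of $M_n$ through pairwise distinct states, so your witness traces fail your own criterion. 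The Myhill--Nerode class of a binary prefix $x$ is determined by its $0$-padded length-$n$ suffix $l(x)$; in particular the ``tree of partial-window states'' you posit is not present in the \emph{minimal} DFA, because a prefix $w$ with $|w|<n$ is equivalent to the full window $0^{n-|w|}w$ (for instance $\epsilon$, $0$ and $0^n$ all lie in one class). A linear de Bruijn sequence of order $n$ has $2^n+n$ prefixes while only $2^n$ classes exist on binary input, so at least $n$ collisions are unavoidable; concretely, since every length-$n$ block occurs in the sequence, the block $0^n$ occurs somewhere, and the prefix ending there is equivalent to $\epsilon$ --- and likewise each padded prefix $0^{n-j}t_1\cdots t_j$ recurs as a genuine window later on. Hence the count $2^{2^{n-1}-n}$ of de Bruijn sequences cannot be harvested this way.

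The fix is to stay well \emph{short of} a full traversal of the de Bruijn graph, which is what the paper does: with $k\approx n/3$ it sets $B=\{1a10^k1a \mid a\in\{0,1\}^k,\ a\neq 0^k\}$ and takes as witnesses the concatenations of permutations of $B$. Each such trace has length $\Theta(n2^{n/3})\ll 2^n$, and the marker pattern $10^k1$ is placed so that the window $l(x)$ uniquely determines the prefix $x$; all prefixes are therefore pairwise inequivalent, every permutation is simple, and one gets $(2^k-1)!=2^{2^{\Omega(n)}}$ simple traces. If you wish to retain the de Bruijn picture, you would need to count long self-avoiding (not Hamiltonian) paths in the de Bruijn graph that additionally avoid the $n$ padded-prefix windows --- a different and more delicate enumeration than the classical $2^{2^{n-1}-n}$. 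Your closing intuition about nested loops versus the non-nested cycle structure of the de Bruijn graph is sound; it is only the concrete choice of full de Bruijn sequences that breaks the proof.
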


\begin{proof}
	We construct a set of simple traces of size $\Omega((2^{n/3-1}-1)!)$ and by Lemma \ref{lem:simple_traces} this is enough to prove our claim.
	Let $k = \lfloor n /3 \rfloor - 1$ (actually, for simplicity we assume $n = 3 k +3$, as this does not affect our arguments); let 
	$$A = \{a \in \{0,1\}^k \mid \text{ at least one $1$ appears in } a \},$$ 
	let
	$$ B = \{ 1 a 1 0^k 1 a  \in \{0,1\}^{n} \mid a \in A \}, $$
	and let 
	$$ G = \{ x \in \{0,1\}^* \mid x \sqsubseteq g, \text{ where $g$ is a permutation of } B  \}; $$
	then, $|A| = |B| = 2^k-1$ and $|G| > |B|! = 2^{\Omega(|B| \log |B|)} = 2^{\Omega(k\cdot 2^{k})} = 2^{2^{\Omega(n)}}$. It remains to demonstrate that all permutations $g$ of $B$, are simple traces for $p$. 
	Let $g$ be such a permutation. Let $h = (2^k-1)!$ and $g = g_1 g_2\cdots g_h$, where $ g_1,g_2,\ldots, g_h \in B$ and for all such $g_i$, let $g_i = 1 b_i 1 0^k 1 b_i$.
	
	Notice that $g$ is designed so that in every subsequence of length $n$ of $g$, $0^k$ can appear at most once, specifically as the area of $0^k$ in the middle of a $g_i \in B$ as defined. If $0^k$ appears in an area of $k$ contiguous positions, then that area cannot include one of the separating 1's, so it must be an $a \in A$,  which cannot happen by the definition of $A$, or the area of $0^k$ in the middle of a $b \in B$ as defined. If there is another area of $0^k$ closer than $2 k+3$ positions away, again, it must be some $a \in A$, which cannot be the case. 
	Furthermore, and because of this observation, for every $a \in A$, $0^k 1 a$ and $a 1 0^k$ appear exactly once in $g$.
	
	For traces $x,y$, we say that $x \sim y$ if for every trace $z$, 
	$$x z \in M_n \ \text{ iff } \ y z \in M_n.$$
	For trace $x$, if $|x| \geq n$, we define $l(x)$ to be such that $|l(x)| = n$ and there is $x' l(x) = x$; if $|x| < n$, we define $l(x) = 0^{n - |x|}x$. 
	Notice that for traces $x,y$, $x \sim y$ iff $l(x) = l(y)$.
	Also, that if $|x| \geq n$, $l(x)$ must be in one of the three forms below: 
	\begin{enumerate}
		\item $l(x) = 0^{n_1}1 a_1 1 a_210^{n_2}$ for some $n_1 + n_2 = k$ and $a_1, a_2 \in A$ (in this case, $10^k 1 a_1 1 a_210^{n_2} = 10^{n_2}l(x)$ are the last $n+n_2$ positions of $x$), or 
		\item $l(x) = d_1 1 a_1 1 0^{k} 1 d_2$ for some $d_2 \sqsubseteq a_1 \in A$, or
		\item $l(x) = d_1 1 0^{k} 1 a_1 d_2$ for some $a' d_1 =  a_1 \in A$.
	\end{enumerate}
	\emph{Claim}: For $x, y \sqsubseteq g$, if $x \sim y$, then $x = y$. 
	Otherwise, there are $x, y \sqsubseteq g$, such that $l(x) = l(y)$, but $x \neq y$.
	We have the following cases:
	\begin{description}
		\item[$|x|, |y| \leq n$:] in this case, there are $n_1, n_2 \leq n$, such that $0^{n_1}x = 0^{n_2}y$; because $x$ and $y$ start with $1$, then $n_1 = n_2$ and $x=y$.
		\item[$|x| < n < |y|$ :] $y$ must be in one of the forms described above, so if $l(y) = 0^{n_1}1 a_1 1 a_210^{n_2}$, then $x = 1 a_1 1 a_210^{n_2}$, which is a contradiction, because $x$ is not in an appropriate form (right after $1 a_11$, there should be $0^k \neq a_2$); \\
		if $l(x) = l(y) = d_1 1 a_1 1 0^{k} 1 d_2$, then $10^k 1$ must appear exactly once in $x$, so $d_1 = 0^{n_1}$ and $a_1 = b_1$, meaning that $y$ is an initial fragment of $g$, thus $d_1 = \epsilon$ a contradiction, because $l(x)$ starts with $0$; \\
		if $l(x) = d_1 1 0^{k} 1 a_1 d_2$, then we already have a contradiction, because there is some $|a'|>k+1$, such that x starts with $a' 1 0^k 1$, so $d_1 = a'$, but $|d_1| \leq k$.
		\item[$|x|, |y| \geq n$: ]  $l(x) =l(y)$, so they must be of the same form; 
		if $l(x) = l(y) = 0^{n_1}1 a_1 1 a_210^{n_2}$, then $10^k 1 a_1 1 a_210^{n_2}$ are the last $n+n_2$ positions of $x$ and of $y$, so if $x \neq y$, then we found two different places in $g$ where $10^k 1 a_1$ appears, a contradiction; the cases of the other forms are similar.
	\end{description}

	Now, if $g$ is not simple, then by Lemma \ref{lem:find_px_det}, there are $x \sqsubset y \sqsubseteq g$, such that $p \xRightarrow{x} q$ and $p \xRightarrow{y} q$. Furthermore, by Corollary \ref{cor:det_mon_is_det}, if $p \xRightarrow{x} q_1$ and $p \xRightarrow{y} q_2$, $q_1 \sim q \sim q_2$. 
	If $x z \in M_n$, then $p \xRightarrow{x} r \xRightarrow{z} \yes$, so $p \xRightarrow{y} r'$, where $r' \sim r$, so $r' \xRightarrow{z} \yes$, meaning that $y z \in M_n$.
	Similarly, if $y z \in M_n$, then $x z \in M_n$, therefore $x \sim y$.
	Finally, since $x \neq y$ and $x \sim y$, we have a contradiction by the claim we proved above, so $g$ is simple and the proof complete.
\qed\end{proof}

Language $M_n$ above demonstrates an exponential gap between the state size of NFAs, (DFAs,) nondeterministic monitors, and deterministic monitors. 
Therefore, the upper bounds provided by Theorem \ref{thm:nfa_to_monitor} and Corollary \ref{cor:dfa_to_monitor} cannot be  improved significantly. 
On the other hand, it is not clear what the gap between a nondeterministic monitor and an equivalent deterministic one has to be. Corollary \ref{cor:determinization_upper} informs us that for every monitor of size $n$, there is an equivalent deterministic one of size $2^{O(2^n)}$; on the other hand, Theorem \ref{thm:MnisbadforDmon} presents a language (namely $M_n$) which is recognized by a (nondeterministic) monitor of size $k$ ($=O(2^n)$), but by no deterministic monitor of size $2^{o(k)}$. These bounds are significantly different, as there is an exponential gap between them, and they raise the question whether there is a more sophisticated (and efficient) procedure than turning a monitor into an NFA, then using the usual subset construction, and then turning the resulting DFA back into a monitor, as was done in Corollary \ref{cor:determinization_upper}. Theorem \ref{thm:determinizing_monitors_is_VERY_hard} demonstrates that the upper bound of Corollary \ref{cor:determinization_upper} cannot be improved much.

\begin{theorem}\label{thm:determinizing_monitors_is_VERY_hard}
	For every $n \in \NN$, there is an irrevocable regular language on two symbols which is recognized by a nondeterministic monitor of size $O(n)$ and by no deterministic monitor of size $2^{2^{o\left(\sqrt{n \log n}\right)}}$.
\end{theorem}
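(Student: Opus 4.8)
The plan is to build the family $U_n$ by marrying two ingredients: Chrobak's theorem on unary automata (to keep the \emph{nondeterministic} monitor linear) and the permutation/simple-trace technique of Theorem \ref{thm:MnisbadforDmon} (to force a double-exponential \emph{deterministic} monitor). First I would invoke \cite{CHROBAK1986149} to fix a unary NFA in Chrobak normal form: a short tail followed by disjoint cycles of distinct prime lengths $p_1,\dots,p_t$ with $\sum_i p_i = O(n)$ and $L=\prod_i p_i = 2^{\Theta(\sqrt{n\log n})}$ (Landau's function), whose determinization needs $\Omega(L)$ states; equivalently, there are $\Omega(L)$ residues that are pairwise distinguishable modulo $L$. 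This is the quantitative heart of the construction: the cycles are cheap to represent ($O(n)$ symbols in total) but encode $g(n) := 2^{\Theta(\sqrt{n\log n})}$ distinguishable behaviours.

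Next I would define $U_n$ over two symbols (say $1$ for counting and a separator) as the irrevocable language obtained from this unary structure exactly as $M_n$ is obtained from $L_n$. The key design goal is to make acceptance a \emph{modular, existential} condition --- ``the number of $1$s read satisfies $k\equiv r_i \pmod{p_i}$ for some $i$'', suitably marked --- so that accepted words are \emph{pumpable} and can be accepted by looping. Around this I would carve out a set of $g(n)$ pairwise-distinguishable ``residue blocks'' over the two symbols, arranged so that, as in $M_n$, the number of distinguishing classes $D = g(n)^{\Theta(1)}$ strictly exceeds the length of any concatenation of all the blocks, and so that every prefix of every concatenation of the blocks in any order is pairwise distinguishable with respect to $U_n$.

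For the upper bound I would build the $O(n)$ nondeterministic monitor directly from Chrobak normal form rather than through Theorem \ref{thm:nfa_to_monitor} (whose generic bound $2^{O(n\log n)}$ is far too large): each cycle of length $p_i$ becomes a recursion loop $\rec\ x_i.\,1.1.\cdots.1.x_i$ with $p_i$ prefixed actions and an appropriate acceptance branch at the marker, and the whole monitor is the nondeterministic sum of the $t$ loops together with the tail, of total size $O(\sum_i p_i)=O(n)$. Crucially, a modular accepted word is accepted via the back-edge rule \mycap{mRecB}, that is, through a \emph{non-simple} derivation, which is precisely why no large simple set is forced and the nondeterministic monitor can stay linear (contrast the rigidity of $M_n$ exploited in Proposition \ref{prp:nfa2mon_is_hard}).

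For the lower bound I would mirror Theorem \ref{thm:MnisbadforDmon}. Given a deterministic monitor $p$ for $U_n$, let $G$ be the set of all prefixes of all concatenations of the $g(n)$ blocks taken in every order. Each such concatenation $g$ must be a simple trace: were it not, Lemma \ref{lem:find_px_det} would produce prefixes $x\sqsubset y\sqsubseteq g$ reaching a common state, whence $x\sim y$ by Corollary \ref{cor:det_mon_is_det} --- contradicting the distinguishability of distinct prefixes. Hence $G$ is simple, $|G|\ge g(n)! = 2^{2^{\Omega(\sqrt{n\log n})}}$, and Lemma \ref{lem:simple_traces} yields $|p|\ge |G|$, so no deterministic monitor of size $2^{2^{o(\sqrt{n\log n})}}$ exists. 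The main obstacle is the block design together with its distinguishability claim --- the analogue of the delicate ``Claim'' inside the proof of Theorem \ref{thm:MnisbadforDmon}: I must exhibit residue blocks whose running counts (or windows) stay pairwise distinct across an entire concatenation (forcing $D \gg |g|$), while keeping acceptance modular and existential so that the linear nondeterministic monitor survives. Reconciling these opposing demands is exactly where Chrobak's gap between $\sum_i p_i = O(n)$ and $L = g(n)$ is indispensable.
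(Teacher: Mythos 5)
Your upper bound is essentially the paper's: a nondeterministic sum of recursion loops, one per cycle of the Chrobak normal form, each counting one symbol modulo $p_i$ while skipping the other and accepting at the marker; the $O(n)$ size is possible exactly because acceptance goes through \mycap{mRecB}, i.e., through non-simple derivations. The lower bound is where you diverge from the paper, and where there is a genuine gap. You want to transplant the $M_n$ strategy: build $g(n)=2^{\Theta(\sqrt{n\log n})}$ blocks, show every permutation of them is a simple trace by proving all of its prefixes pairwise $U_n$-inequivalent, and count $g(n)!$ permutations. But for a language whose acceptance is a modular condition on symbol counts, prefix equivalence depends only on the counts modulo the $p_i$'s, and this obstructs the block design rather than merely making it delicate. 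With one counting symbol plus a separator (as you propose), a block containing any filler symbol yields consecutive prefixes with identical counts, hence equivalent ones; while blocks consisting purely of the counting symbol commute, so all permutations collapse to the single string $1^{\sum_j k_j}$ and you do not get $g(n)!$ distinct traces at all. With two counting symbols you would instead have to guarantee that the staircase of count-pairs along every permutation never revisits a Myhill--Nerode class of the product language, an injectivity condition over $2^{\Theta(\sqrt{n\log n})}$ prefix pairs that you flag as ``the main obstacle'' but do not resolve --- and it is the entire content of the lower bound.

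The paper avoids the design problem altogether. It defines $U_n$ symmetrically, requiring $x|_0\in Ch_n^0$ or $x|_1\in Ch_n^1$, and proves that \emph{every} trace $t\in\{0,1\}^*$ with $|t|<e^{c\sqrt{n\log n}}$ is simple for any deterministic monitor $p$ recognizing $U_n$: a shortest non-simple trace yields, via Lemma \ref{lem:find_px_det}, a loop $p\xRightarrow{t_1}r\xRightarrow{t_2}s\xRightarrow{1}r$, and the monitors occurring in that derivation, with $0$-moves and $\tau$-moves treated as free, form a DFA for the unary language $Ch_n$ with at most $|t|$ states --- contradicting Chrobak's DFA lower bound. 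Hence the set of \emph{all} binary strings of length less than $e^{c\sqrt{n\log n}}$ is simple, and Lemma \ref{lem:simple_traces} gives $|p|\geq 2^{e^{c\sqrt{n\log n}}}$ with no combinatorial construction needed. (The symmetric use of both projections is what justifies the ``without loss of generality'' on which symbol closes the loop.) To repair your argument you should replace the permutation construction by this reduction; as written, the key distinguishability claim is unproved, and in the one-counting-symbol form you describe it is false.
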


\begin{proof}
	For a word $t \in \{0,1\}^*$, we define the projections of $t$, $t|_0$ and $t|_1$ to be the result of removing all 1's, respectively all 0's, from $t$: for $i \in \{0,1\}$, $\epsilon|_i = \epsilon$, $i t' |_i = i (t'|_i)$, and $(1-i)t'|_i = t'|_i$.
	For $n \in \NN$, let $$F(n) = \max_{m_1 + \cdots + m_k = n} lcm(m_1,\ldots,m_k),$$
	where $lcm(m_1,\ldots,m_k)$ is the least common multiple of  $m_1,\ldots,m_k$,
	and $X(n) = \{m_1,\ldots,m_k\}$, where $m_1 + \cdots + m_k = n$ and  $lcm(m_1,\ldots,m_k) = F(n)$.

	The proof of Theorem \ref{thm:determinizing_monitors_is_VERY_hard} is based on a result by Chrobak
	\cite{CHROBAK1986149} (errata at \cite{CHROBAK2003497}), who demonstrated that for every natural number $n$, there is a unary language (a language over exactly one symbol) which is recognized by an NFA with $n$ states, but by no DFA with $e^{o\left(\sqrt{n \log n}\right)}$ states. The unique symbol used can be (in our case) either $0$ or $1$. We can use $1$, unless we make explicit otherwise.
	The unary language Chrobak introduced was $$Ch_n = \{1^{cm} \mid m \in X(n),\ 0 < c \in \NN \}.$$

	For some $n \in \NN$, let $Ch_n^0$ be Chrobak's language, where the symbol used is $0$ and $Ch_n^1$ be the same language, but with $1$ being the symbol used --- so for $i \in \{0,1\}$, 
	$Ch_n^i = \{i^{cm} \mid m \in X(n),\ 0 < c \in \NN \}$. Now, let 
	$$
	U_n = \{x e y \in \{0,1,e\}^* \mid x \in \{0,1\}^* \text{ and for some } i \in \{0,1\},\ x|_i \in Ch_n^i \}.
	$$
	Fix some $n \in \NN$ and let $X(n) = \{ m_1, m_2, \ldots , m_k \}$.
	$U_n$ can be recognized by the monitor $p_0 + p_1$ of size $O(n)$, where for $\{i,\overline{i}\} = \{0,1\}$, $p_i = p_i^1 + p_i^2 + \cdots + p_i^k$, where for $1 \leq j \leq k$, $p_i^j$ is the monitor defined recursively in the following way. Let 
	$$p_i^j[m_j] = \rec\ x_{m_j}.(\overline{i}.x_{m_l} + i.x_1 + e.\yes);$$ 
	for $0 \leq l < m_j$, let 
	$$p_i^j[l] = \rec\ x_l.(\overline{i}.x_l + i.p_i^j[l+1]);$$ 
	finally, let $p_i^j = p_i^j[0]$. That is, after putting everything together and simplifying the variable indexes,
	$$
	p_i^j = 
	\rec\ x_0.(\overline{i}. x_0 + \underbrace{i.\rec\ x_1. (\overline{i}. x_1 +  
		\cdots i.\rec\ x_{m_j}. (\overline{i}. x_{m_j}}_{m_j}  + i.x_1 + e.\yes) \cdots ))).\footnote{
		Notice that variable $x$ appears in several scopes of $\rec\ x$ and thus violates our agreement that each variable is bound by a unique recursion operator (and the same can be said for $x_b$). However, this can easily be dealt with by renaming the variables and it helps ease notation.}
	$$
	Monitor $p_i^j$ essentially ignores all appearances of $\overline{i}$ and counts how many times $i$ has appeared. If $i$ has appeared a multiple of $m_j$ times, then $p_i^j$ is given a chance to reach verdict $\yes$ if $e$ then appears; otherwise it continues counting from the beginning. That the size of $p_0 + p_1$ is $O(n)$ is evident from the fact that for every $i,j$, $|p_i^j| = O(m_j)$, which is not hard to calculate since $\left| p_i^j[m_j]\right| = 9$ and for $l < m_j$, $\left|p_i^j[l]\right| = \left|p_i^j[l+1]\right| + 5$.
	
	Let $p$ be a deterministic monitor for $U_n$ and $t \in \{0,1\}^*$. 
	To complete the proof of the theorem,  it suffices to prove for some constant $c > 0$ that if $|t|< e^{c  \cdot\sqrt{n \log n}}$, then $t$ must be simple. 
	Indeed, proving the above would mean that we have constructed a simple set of traces of cardinality more than $2^{ e^{c  \cdot\sqrt{n \log n}}}$ --- that is, the set of traces on $\{0,1\}$ of length less than $ e^{c  \cdot\sqrt{n \log n}}$ --- which, by Lemma \ref{lem:simple_traces}, gives the same lower bound for $|p|$.
	Specifically, we prove that if $t$ is not simple for $p$, then there is a DFA for Chrobak's unary language, $Ch_n$, of at most $|t|$ states, so by Chrobak's results it cannot be that $t$ is not simple and  $|t|<  e^{c  \cdot \sqrt{n \log n}}$. 
	
	So, let $t$ be a shortest non-simple trace on $\{0,1\}$. 
	Since $t$ is not simple and is minimal, $t = t_1 t_2 i$, where $i = 0$ or $1$ and $p \xRightarrow{t_1} r \xRightarrow{t_2} s \xRightarrow{i} r$ (by Lemma \ref{lem:find_px_det}). Without loss of generality, we assume $i = 1$. Let $A$ be the DFA $(Q,\{1 \},\delta,p,F)$, where $Q$ is the set of all monitors appearing in the derivation $p \xRightarrow{t_1} r \xRightarrow{t_2} s \xRightarrow{1} r$, $\delta(q_1,1) = q_2$ iff $q_1 \left(\xRightarrow{0}\right)^*\left(\xrightarrow{\tau}\right)^*\xrightarrow{1}q_2$ is part of the derivation above, and 
	$$F = \{q \in Q \mid \text{for all } c \geq 0,\ q \xRightarrow{0^c e} \yes \}.$$
	Notice that $A$ is, indeed, deterministic, since transitions move along the derivation and all monitors in $Q$ transition exactly once in $p \xRightarrow{t_1} r \xRightarrow{t_2} s \xRightarrow{1} r$ --- so the first $\xrightarrow{1}$ transition that appears after a state/monitor in the derivation exists and is unique.
	
	We claim that $A$ accepts $Ch_n$.
	By the definition of $\delta$, if the run of $A$ on $w$ reaches state (submonitor) $q$, then 
	there is some $w^t \in \{0,1\}^*$, such that $w^t|_1 = w$ and $p \xRightarrow{w^t} q$. Then, \\
	$A$ accepts $w$  \hfill  iff  \hfill 
	$q \in F$ 
	\hfill 	iff  \hfill  
	for all $c \geq 0$,  
	$q \xRightarrow{0^c e} \yes$ 
	\hfill 	iff  \hfill  \\
	for all $c \geq 0$,  
	$p \xRightarrow{w^t 0^c e} \yes$ (by Corollary \ref{cor:determinism_gives_always_same_v})
	\hfill  iff  \hfill 
	for every $c \geq 0$, $w^t 0^c e \in U_n$
	\hfill  iff  \hfill 
	$w \in Ch_n$.
	
	Finally, although we promised that we would only use two symbols, we have used three. 
	However, a language of three symbols can easily be encoded as one of two symbols, using the mapping: $0 \mapsto 00$, $1 \mapsto 01$, and $e \mapsto 11$. We can encode $U_n$ like this and simply continue with the remaining of the proof.
\qed\end{proof}

Notice that the lower bound given by Theorem \ref{thm:determinizing_monitors_is_VERY_hard} depends on the assumption that we can use two symbols in our regular language; this can be observed both from the statement of the theorem and from its proof, which makes non-trivial use of the two symbols. 
So, a natural question to ask is whether the same bounds hold for NFAs and monitors on one  symbol.

Consider an irrevocable regular language on one symbol. If $k$ is the length of the shortest word in the language, then we can immediately observe two facts. The first is that the smallest NFA which recognizes the language must have at least $k + 1$ states (and indeed, $k + 1$ states are enough). The second fact we can observe is that there is a deterministic monitor of size exactly $k+1$ which recognizes the language: $1^k.\yes$. Therefore, things are significantly easier when working with unary languages.

\begin{corollary}\label{cor:unary_is_easy}
	If there is an irrevocable NFA of $n$ states which recognizes unary language $L$,
	then, there is a deterministic monitor of size at most $n$ which recognizes $L$.
\end{corollary}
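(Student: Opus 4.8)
The plan is to observe that for unary irrevocable languages the structure is extremely rigid, so the monitor $1^k.\yes$ mentioned in the discussion preceding the statement already does the job, and the only remaining task is to verify the size bound $k+1 \leq n$.

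First I would note that, since $L$ is recognized by an irrevocable NFA, $L$ is itself irrevocable (by the earlier corollary characterizing irrevocable regular languages). As $L$ is unary, write $L = \{1^m \mid m \in S\}$ for some $S \subseteq \NN$; irrevocability forces $S$ to be upward closed, so $S = \{m \mid m \geq k\}$, where $k$ is the length of the shortest word in $L$. The degenerate cases are immediate: if $L = \emptyset$ then any recognizing NFA has $n \geq 1$ states and the empty-accepting monitor $\mend$ has size $1 \leq n$; if $\epsilon \in L$ then $k = 0$ and the recognizing monitor is simply $\yes$.

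Next I would confirm that the deterministic monitor $q = 1^k.\yes$ recognizes $L$. On a trace $1^m$ with $m < k$ the monitor walks along its prefix chain without reaching a verdict, whereas $q \xRightarrow{1^k} \yes$; and for $m > k$ rule $\mycap{mVerd}$ gives $\yes \xRightarrow{1^{m-k}} \yes$, so $q \xRightarrow{1^m} \yes$ holds exactly when $m \geq k$. This $q$ is deterministic, since it contains no sums, and by the recursive size definition $|q| = k + 1$.

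Finally I would bound $k+1$ by $n$ via the standard state-counting argument. Consider an accepting run of the NFA on the shortest accepted word $1^k$; it is a sequence $q_0, q_1, \ldots, q_k$ of $k+1$ states with $q_k$ accepting. If two of these coincided, say $q_i = q_j$ with $i < j$, then splicing out the segment between them would yield an accepting run on $1^{k-(j-i)}$, a strictly shorter accepted word, contradicting the minimality of $k$. Hence the $k+1$ states are pairwise distinct, so $n \geq k+1$ and therefore $|q| = k+1 \leq n$. There is essentially no hard step: once irrevocability pins down $L = \{1^m \mid m \geq k\}$, the construction is forced, and the only point needing a little care is this lower bound $n \geq k+1$, which is just the observation that an accepting run on the unique shortest word cannot repeat a state.
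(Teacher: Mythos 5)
Your proposal is correct and follows essentially the same route as the paper, whose justification is the discussion immediately preceding the corollary: irrevocability forces $L = \{1^m \mid m \geq k\}$ for $k$ the length of the shortest accepted word, the deterministic monitor $1^k.\yes$ of size $k+1$ recognizes $L$, and any recognizing NFA needs at least $k+1$ states. You merely make explicit the pigeonhole argument for $n \geq k+1$ and the degenerate cases, which the paper leaves as assertions.
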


\section{Determinizing with Two Verdicts}

\label{sec:quick_fix}

In Section \ref{sec:bounds} we have dealt with monitors which can only reach a positive verdict or a negative one but not both. This was mainly done for convenience, since a single-verdict monitor is a lot easier to associate with a finite automaton and it helped simplify several cases. It is also worth mentioning that, as demonstrated in \cite{rvpaper}, to monitor for $\mhml$-properties, we are interested in single-verdict monitors. In this section, we demonstrate how the constructions and bounds of Section \ref{sec:bounds} transfer to the general case of monitors.

First, notice that there is no deterministic monitor equivalent to the monitor 
$m_c = \alpha.\yes + \alpha.\no$, also defined in Subsection \ref{subsec:det_verd_choices}, since $m_c$ can reach both verdicts with the same trace. Thus, there are monitors, which are not, in fact,  equivalent to deterministic ones. These are the ones for which there is a trace through which they can transition to both verdicts. We call these monitors \emph{conflicting}.

To treat non-conflicting monitors, we reduce the problem to the determinization of single-verdict monitors. For this, we define two transformations, very similar to what we did in Section \ref{sec:rewriting} to reduce the determinization of monitors to the determinization of processes. Let $[no]$ be a new action, not in \mycap{Act}. We define $\nu$  in the following way:
	\begin{align*}
&\nu(\no)                   = [no].\yes               , \\
&\nu(x)                      = x             , \\
&\nu(\alpha.m)               = \alpha.\nu(m)                , \\
&\nu(m + n)                  = \nu(m) + \nu(n)              ,\text{ and}  \\
&\nu(\texttt{rec} x.m)       = \texttt{rec} x.\nu(m)        .\\ \\
	\noalign{We also define  $\nu^{-1}$: if $s$ is a sum of $[no].\yes$, then $\nu^{-1}(s) = \no$ and otherwise, }
&\nu^{-1}(x)                      = x             , \\
&\nu^{-1}(\alpha.m)               = \alpha.\nu^{-1}(m)                ,\\ 
&\nu^{-1}(m + n)                  = \nu^{-1}(m) + \nu^{-1}(n)              , \text{ and}\\
&\nu^{-1}(\texttt{rec} x.m)       = \texttt{rec} x.\nu^{-1}(m)        .
\end{align*}

\begin{lemma}\label{lem:substitute_no}
	Let $q = \nu(p)$.
	Then, $p \xRightarrow{t} \no$ if and only if there is some $t' \sqsubseteq t$, such that $q \xRightarrow{t'} \no.r$.
\end{lemma}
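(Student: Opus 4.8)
The plan is to copy the proof of Lemma~\ref{lem:substitute_v} almost verbatim, with $\nu$ now in the role that $\pi$ played there and with the submonitor $[no].\yes = \nu(\no)$ (the target monitor of the statement) in the role of $v.\texttt{nil}$. Two observations make the transfer clean. First, $[no]$ does not belong to $\mycap{Act}$, so it never occurs as an action of $p$; hence every occurrence of $[no].\yes$ in $q = \nu(p)$ arises from replacing an occurrence of $\no$, and $\no$ is the unique $\nu$-preimage of $[no].\yes$. Second, $\nu$ is defined homomorphically on the term structure and fixes variables, so it commutes with substitution, $\nu(m[n/x]) = \nu(m)[\nu(n)/x]$, and therefore with the operational semantics: if a transition $r \xrightarrow{\mu} r'$ is derivable \emph{without} using rule \mycap{mVerd}, then $\nu(r) \xrightarrow{\mu} \nu(r')$, and conversely every transition $\nu(r) \xrightarrow{\mu} s$ with $\mu \in \mycap{Act}\cup\{\tau\}$ is of the form $s = \nu(r')$ for some $r \xrightarrow{\mu} r'$ --- the one transition $\nu$ adds, namely $[no].\yes \xrightarrow{[no]} \yes$, is excluded because $[no]\notin\mycap{Act}$. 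The \emph{only} rule on which $\nu$ fails to commute is the sticky step $\no \xrightarrow{\alpha} \no$, since $[no].\yes$ has no $\alpha$-transition for $\alpha \in \mycap{Act}$; this single failure is exactly what makes the statement speak of a prefix $t' \sqsubseteq t$ rather than of $t$ itself.

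For the ``only if'' direction I would induct on the length of the derivation $p \xRightarrow{t} \no$. Given a realising chain of steps $p = r_0 \to \cdots \to r_\ell = \no$, let $i$ be the least index with $r_i = \no$. Because verdicts are absorbing, a derivation that ends in $\no$ can pass through no earlier verdict, so none of $r_0,\ldots,r_{i-1}$ is a verdict and every step among them avoids \mycap{mVerd}; moreover the final step $r_{i-1}\to\no$ is an $\mycap{Act}$-step, by the standing assumptions that no submonitor has the form $\rec\ x.v$ and that there is no $\tau$-transition into a verdict. Hence the whole prefix $r_0 \to \cdots \to r_i$ is mirrored by the commutation property, and if $t'$ is the trace read along it then $t' \sqsubseteq t$ and $\nu(p) \xRightarrow{t'} \nu(\no) = [no].\yes$.

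For the ``if'' direction I would induct on the length of $q \xRightarrow{t'} [no].\yes$. Since $t'$ is a word over $\mycap{Act}$ and the derivation terminates at $[no].\yes$, it never fires the $[no]$-transition, so every step is covered by the converse commutation and can be reflected back to $p$; this yields $p \xRightarrow{t'} r$ with $\nu(r) = [no].\yes$, whence $r = \no$ by uniqueness of the preimage, i.e. $p \xRightarrow{t'} \no$. Finally, since $\no$ is absorbing (rule \mycap{mVerd}) and $t' \sqsubseteq t$, we extend $p \xRightarrow{t'} \no$ to $p \xRightarrow{t} \no$, which closes the equivalence.

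The work here is entirely bookkeeping rather than ideas; the one point demanding care is the commutation property across the recursion rule, where the argument must route through $\nu$ commuting with substitution, and the precise accounting of the single non-commuting rule \mycap{mVerd}, which is what is absorbed by the prefix relation $t' \sqsubseteq t$ and prevents us from claiming $t' = t$.
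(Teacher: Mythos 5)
Your proof is correct and follows essentially the same route as the paper, which itself proves this lemma (and the analogous Lemma~\ref{lem:substitute_v}) by a straightforward induction on the length of the derivation in each direction; your elaboration of the commutation of $\nu$ with the transition rules, the special handling of \mycap{mVerd}, and the uniqueness of the $\nu$-preimage of $[no].\yes$ is exactly the bookkeeping the paper leaves implicit. The only tacit dependence worth flagging is that your claim that no step before the first occurrence of $\no$ uses \mycap{mVerd} relies on the paper's standing convention that all sums of a verdict are that verdict, which is consistent with the paper's setup.
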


\begin{proof}
	Straightforward induction on the derivations.
\qed\end{proof}

\begin{lemma}\label{lem:substitute_no_nil}
	Let $q = \nu^{-1}(p)$, where verdict \no\ does not appear in $p$.
	Then, there is some $t' \sqsubseteq t$ and a sum $s$ of $[no].\yes$, such that $p \xRightarrow{t'} s$, if and only if $q \xRightarrow{t} \no$.
\end{lemma}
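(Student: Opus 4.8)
The plan is to mirror the proof of Lemma~\ref{lem:substitute_v_nil}, the structurally identical statement from Section~\ref{sec:rewriting}, under the correspondence $v.r \leftrightarrow [no].\yes$ and $v \leftrightarrow \no$. Both directions proceed by induction on the length of a weak derivation, and two semantic facts drive the argument: (i) $\nu^{-1}$ acts as a homomorphism on transitions except that it collapses every sum of $[no].\yes$ to the verdict \no; and (ii) \no\ is an irrevocable verdict, so by rule \mycap{mVerd} we have $\no \xRightarrow{t''} \no$ for every $t''$, which is exactly what lets a prefix $t' \sqsubseteq t$ (rather than $t$ itself) appear in the statement.

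For the ``if'' direction, I would assume $q \xRightarrow{t} \no$ and induct on the length of this derivation. The crucial use of the hypothesis is that \no\ does not occur in $p$: consequently the only way the verdict \no\ can arise along $q = \nu^{-1}(p)$ is through the clause $\nu^{-1}(s) = \no$ applied to a submonitor $s$ of $p$ that is a sum of $[no].\yes$. I would isolate the first step of the derivation at which \no\ appears; the trace consumed up to that point is the desired $t' \sqsubseteq t$, and the submonitor of $p$ corresponding to that state is the required sum $s$ of $[no].\yes$, so that $p \xRightarrow{t'} s$. The inductive hypothesis handles the prefix of the derivation before \no\ emerges, where $q$ and $p$ stay in lock-step under $\nu^{-1}$.

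For the ``only if'' direction, I would assume $p \xRightarrow{t'} s$ with $t' \sqsubseteq t$ and $s$ a sum of $[no].\yes$, and induct on the length of this derivation to produce $q \xRightarrow{t'} \nu^{-1}(s) = \no$, transporting each transition of $p$ to the matching transition of $q$ via fact (i). Writing $t = t' t''$ and appending $\no \xRightarrow{t''} \no$ (by fact (ii)) then yields $q \xRightarrow{t} \no$.

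The main obstacle, as in Lemma~\ref{lem:substitute_v_nil}, is purely bookkeeping around recursion: I must check that $\nu^{-1}$ commutes with the substitution $m[\rec x.m/x]$ performed by rule \mycap{mRec}, so that the step-by-step correspondence between the derivations of $p$ and of $q$ survives unfolding of recursion. This holds because $[no].\yes$ is closed, so substitution neither creates nor destroys a sum of $[no].\yes$; hence reaching a sum of $[no].\yes$ in $p$ matches reaching \no\ in $q$ and vice versa. The one point needing a little care is that a single occurrence $[no].\yes$ already counts as a sum of $[no].\yes$, so the two clauses of the definition of $\nu^{-1}$ are exhaustive on the submonitors that can be reached.
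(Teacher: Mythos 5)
Your proposal is correct and matches the paper's approach: the paper proves this lemma by "straightforward induction on the derivations" (mirroring Lemma \ref{lem:substitute_v_nil}), which is exactly the induction you carry out, and your identification of where the hypothesis that \no\ does not appear in $p$ is used, and of the role of rule \mycap{mVerd} in producing the prefix $t' \sqsubseteq t$, fills in the details the paper leaves implicit.
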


\begin{proof}
	Straightforward induction on the derivations.
\qed\end{proof}

\begin{theorem}\label{thm:multi_verdict}
	Let $m$ be a monitor which is not conflicting. Then, there is an equivalent deterministic one of size $2^{2^{O(|m|)}}$.
\end{theorem}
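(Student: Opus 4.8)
The plan is to reduce the two-verdict case to the single-verdict determinization of Section \ref{sec:bounds} by encoding with $\nu$, determinizing, and decoding with $\nu^{-1}$. Concretely, I would first form $\nu(m)$, which uses only the verdict $\yes$ over the extended action set $\mycap{Act}\cup\{[no]\}$ and has size $O(|m|)$ (each $\no$ becomes the two-symbol gadget $[no].\yes$). Since $\nu(m)$ is single-verdict, I can run the pipeline behind Corollary \ref{cor:determinization_upper}: turn $\nu(m)$ into an NFA of $O(|m|)$ states, determinize by the subset construction into a DFA of $2^{O(|m|)}$ states, and realize that DFA as a deterministic monitor $n'$ of size $2^{2^{O(|m|)}}$ via Theorem \ref{thm:nfa_to_monitor} and Corollary \ref{cor:dfa_to_monitor}. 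The candidate answer is $n=\nu^{-1}(n')$; as $\nu^{-1}$ never increases size, $|n|=2^{2^{O(|m|)}}$, as required. It then remains to argue that $n$ is deterministic over $\mycap{Act}$ and that $n\sim m$.

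The crux, and the main obstacle, is that $\nu^{-1}$ is only meaningful when every occurrence of $[no].\yes$ appears as a standalone summand: if the subset construction produces a mixed state such as $[no].\yes+\alpha.q$, then $\nu^{-1}$ leaves the illegal action $[no]$ inside $n$. Such states genuinely arise, since after a trace $t$ with $m\xRightarrow{t}\no$ one nondeterministic branch sits at $[no].\yes$ while other branches are still reading genuine actions. I would resolve this by exploiting the non-conflicting hypothesis through the following key fact: if $m\xRightarrow{t}\no$ then, by irrevocability of verdicts, $m\xRightarrow{tu}\no$ for every $u$, so if some extension also satisfied $m\xRightarrow{tu}\yes$ the trace $tu$ would reach both verdicts, contradicting non-conflictingness; hence once $m$ can reject after $t$ it can never accept after any extension of $t$, and symmetrically for $\yes$. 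Thus the yes-traces, the no-traces, and the remaining traces partition the prefixes into three mutually exclusive, absorbing regions.

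Using this fact I would insert a pruning step between the subset construction and the conversion to a monitor. Any DFA state $S$ that contains the NFA state $[no].\yes$ (equivalently, from which reading $[no]$ reaches the all-accepting sink) is entered only by no-traces, and from it no real-action path can ever reach $\yes$; I therefore replace all of $S$'s outgoing transitions by the single transition $[no]\to\yes$. This leaves the accepting behaviour on yes-traces untouched (their runs, by the key fact, never enter such a state) and preserves the rejecting behaviour, since after a no-trace the equivalent monitor only needs to commit to $\no$. In the resulting $n'$ every $[no]$ then occurs precisely inside a standalone $[no].\yes$, so $\nu^{-1}([no].\yes)=\no$ fires cleanly and $n=\nu^{-1}(n')$ is a genuine deterministic monitor over $\mycap{Act}$: the pruned states become $\no$, while all other sums keep the determinised shape $\sum_{\alpha\in A}\alpha.(\cdots)$ with $A\subseteq\mycap{Act}$.

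Finally I would verify $n\sim m$ on both verdicts. For the positive verdict, $m\xRightarrow{t}\yes$ iff $\nu(m)\xRightarrow{t}\yes$ (as $\nu$ fixes $\yes$ and real actions), iff the pruned DFA accepts $t$ (its run avoids pruned states by the key fact), iff $n'\xRightarrow{t}\yes$, iff $n\xRightarrow{t}\yes$ (as $\nu^{-1}$ fixes $\yes$ and real actions). For the negative verdict I would chain Lemma \ref{lem:substitute_no} and Lemma \ref{lem:substitute_no_nil}: $m\xRightarrow{t}\no$ iff $\nu(m)$ reaches a sum of $[no].\yes$ along some prefix $t'\sqsubseteq t$, iff the corresponding pruned state is entered, iff $n'$ reaches a standalone $[no].\yes$ along $t'$, iff $n=\nu^{-1}(n')\xRightarrow{t}\no$ (then $\no$ absorbs the rest of $t$). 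Traces on which $m$ reaches no verdict are handled by the same language correspondence. I expect the routine inductions on derivation length, mirroring the proofs of Lemmata \ref{lem:substitute_no} and \ref{lem:substitute_no_nil}, to be the only remaining bookkeeping, with the non-conflicting/pruning argument being the single conceptually load-bearing step.
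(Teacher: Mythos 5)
Your proposal is correct and follows essentially the same route as the paper: encode $\no$ as $[no].\yes$ to obtain a single-verdict monitor, determinize via Corollary \ref{cor:determinization_upper}, decode with $\nu^{-1}$, and invoke non-conflictingness (via irrevocability of verdicts) to justify discarding the sibling branches of a rejecting state. The only cosmetic difference is that you prune mixed states at the DFA level, whereas the paper achieves the same effect by replacing every \emph{maximal sum} of $[no].\yes$ by $\no$ in the resulting monitor and justifying this in the equivalence chain.
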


\begin{proof}
	Let $m$ be a monitor which is not conflicting, but uses both verdicts \yes\ and \no. 
By Lemma \ref{lem:no_sums_of_verdicts}, we can assume that there are no sums of \no\ in $m$.
Let $m'$ be the result of replacing \no\ in $m$ by $[no].\yes$, where $[no]$ is a new action not appearing in $m$. Then, for some constant $c > 0$, $m'$ is a single-verdict monitor of size $c \cdot |m|$, so as we have shown in the preceding sections (Corollary \ref{cor:determinization_upper}), $m'$ is equivalent to a deterministic monitor $n'$ of size $2^{O(2^{c\cdot |m|})}$. Let $n$ be the result of replacing all maximal sums of $[no].\yes$ by $\no$. Then, $n$ is deterministic, because there are no sums of $\no$ (they would have been replaced) and all other sums remain in the form required by deterministic monitors. It remains to demonstrate that $m \meq n$, which we now do. Let $t \in \mycap{Act}^*$. Then,  \\
$m \xRightarrow{t} \no$ \\
iff 
$m \xRightarrow{t} \no$ and $m \not\xRightarrow{t} \yes$
($m$ is not conflicting) \\
iff
there is a $t' \sqsubseteq t$ and $\alpha \in \mycap{Act}$, such that 
$m' \xRightarrow{t'} [no] . \yes$ and  $m' \not\xRightarrow{t'\alpha} \yes$ (by Lemma \ref{lem:substitute_no})
\\
iff 
there is a $t' \sqsubseteq t$ and   $\alpha \in \mycap{Act}$, such that
$m' \xRightarrow{t'[no]} \yes$ and $m' \not\xRightarrow{t'\alpha} \yes$ 
(there are no sums of \no\ in $m$)
\\
iff
there is a $t' \sqsubseteq t$ and $\alpha \in \mycap{Act}$, such that
$n' \xRightarrow{t'[no]} \yes$ and $n' \not\xRightarrow{t'\alpha} \yes$ 
($m'$ and $n'$ are equivalent)
\\
iff
there is a $t' \sqsubseteq t$, an $\alpha \in \mycap{Act}$, and some $s \neq \yes$, such that
$n' \xRightarrow{t'} s \xRightarrow{[no]} \yes$ and $n' \not\xRightarrow{t'\alpha} \yes$ 
\\
iff
there is a $t' \sqsubseteq t$, an $\alpha \in \mycap{Act}$, and a sum $s$ of $[no] . \yes$, such that
$n' \xRightarrow{t'} s$ and $n' \not\xRightarrow{t'\alpha} \yes$ 
\\
iff 
there is a $t' \sqsubseteq t$ and $\alpha \in \mycap{Act}$, such that
$n \xRightarrow{t'} \no$ and $n \not\xRightarrow{t'\alpha} \yes$
(by Lemma \ref{lem:substitute_no_nil})
\\
iff 
$n \xRightarrow{t} \no$.\\
The case of the  $\yes$ verdict is straightforward.
\qed\end{proof}

Naturally, the same lower bounds as for single-verdict monitors  hold for the general case of monitors as well.

\subsubsection*{Conflicting Monitors}

We demonstrated in this section how we can determinize any non-conflicting monitor. A deterministic and conflicting monitor is a contradiction, as it would have to deterministically reach two different verdicts on the same trace. It would be good, therefore, to be able to detect conflicting monitors. Here we sketch how this can be done 
using nondeterministic logarithmic space.

For any monitor $m$, let $G_m = (V,E)$ be a graph, such that 
$$V = \{ (p,q) \mid p,q \text{ submonitors of } m \} \ \ \text{ and }$$
$$E = \{ (p,q,p',q') \in V^2 \mid \exists \alpha \in \mycap{Act}.\ p \xRightarrow{\alpha} p' \text{ and } q \xRightarrow{\alpha} q' \}.$$
Then, $m$ is conflicting iff there is a path from $(m,m)$ to $(\yes,\no)$ in $G_m$. This is a subproblem of the $st$-connectivity problem, known to be in \NL\ and solvable in  time $O(|V|+|E|) = O(|m|^4)$ (by running a search algorithm in $G_m$).

\section{Conclusions}
\label{sec:conclusions}

We have provided three methods for determinizing monitors. One of them is by reducing the problem to the determinization of processes, which has been handled by Rabinovich in \cite{rabinovich}; another is by using Rabinovich's methods directly on the formulae of $\muhml$, bringing them to a deterministic form, and then employing Francalanza et al.'s  monitor synthesis method from \cite{rvpaper}, ending up with a deterministic monitor; the last one transforms a monitor into an NFA, uses the classical subset construction from Finite Automata Theory to determinize the NFA, and then, from the resulting DFA constructs a deterministic monitor.

The first method is probably the simplest and directly gives results, at the same time describing how the behavior of monitors is linked to processes. The second method is more explicit, in that it directly applies Rabinovich's methods; furthermore, it is used directly on a $\muhml$ formula and helps us relate the (non-)deterministic  behavior of monitors to the form of the formula they were constructed to monitor.
The third method links monitors to finite automata and allows us to extract more precise bounds on the size of the constructed deterministic monitors.

Monitors are central for the runtime verification of processes. We have focused on monitors for $\muhml$, as constructed in \cite{rvpaper}.
We showed that we can add a runtime monitor to a system without having a significant effect on the execution time of the system. 
Indeed, in general, evaluating a nondeterministic monitor of size $n$ for some specific trace 
amounts to keeping track of all possible derivations along the trace.
This can take exponential time with regards to the size of the monitor, as $n$ submonitors give rise to  an exponential number of combinations of submonitors we could reach along a trace, if the trace is long enough. 
More importantly, computing each transition can take time up to $n^2$, because for up to $n$ submonitors we may need to transition to up to $n$ submonitors.
By using a deterministic monitor, each transition is provided explicitly by the monitor description, so we can transition immediately at every step along a trace --- with a cost depending on the implementation.
This speed-up can come at a severe cost, though, since we may have to use up to doubly-exponential more space to store the monitor, and even stored in a more efficient form as its LTS, the deterministic monitor may require exponential extra space.

\paragraph{Summary of Bounds:}

We were able to prove certain upper and lower bounds for several constructions. Here we give a summary of the bounds we have proven, the bounds which are known, and the ones we can further infer from these results.
\begin{itemize}
	\item 
	Corollary \ref{cor:determinization_upper} (actually, Theorem \ref{thm:multi_verdict} for the general case) informs us that from a nondeterministic monitor of size $n$, we can construct a deterministic one of size $2^{O(2^n)}$.
	\item 
	Theorem \ref{thm:determinizing_monitors_is_VERY_hard} explains that we cannot do much better, because there is an infinite family of monitors, such that for any monitor of size $n$ in the family, there is no equivalent deterministic monitor of size $2^{2^{o(\sqrt{n \log n})}}$.
	\item 
	As for when we start with an NFA, it is a classical result
	that an NFA of $n$ states is equivalent to a DFA of $2^n$ states; furthermore, this bound is tight.
	\item
	Theorem \ref{thm:nfa_to_monitor} informs us that an irrevocable NFA of $n$ states can be converted to an equivalent monitor of size $2^{O(n \log n)}$.
	\item 
	Proposition \ref{prp:nfa2mon_is_hard} reveals that there is an infinite family of NFAs, for which every NFA of the family of $n$ states is not equivalent to any monitor of size $2^{o(n)}$.
	\item
	Corollary \ref{cor:determinization_upper} yields that an irrevocable NFA of $n$ states can be converted to an equivalent deterministic monitor of size $2^{O(2^n)}$; Theorem \ref{thm:MnisbadforDmon} makes this bound tight.
	\item 
	Corollary \ref{cor:dfa_to_monitor} allows us to convert a DFA of $n$ states to a deterministic monitor of $2^{O(n)}$ states; Theorem \ref{thm:MnisbadforDmon} makes this bound tight.
	\item 
	This $2^{O(n)}$ is also the best upper bound we have for converting a DFA to a (general) monitor; it is unclear at this point what lower bounds we can establish for this transformation.
	\item 
	We can convert a (single-verdict) monitor of size $n$ to an equivalent DFA of $O(2^n)$ states, by first converting the monitor to an NFA of $n$ states (Proposition \ref{prp:A(p)is_good}) and then using the classical subset construction.
	\item 
	If we could convert any monitor of size $n$ to a DFA of $2^{o(\sqrt{n \log n})}$ states, then we could use the construction of Corollary \ref{cor:dfa_to_monitor} to construct a deterministic monitor of $2^{2^{o(\sqrt{n \log n})}}$ states, which contradicts the lower bound of Theorem \ref{thm:determinizing_monitors_is_VERY_hard}; therefore, $2^{\Omega(\sqrt{n \log n})}$ is a lower bound for converting monitors to equivalent DFAs.
	\item
	Using Lemma \ref{lemma:trace_equiv}, we can reduce the problem of determinizing monitors to the determinization of processes (up to trace-equivalence); by Theorem \ref{thm:determinizing_monitors_is_VERY_hard}, this gives a lower bound of $2^{2^{\Omega(\sqrt{n \log n})}}$ for Rabinovich's construction in \cite{rabinovich}.
	\item
	Similarly, using the constructions of \cite{rvpaper}, one can convert a $\mhml$ formula into a monitor for it and a monitor into a $\mhml$ formula of the same size (or smaller). Therefore, we can conclude that the lower bounds for determinizing monitors also hold for determinizing $\mhml$ formulas as in Subsection \ref{sect:form_rewrite}.
	Therefore, a $\chml$ formula which holds precisely for the traces in language $M_n$ from Section \ref{sec:bounds} must be of size $2^{\Omega(n)}$ and an equivalent deterministic \chml\ formula must be of size $2^{2^{\Omega(n)}}$.
	Therefore, NFAs 
	as a specification language, can be exponentially more succinct than the monitorable fragment of $\muhml$ and doubly exponentially more succinct than the deterministic monitorable fragment of $\muhml$; DFAs can be exponentially more succinct than the deterministic monitorable fragment of $\muhml$.
	\item 
	Corollary \ref{cor:unary_is_easy} informs us that it is significantly easier to convert an irrevocable NFA or monitor into a (deterministic) monitor when the alphabet we  use (the set of actions) is limited to one element: when there is only one action/symbol,  an irrevocable NFA of $n$ states or nondeterministic monitor of size $n$ can be converted into an equivalent deterministic monitor of size at most $n$.
	\item
	In Section \ref{sec:quick_fix}, we have argued that detecting whether a monitor is conflicting can be done in time $O(n^4)$ or in nondeterministic space $O(\log n)$ (and thus, by Savitch's Theorem \cite{SAVITCH1970177} in deterministic space $O(\log^2 n)$).
\end{itemize}
The bounds we were able to prove can be found in Table \ref{tab:bounds}.

\begin{table}[h] \centering
	\begin{tabular}{|r|c|c|c|}
		\hline
		from/to		&	DFA	&	monitor	&	det. monitor \\ \hline
		NFA			&	tight: $O(2^n)$	&	\begin{minipage}{.25\textwidth} \vspace{1ex}
											\begin{tabular}{c} 
											upper: $2^{O(n \log n)}$	\\	
											lower: $2^{\Omega(n)}$
											\end{tabular}
											\end{minipage}&	
														tight:	$2^{O(2^n)}$	\\ \hline
		DFA			&	X	&		upper: $2^{O(n)}$	& tight:	$2^{O(n)}$		\\ \hline
		\begin{tabular}{r}
		nondet. \\ monitor
		\end{tabular}			
					&	
					\begin{minipage}{.25\textwidth} \vspace{1ex}
					\begin{tabular}{c} 
						upper: $O(2^{n})$	\\	
						lower: $2^{\Omega(\sqrt{n \log n})}$
					\end{tabular}	
					\end{minipage}
							&	X		
										&	
											\begin{minipage}{.25\textwidth} \vspace{1ex}
											\begin{tabular}{c}
											upper: $2^{O(2^n)}$ \\
											lower: $2^{2^{\Omega(\sqrt{n \log n})}}$	\\
											\end{tabular}
											\end{minipage}\\
		\hline
	\end{tabular}
	\caption{Bounds on the cost of construction (X signifies that the conversion is trivial)}
	\label{tab:bounds}
\end{table}

\paragraph{Optimizations:}

Monitors to be used for the runtime verification of processes are expected to not affect the systems they monitor as much as possible. Therefore, the efficiency of monitoring must be taken into account to restrict overhead.
To use a deterministic monitor for a $\muhml$ property, we would naturally want to keep its size as small as possible. It would help to preserve space (and time for each transition) to store the monitor in its LTS form --- as a DFA. 
We should also aim to use the smallest possible monitor we can. There are efficient methods for minimizing a DFA, so one can use these to find a minimal DFA and then turn it into monitor form using the construction from Theorem \ref{thm:nfa_to_monitor}, if such a form is required. The resulting monitor will be (asymptotically) minimal:

\begin{proposition}
	Let $A$ be a minimal DFA for an irrevocable language $L$, such that $A$ has $n$ states and there are at least $\Pi$ paths in $A$ originating at its initial state. Then, there is no deterministic monitor of size less than $\Pi$, which recognizes $L$.
\end{proposition}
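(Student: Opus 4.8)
The plan is to reduce the claim to the size bound furnished by Lemma~\ref{lem:simple_traces}: it suffices to exhibit, for an arbitrary deterministic monitor $p$ recognizing $L$, a finite simple set of traces $G$ for $p$ with $|G| \geq \Pi$, since then $|p| \geq |G| \geq \Pi$. The natural source of $G$ is the set of paths of $A$. For each path $P = q_0\alpha_1 q_1 \cdots \alpha_{k-1} q_{k-1}$ originating at the initial state $q_0$, let $t_P = \alpha_1\cdots \alpha_{k-1}$ be the sequence of actions read along $P$. Because $A$ is deterministic, $P$ is completely determined by $t_P$ (as already observed in the proof of Corollary~\ref{cor:dfa_to_monitor}), so the assignment $P \mapsto t_P$ is injective; setting $G = \{t_P \mid P \text{ a path from } q_0\}$ gives a finite set with $|G| \geq \Pi$. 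First I would establish this injectivity and finiteness, so that the whole task collapses to showing that every $t_P$ is simple for $p$.

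The key tool linking $p$ to $A$ is the residual-language reading of submonitors. Since $p$ is deterministic and recognizes $L$, whenever $p \xRightarrow{w} r$ one has, for every $z$, that $wz \in L$ iff $r \xRightarrow{z} \yes$: indeed $p \xRightarrow{wz} \yes$ factors as $p \xRightarrow{w} r' \xRightarrow{z} \yes$, and $r' \sim r$ by Corollary~\ref{cor:det_mon_is_det}, so $r' \xRightarrow{z}\yes$ iff $r \xRightarrow{z}\yes$. Thus the submonitor reached after reading $w$ recognizes exactly the residual $w^{-1}L = \{z \mid wz \in L\}$. On the automaton side, the state of $A$ reached from $q_0$ on input $w$ recognizes the same residual $w^{-1}L$, and minimality of $A$ guarantees that distinct states recognize distinct residuals (Myhill--Nerode). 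I would set up both correspondences explicitly before combining them.

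It then remains to show each $t_P$ is simple for $p$. Suppose some $t_P$ is not simple. By Lemma~\ref{lem:find_px_det}, $t_P = xuz$ with $|u| > 0$ and $p \xRightarrow{x} r \xRightarrow{u} r \xRightarrow{z} q$ for a single submonitor $r$; in particular $p \xRightarrow{x} r$ and $p \xRightarrow{xu} r$. By the residual reading, $r$ recognizes both $x^{-1}L$ and $(xu)^{-1}L$, forcing $x^{-1}L = (xu)^{-1}L$. But $x$ and $xu$ are prefixes of $t_P$ of lengths $i < j \leq k-1$, and since $A$ is deterministic they drive $A$ from $q_0$ to the path vertices $q_i$ and $q_j$; as $P$ is a path its vertices are distinct, so $q_i \neq q_j$, whence minimality gives $x^{-1}L \neq (xu)^{-1}L$, a contradiction. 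Hence every $t_P$ is simple, $G$ is a simple set of size at least $\Pi$, and the bound follows. The crux of the argument is exactly this last step: the ``loop'' that Lemma~\ref{lem:find_px_det} extracts from a non-simple trace must collapse two provably distinct states of the minimal DFA. The care needed is in matching the monitor-level repetition of the submonitor $r$ with the vertex-distinctness built into the definition of a path and with the Myhill--Nerode distinctness guaranteed by minimality of $A$.
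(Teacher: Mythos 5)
Your proposal is correct and follows essentially the same route as the paper's proof: identify the paths of $A$ with their (injectively determined) action sequences, show each such trace is simple for $p$ by combining Lemma~\ref{lem:find_px_det} with the Myhill--Nerode distinguishability of the distinct path vertices in the minimal DFA, and conclude via Lemma~\ref{lem:simple_traces}. The only cosmetic difference is that the paper picks a shortest non-simple path-trace so the loop sits at the end, whereas you work directly with the two prefixes $x$ and $xu$ and phrase distinguishability in terms of residual languages rather than an explicit separating suffix $s$; these are equivalent.
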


\begin{proof}
	Since $A$ is deterministic, all paths in $A$ are completely described by a trace $t \in \mycap{Act}^*$. We show that for every deterministic monitor $p$ which recognizes $L$, such a  $t$ is simple. Let $t$ be the shortest trace which gives a path in $A$, but is not simple for $p$. By Lemma \ref{lem:find_px_det}, there are $t' u = t$, such that $|u|>0$ and $p \xRightarrow{t'} r \xRightarrow{u} r$. Since $t$ represents a path in $A$ and $A$ is deterministic, $A$ can reach state $q$ with trace $t$ and $q' \neq q$ with trace $t'$. Since $A$ is a minimal DFA, there must be a trace $s$, such that (without loss of generality) $A$  reaches an accepting state from $q$ through $s$ and a non-accepting state from $q'$ through $s$. Therefore $ts \in L$ and $t' s \notin L$, which is a contradiction, because by Corollary \ref{cor:determinism_gives_always_same_v} and the observation above, $p \xRightarrow{ts} \yes$ iff $p \xRightarrow{t's} \yes$.
\qed\end{proof}

As we see, DFA minimization also solves the problem of deterministic monitor minimization. On the other hand, it would be good to keep things small from an earlier point of the construction, before the exponential explosion of states of the subset construction takes place. In other words, it would be good to minimize the NFA we construct from the monitor, which can already be smaller than the original monitor. Unfortunately, NFA minimization is a hard problem --- specifically \PSPACE-complete \cite{zbMATH00495784} --- and it remains \NP-hard even for classes of NFAs which are very close to DFAs \cite{bjorklund2012tractability}. NFA minimization is even hard to approximate or parameterize \cite{GRAMLICH2007908,GruberH07}. Still, it would be better to use an efficient approximation algorithm from \cite{GruberH07} to process the NFA and save on the number of states before we determinize.
This raises the question of whether (nondeterministic) monitors are easier to minimize than NFAs, although a positive answer seems unlikely in light of the hardness results for NFA minimization.

\bibliographystyle{plain}   
\bibliography{references}

\end{document}